\documentclass[runningheads]{article}
\usepackage{arxiv}
\usepackage[T1]{fontenc}
\usepackage{graphicx}
\usepackage{booktabs}
\usepackage{enumitem}
\usepackage{amsmath}
\usepackage{amsfonts}
\usepackage{amssymb}
\usepackage{mathtools}
\usepackage{color}
\usepackage{bbm}
\usepackage{fullpage}
\usepackage[caption=false]{subfig}

\renewcommand{\l}{\ell}

\newcommand{\calA}{{\cal A}}
\newcommand{\calB}{{\cal B}}

\newcommand{\calG}{{\cal G}}
\newcommand{\calH}{{\cal H}}
\newcommand{\calI}{{\cal I}}

\newcommand{\calK}{{\cal K}}

\newcommand{\calN}{{\cal N}}

\newcommand{\calR}{{\cal R}}
\newcommand{\calS}{{\cal S}}

\newcommand{\calX}{{\cal X}}
\newcommand{\calY}{{\cal Y}}
\newcommand{\calZ}{{\cal Z}}

\newcommand{\eqdef}{\stackrel{\text{\tiny def}}{=}} 
\newcommand{%
  \sjacobian}{\mathbf{SJ}}
\newcommand{%
  \jacobian}{\mathbf{J}}
\newcommand{%
  \hessian}{\mathbf{H}}

\usepackage[linktocpage=true]{hyperref}
\hypersetup{
  colorlinks=true,
  citecolor=blue,
  linkcolor=magenta,
}
\usepackage{comment}
\usepackage{stmaryrd}

\usepackage{tikz}
\DeclareRobustCommand{\mycircle}{%
  \tikz[baseline=-0.5ex]\draw[blue, line width=0.8pt] (0,0) circle (0.8ex);%
}
\DeclareRobustCommand{\fmycircle}{%
  \tikz[baseline=-0.5ex]\filldraw[fill=red, draw=black] (0,0) circle (0.8ex);%
}

\DeclareRobustCommand{\mydiamond}{%
  \tikz[baseline=-0.5ex]\draw[green!55!black, line width =0.8pt] 
    (0,0.8ex) -- (0.8ex,0) -- (0,-0.8ex) -- (-0.8ex,0) -- cycle;%
}
\usetikzlibrary{calc, positioning, fit, shapes.geometric,trees}
\usepackage{sectsty}
\sectionfont{\large}
\subsectionfont{\normalsize}
\subsubsectionfont{\normalsize}
\paragraphfont{\normalsize}

\RequirePackage{algorithm}
\RequirePackage{algpseudocode}
\usepackage[linktocpage=true]{hyperref}
\usepackage[capitalize,noabbrev]{cleveref}
\crefname{assumption}{Assumption}{Assumptions}
\usepackage{amsthm}
\usepackage{thmtools}
\usepackage{thm-restate}

\declaretheorem[name=Theorem, numberwithin=section]{theorem}
\declaretheorem[name=Proposition, numberlike=theorem]{proposition}

\declaretheorem[name=Lemma, numberlike=theorem]{lemma}

\declaretheorem[name=Definition, numberlike=theorem]{definition}

\declaretheorem[name=Remark, numberlike=theorem]{remark}

\declaretheorem[name=Example, numberlike=theorem]{example}
\declaretheorem[name=Assumption, numberlike=theorem]{assumption}

\usepackage{multicol}

\newcommand{\squeezepara}[1]{\par\smallskip\noindent\textbf{#1}}

\newcommand\todo[1]{\noindent\textcolor{red}{#1}}
\newcommand{\argmin}{\operatornamewithlimits{argmin}}
\newcommand{\argmax}{\operatornamewithlimits{argmax}}

\begin{document}
\renewcommand{\headeright}{}
\renewcommand{\undertitle}{}
\title{Equilibrium Computation in Extensive-Form Games with Stochastic Action Sets}
%
%
\author{%
\begin{tabular}{ccc}
{\large Thomas Schwarz\thanks{Joint first authors}} & {\large Ryann Sim$^*$} & {\large Chun Kai Ling}\\  
NUS & NUS & NUS \\  
\texttt{tschwarz@comp.nus.edu.sg} & \texttt{ryann.sim@nus.edu.sg} & \texttt{chunkail@nus.edu.sg}   
\end{tabular}
}
\date{}
%
%
%
\maketitle              
\begin{abstract}
Extensive-form games (EFGs) are a standard model for sequential decision-making in games. A fundamental and typically implicit assumption in EFGs is that players always have access to all of their actions at every decision point. However, in many realistic settings, certain actions might be unavailable during game-play due to exogenous stochasticity, hindering the expressivity of the standard EFG model. Given a `base' EFG, we formalize a model that allows for actions to be stochastically restricted, leading to a corresponding Extensive-Form Games with Stochastic Action Sets (EFGSAS). In EFGSAS, we derive an expansion procedure that results in an equivalent EFG, thus showing that standard strategy formalisms could require exponentially-large representations. However, under an appropriate independence assumption, we show that compact strategy representations polynomial in the size of the base EFG exist. Computationally, we introduce an algorithm called SI-CFR that minimizes \emph{sleeping internal} regret, converging to Nash equilibria with high probability in two-player zero-sum EFGSAS. Finally, we utilize a stochastic approximation procedure to recover compact representations of Nash equilibria, utilizing only the iterates of SI-CFR. 

\keywords{Extensive-form games \and Stochastic action sets \and Nash equilibrium \and Sleeping internal regret \and Counterfactual regret minimization.}
\end{abstract}

\section{Introduction}\label{sec:intro}
Extensive-form games (EFGs) model sequential decision-making under imperfect information, with many applications spanning economics and artificial intelligence~\cite{kuhn1953extensive,brown2018superhuman,meta2022human}. A fundamental assumption enabling efficient equilibrium analysis in EFGs is that at each decision point, players always have access to all of their actions. However, in many realistic settings, this assumption hinders the expressivity of the standard EFG model. For instance, consider a game where players are drivers who face multiple intersections sequentially while driving.
Due to possibly inclement weather, some exits might be closed, forcing the driver to select possibly suboptimal routes. This simple modification greatly complicates the driver's decision-making process, since players' action distribution at each information set should now be conditioned on available actions, not to mention additional considerations about distributions over opponent actions.

We introduce a formal model of this setting, which we call Extensive-Form Games with Stochastic Actions Sets (EFGSAS). An EFGSAS $\calG$ requires as part of its definition a base EFG $\calG^{\mathrm{orig}}$. Then, the sets $\mathcal{S}_{I,i} \subseteq 2^{\vert A_{I,i}\vert} \backslash \{ \varnothing \}$ encodes the possible action availability subsets in information sets (hereafter `infosets') $I\in\mathcal{I}_i$ belonging to player $i$. Finally, $\rho_{I,i} \in \Delta(\mathcal{S}_{I,i})$ is a probability distribution over elements $\mathcal{S}_{I,i}$ known to all players. Each time the game is played, Nature draws from $\rho_{I,i}$, which determines the available actions at each infoset in the game, and players are restricted to only playing available actions. 

In this work, we focus on understanding the computational properties of EFGSAS, focusing on efficient ways to represent and compute Nash equilibria. 
As it turns out, even defining a strategy in an EFGSAS is non-trivial. In a normal-form GSAS~\cite{schwarz2026computing}, the natural approach is to expand the game into a larger `induced normal-form' representation similar to Bayesian games, which then allows for strategies to naively be represented as mappings from all possible action subsets to distributions over actions. However, the sequential nature of EFGSAS leads to a fresh but important modeling decision: though action availabilities might be sampled before game-play occurs, the time at which the information is \textit{disclosed} can significantly affect players' strategy sets. Here, there are two extremes, ex-ante disclosures, where action availabilities of all infosets are given before the game begins, and ex-interim disclosures, where available actions are known to the player upon reaching an infoset.

Thus, formalizing the representation of a strategy in an EFGSAS requires care. In order for the standard definitions of pure, mixed, and behavioral strategies from EFGs to carry over to EFGSAS, the game needs to be converted into an exponentially larger `expanded form' that introduces chance nodes to model Nature sampling action availabilities. The conversion needs to be consistent with both extremes of action disclosure, and implies that even specifying a NE in EFGSAS can be prohibitively large. We seek conditions under which compact (i.e. polynomial in the size of the base EFG) representations of NE can be obtained. This type of result is in principle analogous to Kuhn's theorem~\cite{kuhn1953extensive}, which establishes that in EFGs with perfect recall, NE can be captured by behavioral, as
opposed to exponentially-sized normal-form mixed strategies.

Beyond equilibrium representation, the canonical framework of counterfactual regret minimization (CFR) has seen significant success in solving large-scale imperfect information EFGs~\cite{zinkevich2007regret,lanctot2009monte,brown2018superhuman}, particularly in the two-player zero-sum (2p0s) setting. However, in EFGSAS, the connection between (external)-regret minimization and NE fails, since the action sets for each player can vary each time the game is played. In the case that the solver is fully aware of the action availability distribution $\rho$, it is possible to perform the naive expansion, which would allow for the application of CFR-type algorithms. However, these algorithms scale poorly, since the regret accrued depends on the (possibly exponential) size of the expanded game tree. Overall, this motivates the design of computational techniques that can efficiently compute compact representations of sequence-form NE in 2p0s-EFGSAS.


\squeezepara{Our contributions.}
We seek to formalize \emph{strategy representation} and \emph{NE computation} in EFGSAS. Our primary computational contribution is a novel framework that can efficiently compute a {compact} representation of NE in 2p0s-EFGSAS. We first formalize EFGSAS and analyze properties of strategic representation in the ex-interim action disclosure setting, showing that it suffices to consider a compressed `DAG-plex' rather than larger treeplexes (typical of CFR) for the purposes of sequence-form strategy representation.
Then, we establish that under an appropriate independence assumption, sequence-form strategies of an EFGSAS can further be represented \emph{compactly} by a vector which is of size polynomial in the base game. Subsequently, we introduce a procedure called SI-CFR which is based on the CFR framework. We show that SI-CFR minimizes a suitable notion of \emph{sleeping internal regret} 
in EFGSAS, converging in average iterates to marginal NE in 2p0s-EFGSAS. Finally, requiring only the iterates of SI-CFR, we utilize stochastic approximation techniques to recover the compact representation of a NE.

\section{Related Work}\label{sec:related}
\paragraph{Equilibrium computation in EFGs.} 
CFR has been one of the most widely successful frameworks for learning equilibria in large-scale imperfect information EFGs via self-play~\cite{zinkevich2007regret,lanctot2009monte,farina2020stochastic}, leading to development of state-of-the-art Poker algorithms~\cite{bowling2015heads,brown2018superhuman}. More broadly, regret minimization techniques have been used to obtain Nash equilibria in zero-sum EFGs~\cite{hoda2010smoothing,farina2019optimistic,lee2021last}, as well as extensive-form correlated equilibria (EFCE) in general-sum EFGs~\cite{farina2019efficient,celli2020no,anagnostides2022faster}. 
\squeezepara{Sleeping regret.} In the multi-armed bandit literature, the \emph{sleeping bandit} setting studies regret minimization when arms are available stochastically or adversarially~\cite{auer2002finite,blum2007external,kanade2009sleeping,kleinberg2010regret,kanade2014learning,saha2020improved,nguyen2024near}. These techniques have been applied to online combinatorial optimization~\cite{neu2014online,kale2016hardness} and reinforcement learning~\cite{drago2025sleeping}. \cite{gaillard2023one} introduced the notion of (expected) sleeping internal regret which we use, though we additionally derive high probability bounds beyond expected regret.
\squeezepara{Normal-Form GSAS.} The notion of games with stochastic action sets (GSAS) was introduced by~\cite{schwarz2026computing}, who focused on normal-form games. Given a normal-form base game, the corresponding GSAS includes in its specification the possible action availability subsets and a distribution over the action availabilities. While some of our definitions and results are analogous to the normal-form case, the sequential nature of EFGSAS requires additional technical insights and a novel algorithm to guarantee convergence to NE. For completeness, we provide the formal definitions and key properties of GSAS in~\cref{appsec:gsasprelims}.

\section{Preliminaries}
\label{sec:prelim}
\paragraph{Notation.} Denote the $n$-dimensional nonnegative quadrant by $\mathbb{R}^n_{\ge0}$. For a finite set $S$, $\Delta(S)$ is the associated probability simplex 
$\{ x \in \mathbb{R}^{|S|}_{\ge0} | \sum_{i}^{|S|} x_i = 1 \}$, such that if $y \in \Delta(S), s \in S, y(s)$ is the probability 
that item $s$ is selected. 

\squeezepara{Extensive-Form Games.}  
A standard $n$-player extensive form game $\calG$ (without stochastic action sets) is a tuple $\calG \coloneqq \langle \mathcal{H},A,\mathcal{Z}, u, \mathcal{I} \rangle$ where: \vspace{-1mm}
    \begin{itemize}
        \item The nonempty, finite set $\mathcal{H}$ denotes the states of the game which form a tree rooted at an initial state $r \in \mathcal{H}$. We denote terminal nodes in $\mathcal{H}$ by $\mathcal{Z}$. Each nonterminal state $h \in \mathcal{H}\setminus\mathcal{Z}$ is associated with a set of \textit{possible actions} $A_{h}$.
        
        \item Given $\mathcal{N} = \{1,\dots,n\}$, the set $\mathcal{N}\cup\{c\}$ denotes  the $n+1$ players of the game. Each state $h \in \mathcal{H}\setminus\mathcal{Z}$ admits a label $\mathrm{player}(h) \in \mathcal{N}\cup\{c\}$ which denotes the \textit{acting player} at state {$h$}. The letter $c$ denotes a \textit{chance player}, 
        representing exogenous stochasticity. $\mathcal{H}_i \subseteq \mathcal{H}\setminus\mathcal{Z}$ denotes the states $h \in \mathcal{H}\setminus\mathcal{Z}$ with $\mathrm{player}(h) = i$.
        Each chance node $h\in\mathcal{H}_c$ is associated with a fixed distribution $\mathbb{P}_c(\cdot\vert h)$ over $A_h$, denoting the distribution over actions chosen by the chance player at each node. 
        
        
        
        \item For each $i\in\mathcal{N}$, payoff function $u_i : \mathcal{Z}\to[-1,1]$ specifies the payoff that player $i$ receives if the game ends at terminal state $z\in\mathcal{Z}$.
        
        
        \item The players' decision points $\mathcal{H}\setminus\mathcal{Z}$ 
        are partitioned into \textit{information sets} (infosets) ascribed to each player, namely $\mathcal{I}_i \in (\mathcal{I}_1,\ldots,\mathcal{I}_n)$. Each infoset $I \in \mathcal{I}_i$ contains nodes that the acting player $i$ cannot distinguish between, i.e.,
		$h_1, h_2 \in I$ implies $A_{h_1} = A_{h_2}$. We let $A_{I,i}$ denote the shared action set of infoset $I$ belonging to player $i$ and use $A_I$ when there is no player ambiguity.
		
		\item For notational convenience, we ascribe a singleton information set to each chance node and define $\mathcal{I}_c$ as the collection of these chance node infosets. For each non-terminal node {$h \in\mathcal{H} \setminus \mathcal{Z}$}, we thus define $I_h \in (\mathcal{I}_1,\ldots,\mathcal{I}_n) \cup \mathcal{I}_c$ to be the infoset it belongs to.
    \end{itemize}
In this paper, we make the standard assumption that EFGs exhibit {\em perfect recall}. That is, no player ever forgets their past history, namely, the sequence of information sets visited, actions taken within those information sets, and any information acquired along the way.  
Formally, for any player $i$, any information set $I \in \mathcal{I}_i$ and any two nodes $h_1, h_2 \in I$, the sequence of Player~$i$'s actions from the root $r$ to $h_1$ must coincide with the action sequence from $r$ to $h_2$.


 \squeezepara{Strategy Formalisms.} In EFGs, a \emph{pure strategy} specifies a deterministic action at every information set of a player. 
A (normal-form) \emph{mixed strategy} 
is a probability distribution over pure strategies.  
A \emph{behavioral strategy} specifies independent distributions at each infoset.
Formally, for any infoset $I \in \mathcal{I}_i$, $\Delta(A_I)$ denotes the probability simplex over the available actions $A_I$.  
A behavioral strategy for player $i$ is a mapping $
\beta_i : \mathcal{I}_i \;\to\; \bigcup_{I \in \mathcal{I}_i} \Delta(A_I),
$
assigning to each infoset $I$ a distribution {$\beta_i(\cdot \mid I) \in \Delta(A_I)$}.  We denote the set of all behavioral strategies for each player by $\calB_i$.
The joint behavioral strategy profile for all players is denoted by $\beta := (\beta_i)_{i \in \mathcal{N}} \in \calB$.  Kuhn's theorem~\cite{kuhn1953extensive} establishes the outcome equivalence of behavioral and normal form mixed strategies in EFGs with perfect recall.

The expected utility of player $i\in\mathcal{N}$ following (joint) behavioral strategy $\beta\in\calB$ is denoted $U_i(\beta)\coloneqq \sum_{z\in\mathcal{Z}} \mathbb{P}(z\vert \beta, r)\cdot u_i(z)$, where $ \mathbb{P}(z\vert \beta, r)$ is the probability that leaf $z\in\mathcal{Z}$ is reached from root $r$ following  $\beta$. We call $\beta^*$  an $\epsilon$-\emph{(behavioral) Nash equilibrium} if for every player $i \in \mathcal{N}$, it holds that
$U_i(\beta^*) \;\geq\; \max_{\beta_i\in\calB_i} U_i(\beta_i, \beta^*_{-i}) - \epsilon$,
i.e., no player can profitably deviate from $\beta^*$ to any other behavioral strategy.  


\squeezepara{Sequence-Form and Treeplexes.}
The set of sequences of Player $i$ is defined as $\Sigma_i :=  \{(I, a) : I \in \calI_i, a \in A_I \} \cup \{\varnothing\}$, where the special element $\varnothing$ is the empty sequence. Given an infoset $I \in \calI_i$,  $\sigma(I)$ denotes the parent sequence of $I$, defined as the last pair $(I', a') \in \Sigma_i$ encountered on the path from the root to any node $h \in I$. If no such pair  exists (i.e., player $i$ never acts before any node $h \in I$), let $\sigma(I) = \varnothing$.  We (recursively) define a sequence $\tau \in \Sigma_i$ to be a descendant of $\tau' \in \Sigma_i$, denoted by $\tau \succeq\tau'$, if $\tau = \tau'$ or if $\tau = (I, a)$ and $\sigma(I) \succeq \tau'$. 

In the sequence-form representation~\cite{romanovskii1962reduction,von1996efficient}, players select strategies which are represented by a vector $x$ indexed by sequences $\sigma\in\Sigma_i$. For $\sigma=(I,a)$, the entry $x[\sigma]\ge 0$ captures the product of probabilities of Player $i$'s actions from the root $r$ to $I$, including playing action $a$. By convention $x[\varnothing] = 1$. This gives a set of linear constraints encoding the probability mass conservation for valid sequence-form strategies: for all $I\in\calI_i$, $\sum_{a\in A_I} x[(I,a)] = x[\sigma(I)]$. 
The set of sequence-form strategies forms a convex polytope commonly called a \emph{treeplex}~\cite{hoda2010smoothing}. 
Treeplexes are generalizations of the simplex tailored towards sequential decision making. 

Under this formulation, \cite{von1996efficient} showed that 
in 2p0s-EFGs (where $n=2$ and $u_1(z)=-u_2(z), \forall z\in\calZ$), the expected utilities of the players can be written in a bilinear form. In particular, consider the
$|\calH_1|\times |\calH_2|$-dimensional matrix $A$ (called the sequence-form payoff matrix), with $[A]_{zz} \coloneqq u_1(z)$ for all $z\in \calZ$, and $0$ otherwise. A similar construction for player 2's utilities gives $-A$, since the game is zero-sum. Let $(x,y)$ be sequence-form strategies on the players' treeplexes. Then, the expected utility of player 1 can be written concisely as $ U_1(x,y) = x^\top A y$, and Nash equilibria are saddle-points of the function $U_1=-U_2$, i.e., solutions to $\min_x\max_y(x^\top A y)=\max_y\min_x(x^\top A y)$. Crucially, the equivalence between behavioral and sequence-form strategies in 2p0s-EFGs implies that NE computation can be performed over treeplexes, allowing for efficient algorithms such as CFR to be applied based on suitable regret minimizers over the treeplex.









We give additional preliminaries on normal-form games with stochastic action sets (GSAS) and counterfactual regret minimization (CFR) in~\cref{appsec:gsasprelims} and~\cref{appsec:cfrprelims} respectively. To improve clarity, we also provide a notation table for important symbols used throughout the paper in Table~\ref{tab:symbols}. Finally, throughout the paper we work in the unit-cost real-arithmetic model, where exact arithmetic operations take constant time.


\section{EFGs with Stochastic Action Sets}\label{sec:efg_repr}
In this section, we define and analyze properties of extensive-form games with stochastic action sets (EFGSAS).  


\begin{definition}[EFGSAS]\label{def:efgsas}
	Given an EFG $\mathcal{G}^\mathrm{orig} = \langle\calH,A,\calZ,u,\calI\rangle$, let $\mathcal{S}_{I,i} \subseteq 2^{A_{I,i}} \backslash \{ \varnothing \}$ for all infosets $I\in\mathcal{I}_i$ belonging to player $i$. 
	Let $\rho_{I,i} \in \Delta(\mathcal{S}_{I,i})$ be a distribution over elements $\mathcal{S}_{I,i}$, such that $\rho_{I,i}(S_{I,i})$ gives the probability that action subset $S_{I,i}$ is observed in each $I\in\mathcal{I}_i$, for each $S_{I,i} \in \mathcal{S}_{I,i}$. Let $\calS$ and $\rho$ denote the ensemble of action subsets and distributions over all infosets and players.
	An \textbf{EFG} with \textit{\textbf{S}tochastic \textbf{A}ction \textbf{S}ets} is given by the tuple $\mathcal{G} = (\mathcal{G}^\mathrm{orig}, \mathcal{S}, \rho)$.
\end{definition}
We additionally consider two-player zero-sum (2p0s) EFGSAS, where \(\calG^\mathrm{orig}\) is two-player zero-sum, i.e., \(n=2\), and \(\forall z \in \calZ , u_1(z) = -u_2(z)\).
Going forward, we make the technical assumption that in each infoset ascribed to a player, the action availabilities are independent.
\begin{assumption}
	There exists $\rho(S) = \prod_{i=1}^{n} \prod_{I\in\mathcal{I}_i} \rho_{I,i}(S_{I,i})$ for probability distributions $\rho_{I,i}: \mathcal{S}_{I,i} \rightarrow [0, 1]$, i.e., the availability of actions are independent across all infosets and players. Moreover, the utility $u_i$ for reaching $z\in\calZ$ does not depend on the realization of $\rho$.
	\label{assumption:productefg}
\end{assumption}
A similar assumption is often made to facilitate analysis in Bayesian games~\cite{fujii2025bayes,roughgarden2015price,syrgkanis2013composable,syrgkanis2012bayesian}. Moreover, while~\cref{assumption:productefg} may seem restrictive at first glance, prior results from~\cite{schwarz2026computing} establish the nonexistence of compact equilibrium representations in the absence of this \textcolor{red}{(or similar)} assumptions (details in~\cref{rem:bit_complexity}).

\subsection{Information Disclosure and Strategy Representation in EFGSAS}\label{sec:infodisclosure}
In standard EFG literature, equilibrium computation typically relies on \emph{ex-ante} strategy formulations, meaning that players fix a strategy before the game begins. This requires all information about the game (e.g. infosets, game structure and so on) to be revealed upfront.
However, EFGSAS belong to a class of 
games where information disclosure (and in particular, action availability disclosure) can be done \emph{sequentially}.
Two natural regimes of action disclosures arise: (i) the \emph{ex-interim} regime, where players only observe their available actions upon reaching an infoset, and (ii) the \emph{ex-ante} regime, where players can observe all available actions at each infoset $I\in\calI$ before game-play. This introduces new challenges in terms of strategy formalisms in EFGSAS, 
since players need to take their action availabilities into account when selecting strategies. 
\begin{definition}[Ex-interim EFGSAS]
	In an ex-interim EFGSAS, the action availability set \(S_{I,i} \in \calS_{I,i}\) for an infoset \(I \in \calI_i\) is sampled from \(\rho(S_{I,i})\) and observed by player \(i\) \textbf{only when} infoset \(I\) is reached by that player.
\end{definition}

Ex-interim EFGSAS capture some interesting settings: (i) board games such as Backgammon, Dice Chess, and the ancient Mesopotamian game `The Royal Game of Ur', in which at the start of each turn, players roll dice to determine how pieces can be moved, and
(ii) pursuit-evasion games where Nature restricts actions randomly over time due to e.g., inclement weather, resulting in players only observing their available actions at every timestep.


For an ex-interim EFGSAS \(\calG\) meeting~\cref{assumption:productefg}, a naive expansion procedure can be applied to obtain a strategically equivalent expanded EFG \(\calG^\dagger\) with infosets \(\calI^\dagger\). Informally, for each infoset $I\in\calI_i$, the procedure adds a chance node encoding the distribution over all action subsets $\calS_{I,i}$, with corresponding probabilities following $\rho_{I,i}\in\Delta(\calS_{I,i})$. The formal definition of this procedure is given in~\cref{appsubsec:exinterimexpansion}. 
We illustrate the expansion procedure using an example. 

\begin{example}[Running Example]
		\label{ex:updown}
		Consider an EFGSAS with base game $\calG^{\mathrm{orig}}$ based on an example 2p0s game of~\cite{cerny2024layered}. $\calG^{\mathrm{orig}}$ comprises  a \emph{defender} (player 1) and an \emph{attacker} (player 2), shown in~\cref{fig:updown} (Left). The defender simply chooses between actions $H$ or $T$ at the start. The attacker has two decision points, selecting between $\{H_1, T_1\}$ at infoset $A$ and  $\{H_2, T_2\}$ at infosets $B$ and $C$. The attacker receives a payoff of -1 if they ever choose the same action as the defender ($H$ or $T$), and otherwise receives 1. It is easy to see that the unique NE in \(\calG^{\mathrm{orig}}\) for the defender is $x^*_1(H) = x^*_1(T) = 0.5$, while the attacker plays $x^*_2(H_1H_2) = x^*_2(T_1T_2) = 0.5$ and $x^*_2(H_1T_2) = x^*_2(T_1H_2) = 0$; there is clearly no reason why the attacker will play the dominated strategy of $H_1T_2$ or $T_1H_2$.

        In the EFGSAS $\calG$, at infoset $A$ the attacker observes an action availability set of \(\{H_1\}\) with probability \(\alpha\), and \(\{H_1,T_1\}\) otherwise. At infoset $B$ and $C$, the attacker observes \(\{H_2\}\) w.p. \(\lambda\) and \(\{H_2,T_2\}\) otherwise. In Figure~\ref{fig:updown} (Right), we show the naive expansion when $\alpha=0$, $\lambda > 0$.
        Notice that defining a strategy in this expanded form requires a larger representation. From this example, we can intuitively see that EFGSAS may have significantly different solutions. For example, consider the degenerate case where $\alpha=\lambda=1$. Then, since the attacker is forced to always play $H_1$ and $H_2$, the defender's optimal strategy $H$.
\end{example}

	\begin{figure}[t]
	\centering
		\begin{minipage}{0.4\textwidth}
			\centering
			\begin{tikzpicture}[x=0.3cm, y=1cm, font=\scriptsize]
            \tikzstyle{efg root}   =[circle, fill=red!85!black, draw=red!55!black, minimum size=3.4mm, inner sep=0pt]
\tikzstyle{efg node}   =[circle, draw=blue, line width=.8pt, fill=white, minimum size=3.6mm, inner sep=0pt]
\tikzstyle{efg chance} =[diamond, draw=green!55!black, line width=.8pt, fill=white, minimum size=3.8mm, inner sep=0pt]
\tikzstyle{efg edge}   =[line width=.7pt]
\tikzstyle{efg lab}    =[midway, fill=white, inner sep=1pt]
\tikzstyle{efg alpha}  =[pos=0.26, fill=white, inner sep=1pt, text=green!45!black]
\tikzstyle{efg iset}   =[densely dotted, line width=.9pt]
				\def\hw{1.2}
				\node[efg root] (R) at (6,0) {};
				\node[efg node] (BH) at (2,-1.0) {$A$};
				\node[efg node] (BT) at (10,-1.0) {$A$};
				\node[efg node] (D1) at (0,-2.2) {$B$};   
				\node[efg node] (D2) at (4,-2.2) {$C$};   
				\node[efg node] (D3) at (8,-2.2) {$B$};   
				\node[efg node] (D4) at (12,-2.2) {$C$};   
                \foreach \d/\i/\ui/\j/\uj in {1/1/-1/2/-1, 2/3/-1/4/+1, 3/5/+1/6/-1, 4/7/-1/8/-1}{
					\node (t\i) at ($(D\d)+(-\hw,-1.2)$) {$\ui$};
					\node (t\j) at ($(D\d)+( \hw,-1.2)$) {$\uj$};}
				\draw[efg edge] (R)--(BH) node[efg lab]{$H$};   \draw[efg edge] (R)--(BT) node[efg lab]{$T$};
				\draw[efg edge] (BH)--(D1) node[efg lab]{$H_1$}; \draw[efg edge] (BH)--(D2) node[efg lab]{$T_1$};
				\draw[efg edge] (BT)--(D3) node[efg lab]{$H_1$}; \draw[efg edge] (BT)--(D4) node[efg lab]{$T_1$};
				\foreach \d/\h/\k in {1/1/2, 2/3/4, 3/5/6, 4/7/8}{
					\draw[efg edge] (D\d)--(t\h) node[efg lab]{$H_2$};
					\draw[efg edge] (D\d)--(t\k) node[efg lab]{$T_2$};}
				\draw[efg iset] (BH)--(BT);
				\draw[efg iset] (D1) to[bend left=26] (D3);   
				\draw[efg iset] (D2) to[bend left=26] (D4);   
			\end{tikzpicture}
			
	\end{minipage}
	\begin{minipage}{.6\textwidth}
		\centering
			\begin{tikzpicture}[x=0.55cm, y=1cm, font=\scriptsize]
				\def\hw{0.7}   
                \tikzstyle{efg root}   =[circle, fill=red!85!black, draw=red!55!black, minimum size=3.4mm, inner sep=0pt]
\tikzstyle{efg node}   =[circle, draw=blue, line width=.8pt, fill=white, minimum size=3.6mm, inner sep=0pt]
\tikzstyle{efg chance} =[diamond, draw=green!55!black, line width=.8pt, fill=white, minimum size=3.8mm, inner sep=0pt]
\tikzstyle{efg edge}   =[line width=.7pt]
\tikzstyle{efg lab}    =[midway, fill=white, inner sep=1pt]
\tikzstyle{efg alpha}  =[pos=0.26, fill=white, inner sep=1pt, text=green!45!black]
\tikzstyle{efg iset}   =[densely dotted, line width=.9pt]
\tikzstyle{prob lab left}   =[midway, left]
\tikzstyle{prob lab right}   =[midway, right]
                
				\node[efg root] (R) at (5.25,0) {};                       
				\node[efg node] (BH) at (2.25,-0.8) {};                   
				\node[efg node] (BT) at (8.25,-0.8) {};                   
				\foreach \i/\x in {1/0.75, 2/3.75, 3/6.75, 4/9.75}        
				\node[efg chance] (C\i) at (\x,-1.6) {};
				\foreach \i/\x in {1/0, 2/1.5, 3/3, 4/4.5, 5/6, 6/7.5, 7/9, 8/10.5}  
				\node[efg node] (D\i) at (\x,-2.6) {};
				
				\foreach \d/\i/\u in {1/1/-1, 3/4/-1, 5/7/+1, 7/10/-1}    
				\node (t\i) at ($(D\d)+(0,-1.2)$) {$\u$};
				\foreach \d/\i/\ui/\j/\uj in {2/2/-1/3/-1, 4/5/-1/6/+1, 6/8/+1/9/-1, 8/11/-1/12/-1}{ 
					\node (t\i) at ($(D\d)+(-\hw,-1.2)$) {$\ui$};
					\node (t\j) at ($(D\d)+( \hw,-1.2)$) {$\uj$};}
				
				\draw[efg edge] (R)--(BH)  node[efg lab]{$H$};    \draw[efg edge] (R)--(BT)  node[efg lab]{$T$};
				\draw[efg edge] (BH)--(C1) node[efg lab]{$H_1$};  \draw[efg edge] (BH)--(C2) node[efg lab]{$T_1$};
				\draw[efg edge] (BT)--(C3) node[efg lab]{$H_1$};  \draw[efg edge] (BT)--(C4) node[efg lab]{$T_1$};
				\foreach \c/\l/\r in {1/1/2, 2/3/4, 3/5/6, 4/7/8}{        
					\draw[efg edge] (C\c)--(D\l) node[prob lab left ]{$\lambda$};
					\draw[efg edge] (C\c)--(D\r) node[prob lab right]{$1{-}\lambda$};}
				\foreach \d/\t in {1/1, 3/4, 5/7, 7/10}                   
				\draw[efg edge] (D\d)--(t\t) node[efg lab]{$H_2$};
				\foreach \d/\h/\k in {2/2/3, 4/5/6, 6/8/9, 8/11/12}{      
					\draw[efg edge] (D\d)--(t\h) node[efg lab]{$H_2$};
					\draw[efg edge] (D\d)--(t\k) node[efg lab]{$T_2$};}
				
				\draw[efg iset] (BH)--(BT);                       
				\draw[efg iset] (D1) to[bend left=18]  (D5);
				\draw[efg iset] (D3) to[bend left=18]  (D7);
				\draw[efg iset] (D2) to[bend right=18] (D6);
				\draw[efg iset] (D4) to[bend right=18] (D8);
			\end{tikzpicture}
		\label{fig:efgsasgame1}
	\end{minipage}
	\caption{(Left) Game tree of $\calG^{\mathrm{orig}}$. (Right) Expanded game $\calG^\dagger$ for \(\alpha=0\), i.e. at infoset $A$, actions \(\{H_1,T_1\}\) are always available.
    Circular nodes $(\fmycircle, \mycircle)$
        denote decision points for defender (resp. attacker). Diamonds $(\mydiamond)$ denote chance nodes and dotted lines connect nodes belonging to the same infoset. 
        At infosets \(B\) and \(C\), the attacker observes action availability set \(\{H_2\}\) w.p. \(\lambda\) and \(\{H_2,T_2\}\) otherwise. }
    \label{fig:updown}
\end{figure}

Beyond ex-interim action disclosures, an EFGSAS may have \emph{ex-ante} information disclosure, with the action availability sets for \emph{every} infoset \(I \in \calI_i\) sampled and revealed to player \(i\) prior to game-play. Such an EFGSAS can be expanded with a similar procedure, with a chance node added at the root encoding the distribution among \textit{all} possibly observed \(S_{I,i}\) across all infosets. 

\begin{remark}
    The same EFGSAS \(\calG\) can differ significantly between the ex-interim and ex-ante setting. In the game in~\cref{ex:updown}, it is possible in the ex-ante setting for the attacker to avoid ever playing \(T_1H_2\), a strictly dominated strategy in \(\calG^{\mathrm{orig}}\). However, in the ex-interim setting, any attacker strategy that plays \(T_1\) with strictly positive probability might be forced to play \(T_1H_2\), depending on the sampled action availability set.
\end{remark}

In a $\calG^\dagger$ associated with EFGSAS $\calG$ for both ex-interim and ex-ante regimes, standard definitions of mixed, behavioral and sequence-form strategies apply (cf.~\cref{sec:prelim}). However, the \emph{representation size} of these strategies scale poorly with \(\calG\), since all possible observed action subsets need to be encoded.
Consider an EFGSAS $\calG$ where $\calG^{\mathrm{orig}}$ has depth \(d\) with \(a\) actions at every infoset, with each action available independently at random (provided at least one action is available).
In the ex-interim setting, the number of sequences in the treeplex of the naively expanded EFG $\calG^\dagger$  is \(\lvert \Sigma^{\dagger} \rvert = \frac{(a{2}^{a-1})^{d+1}-a 2^{a-1}}{a{2}^{a-1} - 1} \geq (a 2^{a-1})^{d}\) (see~\cref{appsubsec:expansion_size} for a derivation).
Similarly, EFGSAS in the ex-ante setting also exhibit a doubly exponential increase in size.
However, due to the structure afforded by EFGSAS, it turns out that in the ex-interim case, it is possible to further reduce the representation size of the players' decision problems.
As such, for the reminder of this paper we exclusively focus on the class of ex-interim EFGSAS.

\subsection{Sequence-Form Strategies Over DAG-plexes}

In the naive expansion, behavioral strategies need to be defined at each new infoset generated by the action availability sets. This requires strategies to account for the history of action availability sets that have been observed. We now show that for any behavioral strategy on \(\calG^{\dagger}\), there exists a behavioral strategy with equal utility that considers only the observed action availabilities at the \emph{current} infoset, and not prior ones.
We define \(\mathrm{obs} : \calI_i^\dagger \to \cup_{I \in \calI_i} \calS_{I,i}\) such that \(\mathrm{obs}(I^\dagger)\) is the action availability set \(S_{I,i}\) which \(I^\dagger\) corresponds to observing in the EFG.

 
\begin{proposition}
    \label{prop:decisionmerging}
	For any behavioral strategy \(\beta_i \) in \(\calG^\dagger\),
	there exists \[b_i: \cup_{I \in \calI_i} \calS_{I,i} \to \cup_{I \in \calI_i} \Delta(A_I)\]
	such that \(\mathrm{supp}(b_i(S_{I,i})) \subseteq S_{I,i}\)
	and the strategy \(\hat{\beta_i}(I^\dagger) = b_i(\mathrm{obs}(I^\dagger))\) has\\
    \(U_i(\beta_i, \beta_{-i})~=~U_i(\hat{\beta_i}, \beta_{-i})\) for any other joint strategy \(\beta_{-i}\).
\end{proposition}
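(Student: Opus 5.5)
The plan is to show that $U_i$ depends on $\beta_i$ only through a small set of \emph{marginal realization weights} indexed by triples $(I,S,a)$ with $I\in\calI_i$, $S\in\calS_{I,i}$ and $a\in S$. We then build $b_i$ by averaging $\beta_i$ over the histories of past availability observations, and check by induction that the averaged strategy reproduces exactly these weights.

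\textbf{Step 1 (marginal weights).} Every expanded infoset $I^\dagger\in\calI_i^\dagger$ corresponds to a base infoset $I$ together with the history of availability sets observed at player $i$'s earlier infosets on the path, ending with $\mathrm{obs}(I^\dagger)=S$. Let $x^\dagger$ be the sequence-form vector of $\beta_i$ in $\calG^\dagger$. For each expanded sequence, form the product of player $i$'s action probabilities and the chance probabilities $\rho_{I',i}(S')$ of its own availability draws along the path. Define
\[
w[(I,S,a)] \;=\; \sum_{I^\dagger:\,\text{base } I,\ \mathrm{obs}(I^\dagger)=S} \Big(\prod_{\text{own draws } S' \text{ before } I^\dagger}\rho(S')\Big)\, x^\dagger[\sigma(I^\dagger)]\,\beta_i(a\mid I^\dagger),
\]
where the product of $\rho$'s runs over player $i$'s draws strictly before reaching $I^\dagger$. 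Set $w[(I,S,a)]=0$ if $a\notin S$. Summing over $S$ and multiplying by $\rho_{I,i}(S)$ yields a base-game realization weight $r_i[(I,a)]=\sum_S\rho_{I,i}(S)\,w[(I,S,a)]$.

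\textbf{Step 2 (utility depends only on these weights).} Expand $U_i(\beta_i,\beta_{-i})$ as a sum over terminal nodes of $\calG^\dagger$, grouped by base terminal $z\in\calZ$. By \cref{assumption:productefg}, $u_i(z)$ does not depend on the availability realization, and the draws are independent across infosets and players. In the ex-interim expansion, player $i$'s availability chance nodes are observed only by player $i$. Hence the opponents' and base chance contributions to reaching $z$ factor out and do not depend on player $i$'s draws. The remaining player-$i$ factor, summed over all of $i$'s draw histories consistent with $z$, is $r_i$ of the last base sequence of $i$ on the path to $z$. Therefore $U_i(\beta_i,\beta_{-i})=\sum_z u_i(z)\,r_i[\sigma_i(z)]\,(\text{terms in }\beta_{-i},\text{chance})$.

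I expect this step to be the main obstacle. It requires carefully arguing that opponent infosets in $\calG^\dagger$ merge across player $i$'s draw outcomes, so that $\beta_{-i}$ cannot correlate with them. It also requires that perfect recall in the base game makes the $\sum_S\rho(S)$ factorization telescope cleanly.

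\textbf{Step 3 (averaging and induction).} Define
\[
b_i(S)(a)=\frac{w[(I,S,a)]}{\sum_{a'\in S} w[(I,S,a')]},
\]
where $I$ is the base infoset with $S\in\calS_{I,i}$ (if several infosets share a set, index $b_i$ by the pair $(I,S)$ implicitly). If the denominator is zero, take $b_i(S)$ to be any distribution supported on $S$. The support condition holds by construction.

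Let $\hat w$ denote the weights of $\hat\beta_i$. We show $\hat w=w$ by induction on depth in player $i$'s sequence tree. For a root infoset this is immediate. For the inductive step, mass conservation in $\calG^\dagger$ gives $\sum_{a'\in S}w[(I,S,a')]=r_i[\sigma(I)]$ for every $S$, because the draw at $I$ is independent of the past. The same identity holds for $\hat w$ by the induction hypothesis. Then
\[
\hat w[(I,S,a)]=r_i[\sigma(I)]\,b_i(S)(a)=w[(I,S,a)].
\]
When $r_i[\sigma(I)]=0$, both sides are zero. Combining this with Step 2 gives $U_i(\beta_i,\beta_{-i})=U_i(\hat\beta_i,\beta_{-i})$ for every $\beta_{-i}$, since $b_i$ was built from $\beta_i$ alone.
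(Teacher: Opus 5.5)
Your proof is correct. It produces the same $b_i$ as the paper, but verifies utility equivalence by a different mechanism.

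The paper argues by a bottom-up exchange. For each class $M(I,S_{I,i})$ of expanded infosets that observe $S_{I,i}$ at $I$, it overwrites $\beta_i$ on the whole class by a reach-weighted average. A lemma then shows this preserves $U_i(\cdot,\beta_{-i})$, provided every descendant class has already been merged, so that the continuation below $I$ no longer depends on earlier draws and factors out of the class. Because ancestors are processed after descendants, the reach weights used are those of the original $\beta_i$, i.e.\ exactly your $w[(I,S,\cdot)]$.

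You instead prove one global fact: $U_i(\cdot,\beta_{-i})$ is linear in the draw-marginalized base-game realization plan $r_i$. You then check by top-down induction that $\hat\beta_i$ reproduces $w$, and hence $r_i$.

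Your route gives a one-shot verification with no sequence of intermediate strategies. It also handles normalization and zero-reach classes explicitly. The paper's displayed average omits the normalizing denominator, and its weight $p(a,I^\dagger)$ already contains the factor $\beta_i(a\mid I^\dagger)$. Along the way you also establish the marginal-utility factorization that the paper later relies on, without separate proof, when passing to marginal and implementable strategies. The paper's route is more local, since each step needs only that counterfactual continuation values agree across a single class, but it pays for this with the bottom-up ordering.

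The point you flag as the main obstacle in Step 2 is routine. It follows from the expansion rule that nodes not belonging to player $i$ keep their original infosets. Base-game perfect recall then fixes the sequence of $i$'s infosets on the path to each $z$, so $i$'s factor is exactly $r_i[\sigma_i(z)]$.
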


\cref{prop:decisionmerging} allows us to consider strategies over the space where all
infosets \(I^\dagger\) that correspond to observing the same \(S_{I,i}\) are merged, shrinking the decision problem from the doubly exponential sized tree (c.f.~\cref{appsubsec:expansion_size}) to a smaller, though possibly still exponential, directed acyclic graph (DAG).
Every node in the DAG uniquely corresponds to an observed action availability set \(S_{I,i}\) in \(\calG\), and we denote that node \(S^\dagger_{I,i}\). We abuse notation where needed to treat the node \(S^\dagger_{I,i}\) as the corresponding action availability set \(S_{I,i}\) in \(\calG\) and vice versa, e.g., by writing a particular node in the DAG as \(S^\dagger_{I,i} \in \calS_{i}\) or \(b_i(S_{I,i})\) as the strategy from \(b_i\) at \(S^\dagger_{I,i}\).

A \(b_i\) as defined in~\cref{prop:decisionmerging} is referred to as an EFGSAS behavioral strategy, and we let \(b_i(a\vert S^\dagger_{I,i})\) be the probability \(b_i(S^\dagger_{I,i})\) assigns to action \(a\). 
Player \(i\)'s expected payoff from a joint behavioral strategy profile \(b = (b_i)_{i \in \calN}\) is given by the utility of \(\hat{\beta}\) in the expanded EFG and is equal to
\begin{equation}
U_i(b) = \mathbb{E}_{S \sim \rho}\left[\sum_{z \in \calZ} \mathbb{P}(z \vert b(S),r) \cdot u_i(z) \right],
\end{equation}
where \(\mathbb{P}(z \vert b(S), r)\) is the probability that leaf \(z \in \calZ\) is reached from root \(r\) following \(b\) under action availability set ensemble \(S\).
In line with standard definitions of NE, we call \(b^*\) an \(\epsilon\)-Nash equilibrium (NE) if no player can unilaterally deviate to increase their expected payoff by more than \(\epsilon\), i.e., \(\forall i \in \calN\text{ and } \forall b_i\), \(U_i(b^*) \geq U_i(b_i, b^*_{-i}) - \epsilon\).

The utility equivalence established in~\cref{prop:decisionmerging} implies that we can define the sequence-form strategies of ex-interim EFGSAS with perfect recall over smaller, `compressed' decision polytopes, as opposed to the much larger standard treeplexes. We refer to these polytopes as \emph{DAG-plexes}.


\begin{definition}[DAG-plex]
	The class of DAG-plexes is recursively defined as:
	\begin{enumerate}
		\item \emph{Simplices}: Every simplex \(\Delta_m := \{x \in [0,1]^m \mid \sum_{i=1}^m x_i = 1\}\) is a DAG-plex.
		\item \emph{Cartesian product}: If \(Q_1,\ldots, Q_k\) are DAG-plexes, so is \(Q_1 \times \ldots \times Q_k\).
		\item \emph{Recombination}: If \(P \subseteq[0,1]^p\) and \(Q \subseteq [0,1]^q\) are DAG-plexes and \(j \subseteq \{1,...,p\}, j\neq \varnothing\), then \(\{(x,y)\in \mathbb{R}^{p+q} \mid x \in P, y \in Q \cdot\sum_{i \in j} x_i\}\) is a DAG-plex.
	\end{enumerate}
\end{definition}


\begin{figure}[ht]
\begin{minipage}{0.56\textwidth}
\centering
\begin{tikzpicture}[x=0.55cm, y=1cm, font=\scriptsize]
    \def\hw{0.7}   

    \tikzstyle{efg root}   =[circle, fill=red!85!black, draw=red!55!black, minimum size=3.4mm, inner sep=0pt]
    \tikzstyle{efg node}   =[circle, draw=blue, line width=.8pt, fill=white, minimum size=3.6mm, inner sep=0pt]
    \tikzstyle{efg chance} =[diamond, draw=green!55!black, line width=.8pt, fill=white, minimum size=3.8mm, inner sep=0pt]
    \tikzstyle{efg edge}   =[line width=.7pt]
    \tikzstyle{efg lab}    =[midway, fill=white, inner sep=1pt]
    \tikzstyle{efg alpha}  =[pos=0.26, fill=white, inner sep=1pt, text=green!45!black]
    \tikzstyle{efg iset}   =[densely dotted, line width=.9pt]
    \tikzstyle{prob lab left}   =[midway, left]
    \tikzstyle{prob lab right}  =[midway, right]

    \node[efg root] (R) at (5.25,0) {};

    \node[efg chance] (C1) at (2.25,-0.8) {};
    \node[efg chance] (C2) at (8.25,-0.8) {};

    \node[efg node] (F1) at (0.75,-1.6) {};   
    \node[efg node] (F2) at (3.75,-1.6) {};   
    \node[efg node] (F3) at (6.75,-1.6) {};   
    \node[efg node] (F4) at (9.75,-1.6) {};   

    \foreach \i/\x in {1/0.75, 2/2.75, 3/4.75, 4/6.75, 5/8.75, 6/10.75}
        \node[efg node] (D\i) at (\x,-2.6) {};

    \foreach \d/\i/\ui/\j/\uj in {
        1/1/-1/2/-1,
        2/3/-1/4/-1,
        3/5/-1/6/+1,
        4/7/+1/8/-1,
        5/9/+1/10/-1,
        6/11/-1/12/-1
    }{
        \node (t\i) at ($(D\d)+(-\hw,-1.2)$) {$\ui$};
        \node (t\j) at ($(D\d)+( \hw,-1.2)$) {$\uj$};
    }

    \draw[efg edge] (R)--(C1) node[efg lab]{$H$};
    \draw[efg edge] (R)--(C2) node[efg lab]{$T$};

    \draw[efg edge] (C1)--(F1) node[prob lab left ]{$\alpha$};
    \draw[efg edge] (C1)--(F2) node[prob lab right]{$1{-}\alpha$};
    \draw[efg edge] (C2)--(F3) node[prob lab left ]{$\alpha$};
    \draw[efg edge] (C2)--(F4) node[prob lab right]{$1{-}\alpha$};

    \draw[efg edge] (F1)--(D1) node[efg lab]{$H_1$};
    \draw[efg edge] (F2)--(D2) node[efg lab]{$H_1$};
    \draw[efg edge] (F2)--(D3) node[efg lab]{$T_1$};
    \draw[efg edge] (F3)--(D4) node[efg lab]{$H_1$};
    \draw[efg edge] (F4)--(D5) node[efg lab]{$H_1$};
    \draw[efg edge] (F4)--(D6) node[efg lab]{$T_1$};

    \foreach \d/\h/\k in {1/1/2, 2/3/4, 3/5/6, 4/7/8, 5/9/10, 6/11/12}{
        \draw[efg edge] (D\d)--(t\h) node[efg lab]{$H_2$};
        \draw[efg edge] (D\d)--(t\k) node[efg lab]{$T_2$};
    }

    \draw[efg iset] (F1) to[bend left=18]  (F3);   
    \draw[efg iset] (F2) to[bend right=18] (F4);   

    \draw[efg iset] (D1) to[bend left=18]  (D4);   
    \draw[efg iset] (D2) to[bend left=18]  (D5);   
    \draw[efg iset] (D3) to[bend right=18] (D6);   
\end{tikzpicture}
\end{minipage}
\begin{minipage}{0.21\textwidth}
\centering
\begin{tikzpicture}[x=0.4cm, y=0.975cm, font=\scriptsize]

    \tikzstyle{tp empty}  =[rectangle, draw=black, fill=white,
        line width=.6pt, minimum size=2.1mm, inner sep=0pt]
    \tikzstyle{tp filled} =[rectangle, draw=black, fill=black,
        line width=.6pt, minimum size=2.1mm, inner sep=0pt]
    \tikzstyle{tp edge}   =[line width=.6pt]
    \tikzstyle{tp dashed} =[line width=.6pt, dashed]
    \tikzstyle{tp lab}    =[midway, fill=white, inner sep=0.8pt]

    \coordinate (Top) at (0,0.9);
    \node[tp filled]  (R) at (0,0) {};

    \node[tp empty] (L) at (-1.5,-0.9) {};
    \node[tp empty] (M) at ( 1.5,-0.9) {};

    \node[tp empty] (A) at (-1.5,-1.9) {};
    \node[tp empty] (B) at ( 0.5,-1.9) {};
    \node[tp empty] (C) at ( 2.5,-1.9) {};

    \coordinate (A1) at (-2.25,-2.85);
    \coordinate (A2) at (-0.75,-2.85);

    \coordinate (B1) at (-0.25,-2.85);
    \coordinate (B2) at ( 1.25,-2.85);

    \coordinate (C1) at ( 1.75,-2.85);
    \coordinate (C2) at ( 3.25,-2.85);

    \draw[tp edge] (Top) -- (R) node[tp lab] {$\varnothing$};

    \draw[tp dashed] (R) -- (L) node[tp lab, above left] {$\{H_1\}$};
    \draw[tp dashed] (R) -- (M) node[tp lab, above right] {$\{H_1,T_1\}$};

    \draw[tp edge] (L) -- (A) node[tp lab] {$H_1$};

    \draw[tp edge] (M) -- (B) node[tp lab] {$H_1$};
    \draw[tp edge] (M) -- (C) node[tp lab] {$T_1$};

    \draw[tp edge] (A) -- (A1) node[tp lab] {$H_2$};
    \draw[tp edge] (A) -- (A2) node[tp lab] {$T_2$};

    \draw[tp edge] (B) -- (B1) node[tp lab] {$H_2$};
    \draw[tp edge] (B) -- (B2) node[tp lab] {$T_2$};

    \draw[tp edge] (C) -- (C1) node[tp lab] {$H_2$};
    \draw[tp edge] (C) -- (C2) node[tp lab] {$T_2$};
\end{tikzpicture}
\end{minipage}
\begin{minipage}{0.21\textwidth}
\centering
\begin{tikzpicture}[x=0.4cm, y=0.975cm, font=\scriptsize]

    \tikzstyle{tp empty}  =[rectangle, draw=black, fill=white,
        line width=.6pt, minimum size=2.1mm, inner sep=0pt]
    \tikzstyle{tp filled} =[rectangle, draw=black, fill=black,
        line width=.6pt, minimum size=2.1mm, inner sep=0pt]
    \tikzstyle{tp edge}   =[line width=.6pt]
    \tikzstyle{tp dashed} =[line width=.6pt, dashed]
    \tikzstyle{tp lab}    =[midway, fill=white, inner sep=0.8pt]

    \coordinate (Top) at (0,0.9);
    \coordinate (R)   at (0,0);
    \node[tp filled]  (R) at (0,0) {};

    \node[tp empty] (L) at (-1.5,-0.9) {};
    \node[tp empty] (M) at ( 1.5,-0.9) {};

    \node[tp empty] (A) at ( 0.0,-1.9) {};
    \node[tp empty] (B) at ( 2.5,-1.9) {};

    \coordinate (A1) at (-0.85,-2.85);
    \coordinate (A2) at ( 0.85,-2.85);

    \coordinate (B1) at ( 1.65,-2.85);
    \coordinate (B2) at ( 3.35,-2.85);

    \draw[tp edge] (Top) -- (R) node[tp lab] {$\varnothing$};

    \draw[tp dashed] (R) -- (L) node[tp lab, above left] {$\{H_1\}$};
    \draw[tp dashed] (R) -- (M) node[tp lab, above right] {$\{H_1,T_1\}$};

    \draw[tp edge] (L) -- (A) node[tp lab] {$H_1$};

    \draw[tp edge] (M) -- (A) node[tp lab] {$H_1$};
    \draw[tp edge] (M) -- (B) node[tp lab] {$T_1$};

    \draw[tp edge] (A) -- (A1) node[tp lab] {$H_2$};
    \draw[tp edge] (A) -- (A2) node[tp lab] {$T_2$};

    \draw[tp edge] (B) -- (B1) node[tp lab] {$H_2$};
    \draw[tp edge] (B) -- (B2) node[tp lab] {$T_2$};
\end{tikzpicture}
\end{minipage}
\caption{(Left) Expanded game tree for Example~\ref{ex:updown-v2}, refer to Figure~\ref{fig:updown} for the legend.  (Middle) The corresponding treeplex for the attacker. Empty vertices ($\square$) represent infosets. Filled vertices ($\blacksquare$) are observation nodes that lead to parallel information sets. Solid edges are actions that can be taken. Dotted edges are possible observations, in this case, the only observations are the action availabilities; though in general, these could instead be related to $\calG^{\mathrm{orig}}$.
The empty sequence is labeled $\varnothing$.
(Right) The DAG-plex representation, with nodes and edges meaning the same as the treeplex.} 
\label{fig:updown-v2-dagplex}
\end{figure}
\begin{example}[DAG-plex for~\cref{ex:updown} game]
    Consider an instance of the game in Example~\ref{ex:updown} where $\alpha > 0$ and $\lambda=0$, i.e., $H_2$ and $T_2$ are always available, but $\{H_1\}$ is available w.p. $\alpha$ and $\{H_1,T_1\}$ is available w.p. $1-\alpha$. The expanded game tree is shown in Figure~\ref{fig:updown-v2-dagplex}. Note that the chance node appears earlier as compared to Figure~\ref{fig:updown}. The treeplex and DAG-plex for this game are also displayed in Figure~\ref{fig:updown-v2-dagplex}. In the treeplex, there are two decision points after taking action $H_1$, corresponding to encountering action sets $\{ H_1 \}$ and $\{ H_1, T_1 \}$. However, in the DAG-plex, these two decision points are collapsed to one, i.e., after action $H_1$, the player essentially ``forgets'' the action availabilities that were available before, resulting in a more compact structure.
    \label{ex:updown-v2}
\end{example}
Next, we proceed to define sequence-form strategies over the DAG-plex. Let
    \[\Xi_i = \{(S^\dagger_{I,i},a) \mid I \in \calI_i;\; S^\dagger_{I,i} \in \calS_{I,i};\; a \in S^\dagger_{I,i}\} \cup \{ \varnothing \}\] be the set of sequences of player \(i\) in an EFGSAS and \(\xi(S^\dagger_{I,i})\) be the parent sequences of \(S^\dagger_{I,i}\), defined as the set of pairs \(\{(S'^\dagger_{I,i}, a)\}\) immediately preceding infoset \(S^\dagger_{I,i}\) on the path from the root to any infoset for action availabilities \(S_{I,i}\).
\begin{definition}[EFGSAS sequence-form strategies]
	An EFGSAS sequence-form strategy is a vector \(\chi_i\) 
    indexed by sequences \(\xi_i \in \Xi_i\) such that for any \(\xi_i = (S^\dagger_{I,i},a)\), \(\chi_i(\xi_i)\) captures the product of probabilities of player i's actions from the root to \(S^\dagger_{I,i}\) and then playing action \(a\). Furthermore, \(\chi_i\) must conserve probability mass with
	\(\chi_i(\varnothing)=1\) and \(\forall I \in \calI_i, \forall S^\dagger_{I,i} \in \calS_{I,i}, \sum_{a \in S^\dagger_{I,i}} \chi_i(S^\dagger_{I,i}, a) = \sum_{{S^\dagger}' \in \xi_i(S^\dagger_{I,i})} \sum_{a' \in {S^\dagger}'} \chi_i({S^\dagger}', a')\).
\end{definition}

Given an ensemble of EFGSAS sequence-form strategies \(\chi = (\chi_i)_{i \in \calN}\), the expected utility of player $i$ is: 
$U_i(\chi)\coloneqq \mathbb{E}_{S \sim \rho} \left[\sum_{z\in\mathcal{Z}} \mathbb{P}(z\vert \chi, r)\cdot u_i(z)\right]$, where $ \mathbb{P}(z\vert \chi, r)$ is the probability that leaf $z\in\mathcal{Z}$ is reached from root $r$ following $\chi$.
Combining the above with~\cref{prop:decisionmerging} ensures that all strategies in the expanded EFG $\calG^\dagger$ have a utility-equivalent EFGSAS sequence-form strategy. Indeed, since we consider EFGSAS with perfect recall, it follows that any NE in the DAG-plex is equivalent to a NE of $\calG$. This implies that for the purposes of equilibrium computation, it suffices to obtain NE of the smaller DAG-plex, compared to the much larger expanded game.

\subsection{Compact representation}
Thus far, we have established that expressing a player's decision problem over the DAG-plex \(\Xi_i\) instead of the naively expanded sequences \(\Sigma_i^\dagger\) mitigates the exponential dependence of a player's decision problem size on the depth of the decision tree. However, there remains an exponential dependence on the \emph{number of actions} at each infoset, since each chance node enumerates an action availability subset.
To deal with this, we show under~\cref{assumption:productefg} that it suffices to consider the restricted space of `implementable sequence-form strategies', which can be constructed using marginal behavioral strategies at each infoset. 
%

\squeezepara{Implementable strategies.}
For an EFGSAS behavioral strategy \(b_i\), due to Assumption~\ref{assumption:productefg}, its corresponding \emph{marginal behavioral strategy} $\mu_{I,i}(a)$ for each $a\in A_{I,i}$ at an infoset \(I \in \calI\) is: 
	\(\mu_{I,i}(a) \coloneqq 
	\mathbb{P}[a ; \rho_{I,i} , b_i]
	= \sum_{S_{I,i}\in\calS_{I,i}}\rho_{I,i}(S_{I,i})\cdot b_i(a\vert S_{I,i})\cdot \mathbbm{1}\{a\in S_{I,i}\}\).
Let \(\mu_{i}\) be the ensemble over \(I\in\calI_i\) of marginal behavioral strategies belonging to player $i$.
Then, the expected utility of the joint EFGSAS behavioral strategy \(b=(b_i)_{i \in \calN}\) can be written as
\begin{equation}
    U_i(b) = \sum_{z \in \calZ} u_i(z) \prod_{i \in \calN}
    \mathbb{P}[z \vert \mu_{i}, r],
\end{equation}
where \(\mathbb{P}[z \vert \mu_i, r]\) is the probability that \(z \in \calZ\) is reached from root \(r\) under \(\mu_i\). Hence, at an infoset $I\in\calI_i$, it suffices to work in the space of $\mu_{I,i} \in \Delta(A_{I,i})$, rather than the much larger $b_i(S_{I,i}) : \calS_{I,i} \to \Delta(A_{I,i})$. Note that multiple behavioral strategies can share the same marginal behavioral strategy $\mu$.
\begin{definition}[Implementable sequence-form strategy]
    We call a sequence-form strategy for \(\calG^\mathrm{orig}\) `\(\omega_i\)' an `implementable sequence-form strategy' if there exists some EFGSAS behavioral strategy \(b_i\) with marginal behavioral strategies \(\mu_{I,i}(a)\) such that
	\(\mathbb{P}[a;\omega_i] = \mu_{I,i}(a)\).
	The set of implementable sequence-form strategies for player \(i\) is denoted  \(\Omega_i\) and is a subset of the treeplex of \(\Sigma^\mathrm{orig}_i\). We say an EFGSAS sequence-form strategy \(\chi_i\)
    `implements' \(\omega_i\) if its behavioral strategy representation has marginal behavioral strategies such that \(\mathbb{P}[a;\omega_i] = \mu_{I,i}(a)\).
\end{definition}


It turns out that under~\cref{assumption:productefg}, each player's `effective' strategy space is the compact, convex set of implementable sequence-form strategies. The following results ensure that (i) implementable sequence-form strategies suffice to capture all Nash equilibria of an EFGSAS, and (ii) a version of the minimax theorem holds for implementable strategies in 2p0s-EFGSAS.

\begin{proposition}[Nash equilibrium equivalence]\label{prop:NE-equiv}
	Consider an EFGSAS sequence-form strategy profile \(\chi = (\chi_1, \ldots, \chi_n)\) which implements \(\omega = (\omega_1, \ldots, \omega_n)\). Then \(\chi\) is an $\epsilon$-Nash equilibrium if and only if \(\forall i \in [n]\),
	\begin{equation}
		U_i(\omega) \geq \max_{\omega'_i\in \Omega_i} U_i(\omega'_i,\omega_{-i}) - \epsilon
	\end{equation}
\end{proposition}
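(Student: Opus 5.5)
The plan is to reduce everything to a single utility identity: under \cref{assumption:productefg}, the EFGSAS utility of any profile depends on each player's strategy only through the implementable sequence-form strategy it implements. Concretely, I will show that if \(\chi_i\) implements \(\omega_i\) for every \(i\), then \(U_i(\chi) = U_i(\omega)\), where the right-hand side is the ordinary sequence-form utility in \(\calG^{\mathrm{orig}}\), i.e.\ \(\sum_{z} u_i(z)\prod_j \omega_j[\sigma_j(z)]\) with chance reach folded in. To obtain this, I pass from \(\chi\) to an EFGSAS behavioral strategy \(b\) (via \cref{prop:decisionmerging} and the usual normalisation of sequence-form to behavioral strategies on the DAG-plex). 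I then use the displayed identity \(U_i(b) = \sum_z u_i(z)\prod_j \mathbb{P}[z\mid \mu_j, r]\). The key observation is that \(\mathbb{P}[z\mid\mu_j,r] = \prod_{(I,a)\preceq \sigma_j(z)} \mu_{I,j}(a)\). This holds because independence across infosets lets the expectation over \(S\) factor along the path to \(z\). By the definition of implementability, this product equals \(\omega_j[\sigma_j(z)]\).

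Second, I will establish that the map \(\chi_i \mapsto \omega_i\) is onto \(\Omega_i\), and that for every \(\omega_i' \in \Omega_i\) some EFGSAS strategy implements it. The latter is immediate from the definition of \(\Omega_i\): there is a \(b_i\) whose marginals match, and its DAG-plex sequence form \(\chi_i'\) implements \(\omega_i'\). Conversely, every \(\chi_i'\) implements some element of \(\Omega_i\), namely the one built from its own marginals. One point needs a short check here. The marginal \(\mu_{I,i}\) must be a genuine distribution on \(A_{I,i}\) that does not depend on the history, and the resulting \(\omega_i'\) must be a valid treeplex point. The first holds because \(b_i(\cdot\mid S_{I,i})\) depends only on the current observation, and the second follows since \(\sum_a \mu_{I,i}(a) = 1\).

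With these two facts the equivalence is a direct translation. For the forward direction, suppose \(\chi\) is an \(\epsilon\)-NE, and take any \(\omega_i'\in\Omega_i\) with an implementing strategy \(\chi_i'\). Then \(U_i(\omega_i',\omega_{-i}) = U_i(\chi_i',\chi_{-i}) \le U_i(\chi)+\epsilon = U_i(\omega)+\epsilon\). Taking the maximum, which is attained since \(\Omega_i\) is compact and \(U_i\) is multilinear, gives the displayed inequality. For the converse, take any deviation \(\chi_i'\), let \(\omega_i'\) be the strategy it implements, and run the same chain backwards. A deviation given as a behavioral \(b_i\) is handled the same way through its marginals.

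I expect the main obstacle to be the factorization step \(U_i(b)=U_i(\omega)\). The difficulty is justifying that, in the expanded game, the probability of a leaf equals the product over path infosets of \(\mathbb{E}_{S_{I,i}}[b_i(a\mid S_{I,i})\mathbbm{1}\{a\in S_{I,i}\}]\). This requires that the availability draws at distinct infosets on a path are independent of each other and of opponents' draws. It also requires that each infoset is visited at most once per play, which perfect recall guarantees. Finally, \(b_i\) must condition only on the current \(S_{I,i}\), which \cref{prop:decisionmerging} ensures. I will write the reach probability as an expectation of a product of per-infoset indicators times conditional action probabilities and then apply \cref{assumption:productefg} to exchange the expectation and the product.
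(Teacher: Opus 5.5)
Your proposal is correct and follows the same route as the paper. Both arguments establish $U_i(\chi)=U_i(\omega)$ by passing from $\chi$ to its EFGSAS behavioral strategy $b$ and its marginals $\mu$, and then replace the maximum over deviations $\chi_i'$ with the maximum over $\omega_i'\in\Omega_i$. Your version is more careful than the paper's, which writes out only the backward direction and leaves implicit both the factorization of the expectation over $S$ under the independence assumption and the two-way correspondence between deviations and elements of $\Omega_i$.
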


\begin{proposition}[Minimax theorem for EFGSAS]\label{prop:minimaxEFGSAS}
	For a 2p0s-EFGSAS, a strategy profile \((\chi_1^*, \chi_2^*)\) is a NE if and only if their associated \((\omega_1^*, \omega_2^*)\) is a saddle point of the function \(U = U_1 = -U_2\) i.e.
	\(\omega_1^* = \mathrm{argmax}_{\omega_1 \in \Omega_1} \min_{\omega_2 \in \Omega_2} U(\omega_1, \omega_2)\)
	and
	\(\omega_2^* = \mathrm{argmin}_{\omega_2 \in \Omega_2} \max_{\omega_1 \in \Omega_1} U(\omega_1, \omega_2)\)
	where
	\(\max_{\omega_1 \in \Omega_1} \min_{\omega_2 \in \Omega_2} U(\omega_1, \omega_2) = \min_{\omega_2 \in \Omega_2} \max_{\omega_1 \in \Omega_1} U(\omega_1, \omega_2)\).
\end{proposition}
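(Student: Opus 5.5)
We reduce the statement to the classical minimax theorem for a bilinear function over compact convex sets, with Proposition~\ref{prop:NE-equiv} (taking $\epsilon=0$, $n=2$) supplying the link between equilibria of the EFGSAS and saddle points over the implementable sets.

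\textbf{Step 1: structure of $\Omega_i$.} The set $\Omega_i$ is the image of the set of EFGSAS behavioral strategies $b_i$ under the marginalisation map $b_i \mapsto \mu_i$, followed by the standard map from behavioral to sequence-form strategies on $\Sigma^{\mathrm{orig}}_i$. At each infoset $I$, the attainable marginals form the set
\[
M_{I,i}=\Big\{\textstyle\sum_{S}\rho_{I,i}(S)\, p_S \;:\; p_S\in\Delta(S)\Big\}.
\]
This set is a weighted Minkowski sum of compact convex simplices, so it is itself compact and convex. The choices at different infosets are independent: $b_i$ picks each $b_i(S_{I,i})$ separately, and Assumption~\ref{assumption:productefg} makes the $\rho_{I,i}$ independent. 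Hence $\Omega_i$ is exactly the set of sequence-form vectors $\omega_i$ on the original treeplex whose conditional ratios $\omega_i[(I,a)]/\omega_i[\sigma(I)]$ lie in $M_{I,i}$ wherever $\omega_i[\sigma(I)]>0$. Compactness follows by continuity from the compact product $\prod_I M_{I,i}$.

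Convexity needs an argument, because the map from behavioral to sequence form is not linear. We prove it by induction over the treeplex, in the style of Kuhn's theorem. Take $\omega=\lambda\omega'+(1-\lambda)\omega''$. At each infoset $I$, the conditional of $\omega$ equals a convex combination of the conditionals of $\omega'$ and $\omega''$, with weights proportional to $\lambda\omega'[\sigma(I)]$ and $(1-\lambda)\omega''[\sigma(I)]$. Since $M_{I,i}$ is convex, this combination lies in $M_{I,i}$. If the parent mass is zero, the conditional is unconstrained and we assign an arbitrary point of $M_{I,i}$. We expect this step to be the \emph{main obstacle}: it is where the product assumption is genuinely used, and the zero-reach infosets must be handled carefully.

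\textbf{Step 2: bilinearity.} By the utility formula $U_i(b)=\sum_z u_i(z)\prod_i \mathbb{P}[z\mid\mu_i,r]$, the expected payoff depends on $b$ only through $\omega_1,\omega_2$. The reach probability $\mathbb{P}[z\mid\mu_i,r]$ equals $\omega_i$ evaluated at player $i$'s terminal sequence, so $U(\omega_1,\omega_2)=\omega_1^\top A\,\omega_2$, where $A$ is the sequence-form payoff matrix of $\calG^{\mathrm{orig}}$ (chance probabilities are absorbed into $A$). This function is bilinear and continuous on $\Omega_1\times\Omega_2$.

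\textbf{Step 3: conclude.} By von Neumann's (or Sion's) minimax theorem for compact convex sets and a bilinear objective, $\max_{\Omega_1}\min_{\Omega_2}U=\min_{\Omega_2}\max_{\Omega_1}U$, and saddle points exist.
\begin{itemize}
\item ($\Rightarrow$) Let $(\chi_1^*,\chi_2^*)$ be a NE. By Proposition~\ref{prop:NE-equiv}, $\omega_1^*$ is a best response to $\omega_2^*$ in $\Omega_1$, and likewise $\omega_2^*$ is a best response to $\omega_1^*$ in $\Omega_2$ (using $U_2=-U$). These mutual best responses form a saddle point, and the standard saddle-point argument then shows that $\omega_1^*$ attains the max-min and $\omega_2^*$ attains the min-max.
\item ($\Leftarrow$) Conversely, suppose $\omega_1^*$ is a max-min optimizer and $\omega_2^*$ a min-max optimizer. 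Equality of the two values gives
\[
U(\omega_1,\omega_2^*)\le v\le U(\omega_1^*,\omega_2)\quad\text{for all }\omega_1\in\Omega_1,\ \omega_2\in\Omega_2.
\]
So each strategy is a best response within $\Omega_i$, and Proposition~\ref{prop:NE-equiv} with $\epsilon=0$ shows that any $\chi$ implementing $(\omega_1^*,\omega_2^*)$ is a NE.
\end{itemize}
Since the argmax and argmin may be sets rather than single points, we read the ``$=$'' in the statement as membership.
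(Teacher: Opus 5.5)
Your proposal is correct and takes the same route as the paper, whose proof just notes that $U(\chi_1^*,\chi_2^*)=U(\omega_1^*,\omega_2^*)$ whenever $\chi$ implements $\omega$ and then invokes von Neumann's minimax theorem. Your Steps 1--2 (compactness and convexity of $\Omega_i$ via the per-infoset sets $M_{I,i}$, and bilinearity $U=\omega_1^\top A\,\omega_2$) verify the hypotheses of that theorem, which the paper leaves implicit, and your use of Proposition~\ref{prop:NE-equiv} to pass between equilibria and saddle points supplies the link between the two.
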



Despite the above, it remains unclear how one can recover $\chi_i$ from $\omega_i$ efficiently.
To deal with this, we show that there exists a compact vector $W_i\in \mathbb{R}^{\lvert \Sigma_i^\mathrm{orig}\rvert}_{\geq 0}$ that implements each \(\omega_i \in \Omega_i\) and can be used to obtain valid EFGSAS strategies for any observed action availability set.

\begin{theorem}\label{thm:efgsascompact}
	Let $\chi_i$ implement $\omega_i\in \Omega_i$. 
    There exists a vector \(W_i \in \mathbb{R}^{\lvert \Sigma_i^\mathrm{orig}\rvert}_{\geq 0}\) such that an EFGSAS sequence-form strategy $\chi'_i(S_i)$ implementing $\omega_i$ can be constructed in time linear in the size of \(\calG^{\mathrm{orig}}\).  
\end{theorem}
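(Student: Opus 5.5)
The plan is to reduce everything to per-infoset marginals, then reproduce, at each infoset and for each observed availability set, a conditional distribution that has those marginals. The compact vector is the natural choice, $W_i = \omega_i$, a sequence-form strategy on the base treeplex with $\lvert \Sigma_i^{\mathrm{orig}}\rvert$ entries. From it we read off behavioral marginals by normalizing, $\mu_{I,i}(a) = \omega_i[(I,a)]/\omega_i[\sigma(I)]$; if $\omega_i[\sigma(I)]=0$, we instead use the marginals of any fixed EFGSAS behavioral strategy (e.g.\ uniform over $S_{I,i}$), since that infoset is never reached and any choice is consistent. By the definition of implementability, together with the fact that $\chi_i$ implements $\omega_i$, each $\mu_{I,i}$ is the marginal of some $b_i(\cdot\mid S_{I,i})$, $S_{I,i}\in\calS_{I,i}$. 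The outcome will therefore depend only on constructing, online, a $b'_i$ with those same marginals, and by Proposition~\ref{prop:NE-equiv} and the utility formula $U_i(b)=\sum_z u_i(z)\prod_i \mathbb{P}[z\mid\mu_i,r]$, any such $b'_i$ (and its sequence form $\chi'_i$) implements $\omega_i$ and is utility-equivalent.

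The key step is a per-infoset \emph{implementation map}: given $\mu_{I,i}$ and the realized $S_{I,i}$, output $b'_i(\cdot\mid S_{I,i})$ with $\mathrm{supp}\subseteq S_{I,i}$, such that $\sum_{S}\rho_{I,i}(S)\, b'_i(a\mid S)\mathbbm{1}\{a\in S\} = \mu_{I,i}(a)$. The cheapest route is not to invent a new rule but to store per-infoset data inside $W_i$. Because of Assumption~\ref{assumption:productefg}, the marginal at $I$ depends only on $b_i(\cdot\mid\cdot)$ at $I$, so I would use a canonical greedy or priority rule. Order actions by a fixed ranking, and let the realized set play its highest-ranked available action with a mixing weight. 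Weights of this kind are determined by $\lvert A_I\rvert$ numbers per infoset, so they fit in a vector indexed by $\Sigma_i^{\mathrm{orig}}$, and they are computable from $\mu_{I,i}$ and $\rho_{I,i}$ via the normal-form GSAS compact-representation results of~\cite{schwarz2026computing} (recalled in~\cref{appsec:gsasprelims}). I would invoke that normal-form result as a black box at each infoset, viewing each infoset as a one-shot GSAS with availability distribution $\rho_{I,i}$ and target marginal $\mu_{I,i}$. Its existence is guaranteed since $\mu_{I,i}$ lies in the implementable marginal polytope, and that result yields a vector of size $\lvert A_{I,i}\rvert$ from which $b'_i(\cdot\mid S)$ is computed in $O(\lvert S\rvert)$ time. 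Concatenating these vectors over $I\in\calI_i$ gives $W_i\in\mathbb{R}^{\lvert\Sigma_i^{\mathrm{orig}}\rvert}_{\ge0}$.

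The last step assembles $\chi'_i(S_i)$ for an observed availability ensemble $S_i$. A single top-down pass over the base treeplex sets $\chi'_i(\varnothing)=1$ and, for each infoset, multiplies the parent mass by $b'_i(a\mid S_{I,i})$. This is one pass with constant work per sequence, hence linear in the size of $\calG^{\mathrm{orig}}$. Perfect recall ensures that the parent-mass recursion matches the DAG-plex constraints of the EFGSAS sequence-form definition. The main obstacle I expect is the per-infoset implementation map. We must show that a size-$\lvert A_I\rvert$ parametrization suffices and can be decoded in linear time for every implementable marginal, rather than needing the full table over $\calS_{I,i}$. If the normal-form GSAS result does not directly provide linear-time decoding, I would fall back to proving it via a flow argument: the feasibility of $\mu_{I,i}$ is a bipartite transportation problem between subsets and actions, whose structure (Hall-type conditions $\mu(B)\le \rho(\{S: S\cap B\neq\varnothing\})$) admits a nested greedy solution parametrized by per-action thresholds.
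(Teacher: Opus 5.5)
Your proposal follows essentially the same route as the paper: apply the normal-form compact-representation lemma (\cref{lem:compact}) at each infoset to the marginals $\mu_{I,i}$, concatenate the resulting vectors $w_{I,i}$ into $W_i$, and recover $\chi'_i(S_i)$ with one top-down renormalization pass (\cref{alg:renorm}). Two of your side remarks should be dropped, though: that lemma gives the proportional rule $b'_i(a\mid S_{I,i}) = w_{I,i}(a)\mathbbm{1}\{a\in S_{I,i}\}/\sum_{a'\in A_{I,i}} w_{I,i}(a')\mathbbm{1}\{a'\in S_{I,i}\}$, not a priority rule over one fixed ranking (which in general cannot reach implementable marginals induced by other rankings), and $W_i$ must be these weights rather than $\omega_i$ itself, since renormalizing $\omega_i$ directly does not preserve the marginals (in \cref{example:lsg_compact}, $W_2^*$ puts zero weight on $T_1H_2$, yet the implemented strategy plays it with probability $1/3$).
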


Thus, while the expansion process creates an exponentially large EFG with regards to \(\calG^\mathrm{orig}\), for ex-interim EFGSAS meeting~\cref{assumption:productefg} we can represent any implementable strategy with a vector of size \(\lvert \Sigma^\mathrm{orig}_i \rvert\), i.e., independent of \(\lvert \calS_i \rvert\).

\begin{example}[Compact Representation for LSG]
    \label{example:lsg_compact}
    We focus on the special case of LSG (\cref{fig:updown}) with $\alpha=0$, $\lambda=0.5$, i.e. at infoset A, both $H_1$ and $T_1$ are always available. We fix the indexing of sequence-form strategies as $[\varnothing, H, T]$ for the defender (player 1), and as $[\varnothing, H_1, T_1, H_1H_2, H_1T_2, T_1H_2, T_1T_2]$ for the attacker (player 2). By~\cref{thm:efgsascompact}, there exists a compact vector which is of the size of sequence-form strategies in the base game. It can be shown that when $\lambda=0.5$, the following are compact vectors $W^*_1$ and $W^*_2$ that represent a NE in LSG.
\begin{equation*}
    W^*_1 = [1, 2/3, 1/3], \quad W^*_2 = [1, 1/3, 2/3, 1/3, 0, 0, 2/3]
\end{equation*}
Utilizing the procedure given in~\cref{alg:renorm}, we can easily obtain the appropriate sequence-form strategy to play, given any possible action availabilities in the game. For instance, consider the case that at infoset $B$, player 2 learns that they only have access to $H_2$. Then,~\cref{alg:renorm} ensures that the strategy played is a valid EFGSAS sequence-form strategy, i.e. $\chi(\{H_2\}) = [1, 1/3, 2/3, 1, 0, 0, 0]$, which recovers a NE strategy since player 2 cannot play $T_2$. Note that~\cref{alg:renorm} also deals with the children of unavailable sequences: if the game continued for more rounds after infoset $B$, the strategies which are  descendants of $T_2$ are played with probability $0$.
\end{example}


\begin{remark}

    The analysis of~\cref{thm:efgsascompact} requires~\cref{assumption:productefg}, which raises the question of whether relaxing the assumption can still result in compact representations for general 2p0s-EFGSAS. Unfortunately, it turns out that this is impossible, even for normal-form GSAS. In particular, given an instance from a family of GSAS, a bit-compact representation of an \(\epsilon\)-NE is a polynomial-sized binary input that can be used to play the \(\epsilon\)-NE strategy of that instance.
    In normal-form GSAS without an appropriate independence assumption,~\cite[Theorem F.3]{schwarz2026computing} establishes the impossibility of obtaining bit-compact NE representations. Moreover, since GSAS are a special case of EFGSAS, it also holds that there exist EFGSAS for which bit-compact NE representations do not exist. This further justifies the use of~\cref{assumption:productefg} in our analysis.
    \label{rem:bit_complexity}
\end{remark}

Until this point, though we have established the \emph{existence} of compact strategy representations in ex-interim EFGSAS, it remains unclear whether these strategies can be efficiently computed. In the following section, we give one such procedure based on the framework of sleeping internal regret minimization.




\section{Computing Compact Equilibria in EFGSAS}\label{sec:efg_compute}
\subsection{Sleeping Regret Minimization in EFGSAS}

A common paradigm for equilibrium computation in games relies on a connection between \emph{online learning} and game-theoretic equilibria. In this setting, at each timestep $t\le T$, each player $i\in [n]$ selects a strategy $x_i^t \in \calX_i$ from a compact, convex strategy set $\calX_i$ and observes reward vector $u_i(\cdot,x_{-i}^t)$. The typical performance metric is (cumulative) external regret, defined for each player $i$ over timesteps $T$ as $R^{\mathsf{EXT}}_{T,i} \eqdef \sum_{t=1}^T\max_{x'_i\in \calX_i} (u_i(x'_i,x^t_{-i}) - u_i(x^t_i,x^t_{-i}))$. Intuitively, low regret implies that the algorithm/player is not outperformed by any single fixed strategy. The folk result of no-regret learning in 2p0s-games states if an algorithm achieves sublinear external regret (i.e., $R^{\mathsf{EXT}}_{T,i} = o(T)$), then the time-average over played strategies is an approximate Nash equilibrium.


In EFGs, the strategy sets $\calX_i$ are typically the set of sequence-form strategies. However,~\cite{zinkevich2007regret} showed that the external regret over the game-tree can be upper-bounded by individual, per-infoset \emph{counterfactual} regrets, leading to the development of the \emph{counterfactual regret minimization} (CFR) framework as a theoretically sound yet practically efficient method for computing Nash equilibria in 2p0s-EFGs under self-play. 

In the setting of 2p0s-EFGSAS, the fundamental connection between external regret minimization and Nash equilibria fails. Intuitively, due to the fact that a competing strategy for a player might not be available at a given infoset, external regret is an ill-defined performance metric. To deal with this, we utilize concepts from the \emph{sleeping bandits} literature~\cite{kleinberg2010regret}. The appropriate  regret variant that we study going forward is called \emph{sleeping internal regret} (SI-regret), which was introduced and formalized in~\cite{gaillard2023one}. For clarity, we first give a per-infoset definition of SI-regret, with respect to EFGSAS behavioral strategies for a player.

\begin{definition}[Sleeping Internal Regret]\label{def:internalregret}
    For any pair of actions $\hat{a}_i\in A_{I.i}$ and $\hat{a}_i'\in A_{I,i}$ at an infoset $I\in\calI_i$, the sleeping internal regret (SI-regret) for player $i$ using behavioral strategy $b_i$ in $T$ timesteps, $R^{\mathsf{INT}}_{T,I,i}(\hat{a}_i\to \hat{a}_i')$, is 
    \begin{align}
        \mathbb{E}_{S \sim \rho_{I,i}}\left[\mathbb{E}_{a \sim b(S)}\left[\sum_{t=1}^T\mathbbm{1}\{a_i^t = \hat{a}_i, \hat{a}_i'\in S_{I,i}^t\}\left(u_i(\hat{a}_i', a_{-i}^t) - u_i(a^t_i, a_{-i}^t)\right)\right]\right].
    \end{align}
\end{definition}

In the case where a player's SI-regret vanishes for each action pair over all of their infosets, i.e. $\max_{I\in\calI_i}\max_{\hat{a}_i, \hat{a}'_i} R^{\mathsf{INT}}_{T,I,i} = o(T)$ as $T\to\infty$, they are said to have \emph{no-SI-regret}. 
The intuition is that player $i$ does not regret not playing action $\hat{a}_i'$  (if $\hat{a}_i'$ was available) every time they played $\hat{a}_i$, for any $\hat{a}_i$, $\hat{a}'_i$ at infoset $I\in \calI_i$. Moreover, notice that if the SI-regret is minimized over all infosets, then it follows that the SI-regret of an implementable sequence-form strategy induced by the marginal behavioral strategies is also sublinear.
\cite{schwarz2026computing} showed that sublinear SI-regret is necessary to guarantee convergence to Nash equilibria in normal-form GSAS, and gave an algorithm called SI-MWU that guarantees sublinear SI-regret. 
Our goal is to design an efficient procedure that minimizes SI-regret over \emph{implementable sequence-form strategies} in ex-interim 2p0s-EFGSAS. To this end, we propose a modification of the CFR algorithm by first introducing an SI-regret minimizer over EFGSAS sequence-form strategies (\cref{alg:cfrsimwu}). We write~\cref{alg:cfrsimwu} in terms of the \emph{scaled extension} framework of~\cite{farina2019efficient} (\cref{def:scaledextension}). Given two SI-regret minimizers over compact, convex sets $\calX$ and $\Delta$, the scaled extension allows the construction of an SI-regret minimizer over $\calX$ scaled by $\Delta$ via affine function $f(x)$, and composed recursively this leads to an SI-regret minimizer over EFGSAS sequence-form strategies.

\begin{algorithm}[h]
\caption{SI-Regret Minimizer via Scaled Extension}
\label{alg:cfrsimwu}
\begin{algorithmic}[1]
\renewcommand{\algorithmicrequire}{\textbf{Input:}}
\renewcommand{\algorithmicensure}{\textbf{Output:}}
\Require SI-Regret minimizer over $\calX\subseteq\mathbb{R}^n_{\ge 0}$: $\calR^{\mathsf{INT}}_{\calX}$
\Require SI-Regret minimizer over $\Delta$: $\calR^{\mathsf{INT}}_{\Delta}$
\Require $f: \calX\to\mathbb{R_+}$
\For{$t = 1,2,\ldots,T$}
    \State $x^t \gets$ strategy from $\calR^{\mathsf{INT}}_\calX$  ;
    \State $y^t \gets$ strategy from $\calR^{\mathsf{INT}}_{\Delta}$ (i.e. SI-MWU)  ;
    \State Play $\sigma^t\coloneqq (x^t, f(x^t)\cdot y^t)$
    \State Receive utility $u^t \coloneqq (u_\calX^t, u_\Delta^t)$ ;
    \State Pass utility $u_\Delta^t$ to $\calR^{\mathsf{INT}}_{\Delta}$;
    \State Pass utility $u_\calX^t + f\cdot u^t_\Delta(y^t)$ to $\calR^{\mathsf{INT}}_{\calX}$;
\EndFor
\end{algorithmic}
\end{algorithm}
Composing~\cref{alg:cfrsimwu} recursively over all infosets of an EFGSAS gives the SI-CFR algorithm, the full pseudocode of which is given in~\cref{appsec:sicfr}.
Informally, SI-CFR runs SI-MWU at each infoset of the game, and performs traversals of the game tree similarly to CFR. However, unlike CFR where a sequence-form vector is maintained and updated with each traversal, SI-CFR needs to maintain, for each infoset $I\in\calI_i$, a vector of size $\vert A_{I,i}\vert (\vert A_{I,i}-1\vert)$ which can be viewed as the `experts' for each SI-MWU instance. Then, in each traversal, the counterfactual utilities are used to compute the SI-MWU losses, and the experts at each infoset are updated according to~\cref{alg:si_mwu}. 
Under SI-CFR, we show that sublinear SI-regret is obtained,
incurring constant factors that depend on the size of the sequence-form strategy set of the base game $\calG^{\mathrm{orig}}$, $\vert \Sigma_i^\mathrm{orig}\vert$. In particular, let $R^{\mathsf{INT}}_{T,i}$ denote the total SI-regret  of the EFGSAS sequence-form strategies for player $i$ running SI-CFR after $T$ traversals of the game tree. Then, we have

\begin{theorem}\label{thm:siregretboundprob}
    In a 2p0s-EFGSAS, with probability at least $1-p$, a player running SI-CFR has total SI-regret $R^{\mathsf{INT}}_{T,i}$ bounded by $O\left(|\Sigma_i^\mathrm{orig}|\sqrt{T \log\left(1/p\right)}\right)$. 
\end{theorem}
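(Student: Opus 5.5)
The plan is to prove the bound in three layers: a per-infoset high-probability SI-regret bound for SI-MWU, a decomposition lemma showing that the scaled extension of \cref{alg:cfrsimwu} adds SI-regrets, and a union bound over infosets.

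\textbf{Step 1 (local bound).} Fix an infoset \(I\in\calI_i\). At each traversal \(t\), the SI-MWU instance at \(I\) receives the counterfactual utility vector \(u_\Delta^t\), whose entries lie in \([-1,1]\) after normalization by the opponent/chance reach. It also receives the sampled availability set \(S^t_{I,i}\sim\rho_{I,i}\), which is independent of everything else by \cref{assumption:productefg}. The experts of SI-MWU are the \(|A_{I,i}|(|A_{I,i}|-1)\) swaps \(\hat a\to\hat a'\). Following the normal-form analysis of \cite{schwarz2026computing}, I would invoke the expected SI-regret guarantee of SI-MWU, of order \(O(\sqrt{T\log|A_{I,i}|})\).

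To upgrade this to high probability, define the martingale difference sequence
\[
D^t=\mathbbm{1}\{a^t=\hat a,\hat a'\in S^t\}\bigl(u(\hat a')-u(a^t)\bigr)-\mathbb{E}_t[\cdot],
\]
where the conditional expectation is over \(S^t\) and the sampled action given the past. Each \(|D^t|\le 4\). Azuma--Hoeffding then gives, for each pair \((\hat a,\hat a')\), a deviation of \(O(\sqrt{T\log(1/p')})\) with probability \(1-p'\). A union bound over the \(|A_{I,i}|^2\) pairs costs only logarithmic factors, which are absorbed into the constant.

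\textbf{Step 2 (composition).} I would show that for the scaled extension \(\sigma^t=(x^t,f(x^t)y^t)\), with the utility passed to \(\calR_\calX\) being \(u_\calX^t+f\cdot u^t_\Delta(y^t)\), the SI-regret of the combined minimizer is at most the SI-regret of \(\calR_\calX\) plus the SI-regret of \(\calR_\Delta\). The weighting \(f\) is at most \(1\), since it is a sequence-form reach probability.

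The argument mirrors the CFR decomposition of \cite{zinkevich2007regret} and \cite{farina2019efficient}. An internal swap at a child infoset only changes play below that infoset. The gain from such a swap splits into a counterfactual term at the child plus the effect of the parent's choices, which are evaluated against the fixed child strategy \(y^t\).

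Unrolling this recursively along the base-game structure, which the DAG-plex merges, bounds \(R^{\mathsf{INT}}_{T,i}\) by the sum of the local SI-regrets. The sum runs over infosets \(I\in\calI_i\), weighted by at most \(|A_{I,i}|\) through the sequences. Since \(\sum_I |A_{I,i}|\le|\Sigma^\mathrm{orig}_i|\), this yields the prefactor \(|\Sigma_i^\mathrm{orig}|\).

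The sum is over base infosets rather than DAG nodes \(S^\dagger_{I,i}\). This is because SI-MWU at \(I\) already handles all availability sets through its sleeping structure, and the marginal behavioral strategy \(\mu_{I,i}\) depends only on \(I\).

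\textbf{Step 3 (global union bound).} Apply Step 1 with \(p'=p/|\calI_i|\). Since \(\log(|\calI_i|/p)\le\log|\Sigma^\mathrm{orig}_i|+\log(1/p)\), with probability \(1-p\) every local bound holds simultaneously. Combining with Step 2 gives \(O(|\Sigma^\mathrm{orig}_i|\sqrt{T\log(1/p)})\), with the lower-order logarithmic terms suppressed in the \(O\).

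\textbf{Main obstacle.} I expect Step 2 to be the hardest. Internal (swap) regret does not compose as cleanly as external regret. A swap \(\hat a\to\hat a'\) at a parent changes which child subtree is reached, and the comparator's value in that subtree must be measured against the learner's actual child strategy \(y^t\), not a best response.

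The way to handle this is to define SI-regret at the parent using exactly the utility \(u_\calX^t+f\cdot u_\Delta^t(y^t)\), as in \cref{alg:cfrsimwu}. The parent's swap regret is then measured against the continuation actually played. The remaining difference is controlled by the child's SI-regret, because the swap deviation only reweights the child's visits by a nonnegative factor of at most \(1\).

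I would first verify this inequality for a single parent--child pair and then induct on depth. Throughout, I would check that the availability indicator \(\mathbbm{1}\{\hat a'\in S\}\) commutes with the reach weights, which relies on the independence in \cref{assumption:productefg}.
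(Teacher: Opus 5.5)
Your plan is essentially the paper's proof, with only bookkeeping differences: you concentrate per infoset before composing and make the union bound over infosets explicit. Otherwise it uses the same three ingredients: the per-infoset $O(\sqrt{T\log|A_{I,i}|})$ SI-MWU bound from \cite{schwarz2026computing}; the scaled-extension additivity of \cref{lem:scaled}, applied from the leaves up to get $R^{\mathsf{INT}}_{T,i}\le\sum_{I\in\calI_i}R^{\mathsf{INT}}_{T,I,i}=O(|\Sigma_i^{\mathrm{orig}}|\sqrt{T})$; and Azuma--Hoeffding with a union bound over swap pairs (\cref{lem:leafinfosetsampled}) for the high-probability statement. The composition step needs more care, in your sketch exactly as in \cref{lem:scaled}: ``reweighting by a factor at most $1$'' is rigorous for a fixed comparator's reach (it factors out of $\sum_t$, as in external-regret CFR), but after a parent swap $\hat a\to\hat a'$ the child's reach under the deviation is governed by $x^t[\hat a]+x^t[\hat a']$, which varies with $t$, and an unweighted SI-regret bound at the child does not by itself control such a time-weighted sum.
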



We also show that in 2p0s-EFGSAS, minimizing SI-regret leads to (time-averaged) NE convergence, by averaging over EFGSAS sequence-form strategies.

\begin{proposition}\label{prop:NEconvergeEFG}
    Consider a 2p0s-EFGSAS $\calG$ where players achieve sublinear SI-regret of $R^{\mathsf{INT}}_{T,1}$ and $R^{\mathsf{INT}}_{T,2}$ after $T$ timesteps.
    Let $\bar{\omega}_1 \coloneqq \frac{1}{T}\sum_{t=1}^T \chi_1^t(S_1^t)$ and $\bar{\omega}_2\coloneqq \frac{1}{T}\sum_{t=1}^T \chi_2^t(S_2^t)$ be the empirical marginal sequence-form strategies of the players, respectively. Then, any strategy $({\chi}_1, {\chi}_2)$ that implements $(\bar{\omega}_1,\bar{\omega}_2)$ is a $({R^{\mathsf{INT}}_1+R^{\mathsf{INT}}_2})/{T}$-approximate NE of $\calG$.
    \label{prop:SINash}
\end{proposition}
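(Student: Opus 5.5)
We follow the standard folk-theorem argument for no-regret dynamics in zero-sum games, but run it on the space of implementable sequence-form strategies $\Omega_i$ instead of the full treeplex, and use SI-regret where external regret would normally appear. By \cref{prop:NE-equiv} and \cref{prop:minimaxEFGSAS}, it is enough to show that $(\bar\omega_1,\bar\omega_2)$ is an $\epsilon$-saddle point of the bilinear function $U(\omega_1,\omega_2)=\omega_1^\top A\,\omega_2$ over $\Omega_1\times\Omega_2$, with $\epsilon = (R^{\mathsf{INT}}_{T,1}+R^{\mathsf{INT}}_{T,2})/T$. Any $(\chi_1,\chi_2)$ implementing $(\bar\omega_1,\bar\omega_2)$ is then an $\epsilon$-NE of $\calG$. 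We also need $\bar\omega_i\in\Omega_i$. This holds because each realized $\chi_i^t(S_i^t)$, taken in expectation over $S^t\sim\rho$, has marginal behavioral strategies lying in the convex set of marginals, and $\Omega_i$ is convex. We will read $\bar\omega_i$ as this expected-marginal average, which is how the SI-regret in \cref{def:internalregret} is itself defined.

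\textbf{Step 1: SI-regret controls regret against every implementable deviation.} Fix any $\omega_1'\in\Omega_1$, implemented by some EFGSAS behavioral strategy $b_1'$. We will show
\[
\textstyle\sum_t U(\omega_1',\omega_2^t)-\sum_t U(\omega_1^t,\omega_2^t)\le R^{\mathsf{INT}}_{T,1},
\]
where $\omega_i^t$ is the marginal of $\chi_i^t$.

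The deviation is decomposed infoset by infoset, in the usual CFR way. The difference in counterfactual values between $b_1'$ and the played $b_1^t$ telescopes into a sum over $I\in\calI_1$ of reach-weighted local differences. At each infoset, the local difference is
\[
\textstyle\mathbb{E}_{S}\big[\sum_{a'}b_1'(a'\mid S)\,v^t(a') - \sum_a b_1^t(a\mid S)\,v^t(a)\big].
\]
This can be written as a combination of sleeping swaps $a\to a'$ that only use $a'\in S$. The identity that makes this work is that the target action $a'$ is always available when it is chosen, so the deviation is a sleeping-internal transformation. Each swap term is bounded by the corresponding per-infoset SI-regret. Summing with the reach weights then gives the total SI-regret $R^{\mathsf{INT}}_{T,1}$, as defined through the scaled-extension composition of \cref{alg:cfrsimwu}.

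\textbf{Step 2: Standard averaging.} By bilinearity, $\frac1T\sum_t U(\omega_1',\omega_2^t)=U(\omega_1',\bar\omega_2)$. The symmetric inequality for player 2 gives, for every $\omega_2'\in\Omega_2$,
\[
\textstyle U(\bar\omega_1,\omega_2')\ge \frac1T\sum_t U(\omega_1^t,\omega_2^t)-R^{\mathsf{INT}}_{T,2}/T .
\]
Combining the two, $\max_{\omega_1'}U(\omega_1',\bar\omega_2)-\min_{\omega_2'}U(\bar\omega_1,\omega_2')\le \epsilon$. Each player's deviation gain at $(\bar\omega_1,\bar\omega_2)$ is therefore at most $\epsilon$, and \cref{prop:NE-equiv} transfers this to $(\chi_1,\chi_2)$.

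\textbf{Main obstacle.} The hard part is Step 1. We must show that a deviation to an arbitrary implementable $\omega_1'$ is dominated by sleeping internal swaps. One issue is that $b_1'(\cdot\mid S)$ is not in general a swap of the played distribution. Another is that it must respect the availability $S$. The approach we would try first is to write the deviation as $b_1'(a'\mid S)=\sum_a b_1^t(a\mid S)\,b_1'(a'\mid S)$, which is the "play $a'$ whenever you would have played $a$" transformation restricted to available $a'$. The SI-regret counts pairs $(a\to a')$ only when $a'\in S$, and this is exactly the availability constraint on the deviation. We still need to check that mixing over $a'$ with $S$-dependent weights is covered by the pairwise SI-regret. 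If the pairwise regret is not enough, the fallback is to use the per-$S$ form of the SI-MWU guarantee together with \cref{assumption:productefg}, which lets each infoset be treated independently.
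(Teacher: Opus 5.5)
\textbf{Same route as the paper, but Step 1 is a genuine gap, and in fact a false step.} Your overall route matches the paper's. Its proof asserts, ``since'' the players have sublinear SI-regret, that $\max_{\omega_1'}U_1(\omega_1',\bar\omega_2)-U_1(\bar\omega_1,\bar\omega_2)\le R^{\mathsf{INT}}_{T,1}/T$ (and symmetrically for player 2). It then sums the two and invokes \cref{prop:minimaxEFGSAS}. Your Step 2 is that argument done more carefully. So everything rests on your Step 1, which the paper never proves. The worry you raise about $S$-dependent weights is fatal, not technical: pairwise SI-regret does not control regret against implementable deviations, even when both players are low-SI-regret learners in a 2p0s game.

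\textbf{Counterexample.} Take a one-shot game with the following structure:
\begin{itemize}
\item Player 1 has actions $a,b,c$, with availability $\{a,b,c\}$ or $\{b,c\}$, each w.p.\ $\tfrac12$.
\item Player 2 has actions $x,y$, always available, and does not observe player 1's move.
\item Against $(x,y)$, the payoffs are $u_1(a,\cdot)=(1,1)$, $u_1(b,\cdot)=(1,0)$, $u_1(c,\cdot)=(0,1)$, and $u_2=-u_1$.
\end{itemize}
Alternate two kinds of rounds. In the first, player 2 plays $x$ while player 1 plays $b$ on $\{a,b,c\}$ and $c$ on $\{b,c\}$. In the second, player 2 plays $y$ while player 1 plays $c$ on $\{a,b,c\}$ and $b$ on $\{b,c\}$. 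Over each pair of rounds the regrets are as follows:
\begin{itemize}
\item The $c\to b$ SI-regret collects $+\tfrac12$ (playing $c$ on $\{b,c\}$ against $x$) and $-\tfrac12$ (playing $c$ on $\{a,b,c\}$ against $y$). The same happens symmetrically for $b\to c$.
\item Every swap into $a$ gains exactly $0$, and $a$ is never played.
\item Player 1's marginal is $(0,\tfrac12,\tfrac12)$ in every round, so player 2 is indifferent and has zero internal regret.
\end{itemize}
Hence every SI-regret of both players is $0$ in expectation, and $O(\sqrt T)$ when sampled. Yet at $\bar\omega_1=(0,\tfrac12,\tfrac12)$, $\bar\omega_2=(\tfrac12,\tfrac12)$, player 1 earns $\tfrac12$, while ``play $a$ whenever available'' earns $\tfrac34$, the value of the game. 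Your Step 1 inequality therefore fails with left side $T/4$ and right side $0$. The proposition as stated does not hold.

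\textbf{Why it fails.} The mechanism is exactly the $S$-dependence in your decomposition $b_1'(a'\mid S)=\sum_a b_1^t(a\mid S)\,b_1'(a'\mid S)$. At an infoset, the best response to $\bar\omega_2$ is a ranking policy (``best available action''). Its target from a given action changes with $S$: here $c$ is sent to $a$ on $\{a,b,c\}$ but to $b$ on $\{b,c\}$. The SI-regret of $(a\to a')$, by contrast, weights the swap only by $\mathbbm{1}\{a'\in S\}$. So negative $c\to b$ regret earned on sets where the relevant deviation goes elsewhere cancels the positive $c\to b$ regret where it matters. No CFR-style reach-weighted summation can repair this, because the failure already occurs at a single infoset.

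\textbf{What would make your argument work.} You need a stronger hypothesis: vanishing per-infoset regret against every availability-dependent mapping $S_{I,i}\mapsto\Delta(S_{I,i})$. This is implied, for instance, by no-regret conditional on each $S_{I,i}$, which is essentially your fallback. However, that is not what SI-MWU is shown to guarantee, and it brings back a dependence on $|\calS_{I,i}|$. Under that hypothesis, your reach-weighted decomposition does bound the regret against any $\omega_1'\in\Omega_1$, and your Step 2 finishes the proof.

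\textbf{A smaller point.} The statement's $\bar\omega_i$ is the realized average $\frac1T\sum_t\chi_i^t(S_i^t)$, which need not lie in $\Omega_i$. Replacing it by the expected-marginal average, as you do, requires a concentration step.
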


This ensures that with high probability, the marginals of the sequence-form strategies played by SI-CFR converge to an approximate NE.




\subsection{Extracting Compact NE via Stochastic Approximation}\label{sec:SA}
Simply minimizing SI-regret is not sufficient to constitute a `playable' Nash equilibrium strategy for the players, since it is only optimal in the marginal sense.
One method to deal with this is to compute a `compact' version of the EFGSAS sequence-form NE strategy, the existence of which was established in~\cref{thm:efgsascompact}. In the remainder of this section, we outline such a procedure based on stochastic approximation (SA) techniques. Due to~\cref{thm:siregretboundprob} and~\cref{prop:NEconvergeEFG}, SI-CFR outputs a sequence of strategies $\{\chi_i^t\}_{t=1,\dots,T}$ such that 
$\frac{1}{T}\sum_{t=1}^T\chi_i^t(S^t_i)\to \omega_i^*$ as $T\to \infty$, where $\omega_i^*$ is a marginal sequence-form strategy induced by an (approximate) Nash equilibrium $\chi^*$. 
Specifically, for a vector $W_i\in\mathbb{R}^{|\Sigma^{\mathrm{orig}}_i|}_{\ge0}$, let $\hat \chi_i(\sigma_i|S_i, W_i)$ be the EFGSAS sequence-form strategy of player $i$ indexed by sequences of the base game $\sigma_i$, given availability sets  $S_i$ and computed via an appropriate renormalization with $W_i$ as input. Let $\hat \omega_i(W_i)$ be the corresponding marginal distribution where $\hat\omega_i(\sigma_i|W_i) = \mathbb{E}_{S_i\sim \rho_{I,i}}[\hat \chi_i(\sigma_i|S_i, W_i)]$ for all $\sigma_i\in \Sigma_i$. Then, the proposed SA procedure seeks a root of the problem $\hat \omega_i(W_i) = \omega_i^*$ in the space of `compact'  sequence-form strategies.

\begin{algorithm}[h]
\caption{Stochastic Approximation for Compact NE}
\label{alg:compute_w_efg}
\begin{algorithmic}[1]
\renewcommand{\algorithmicrequire}{\textbf{Input:}}
\renewcommand{\algorithmicensure}{\textbf{Output:}}
\Require $\theta^1_i$ a uniform sequence-form strategy 
\Require $\{\chi^t\}_{t=1,\dots, T}$ a set of sequence-form strategies
\For{$t = 1,2,\ldots,T$}
    \State Observe $S_i^t$, compute $\frac{1}{t}\sum_{\tau=1}^t \chi_i^\tau(S_i^\tau)$ using $\chi_i^\tau(S_i^\tau)$ from SI-CFR;
    \State $G^t_i \gets \frac{1}{t}\sum_{\tau=1}^t \chi_i^\tau(S_i^\tau) - \chi'_i(\theta^t_i, S_i^t)$, with $\chi'_i(\theta^t_i,S_i^t)$ obtained via~\cref{alg:renorm};
    \State $\theta_i^{t+1}\gets \theta_i^t + \eta_t G^t_i$;
\EndFor
\State \Return $W^T_i = \textsc{normalize}(\theta^T_i)$
\end{algorithmic}
\end{algorithm}

First,~\cref{alg:compute_w_efg} utilizes a `uniform' sequence-form initialization, which is simply the sequence-form strategy associated with the uniform behavioral strategy (i.e., if there are $k$ actions at an infoset, each action is played w.p. $1/k$). However,  note that any valid sequence-form strategy can be used in the algorithm. At each iteration,~\cref{alg:compute_w_efg} relies on a \emph{renormalization} subroutine (defined in~\cref{alg:renorm}), which ensures that action availabilities are correctly propagated downstream in the EFGSAS sequence-form strategies. For clarity, the final $\theta^T_i \in \mathbb{R}^{|\Sigma_i^\mathrm{orig}|}_{\ge0}$ vector is normalized into a valid sequence-form vector using the operation $\textsc{normalize}$. Moreover,~\cref{alg:compute_w_efg} utilizes the \emph{marginal} strategies up to time $t$ of SI-CFR, implying that it can be run in tandem with SI-CFR. The following result ensures that the SA procedure converges asymptotically to a `true' compact vector associated with a NE of $\calG$.

\begin{restatable}{proposition}{stochasticapproximation}\label{prop:stochasticapproxasymptotic}
    \label{prop:compute_w} Let $W^T_i$ be the vector produced by~\cref{alg:compute_w_efg}.
    Assume that $\frac{1}{T}\sum_{t=1}^T \chi_i^t(S^t_i)\to \omega_i^*$ as $T\to \infty$, and that $\sum_{t=1}^{\infty}\eta_t = \infty$ and $\sum_{t=1}^{\infty}\eta^2_t < \infty$. Then, almost surely, $W_i^T\to W^*_i$ as $T\to\infty$ where $W_i^*$ is a compact representation of an EFGSAS sequence-form strategy that implements $\omega_i^*$.
\end{restatable}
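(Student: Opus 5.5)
This is a Robbins--Monro argument: view the update as a noisy step on the mean field \(h(\theta) \coloneqq \omega_i^* - \hat\omega_i(\theta)\), then use the ODE method. Write
\[
G_i^t = h(\theta_i^t) + \bigl(\bar\omega_i^t - \omega_i^*\bigr) + M^{t+1},
\qquad
M^{t+1} \coloneqq \hat\omega_i(\theta_i^t) - \chi'_i(\theta_i^t, S_i^t).
\]
Here \(\bar\omega_i^t = \frac{1}{t}\sum_{\tau\le t}\chi_i^\tau(S_i^\tau)\), and \(\hat\omega_i(\theta)=\mathbb{E}_{S_i\sim\rho}[\chi'_i(\theta,S_i)]\) is the marginal of the renormalized strategy.

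The three terms are handled as follows.
\begin{itemize}
\item \(M^{t+1}\) is a martingale difference with respect to \(\mathcal{F}_t=\sigma(\theta^1,S^1,\dots,S^{t-1})\). This holds because \(S_i^t\) is drawn independently of the past under \cref{assumption:productefg}. It is also uniformly bounded, since sequence-form entries lie in \([0,1]\).
\item The term \(\bar\omega_i^t-\omega_i^*\) is a vanishing bias \(\epsilon_t\to 0\), by the hypothesis of the proposition.
\item With \(\sum_t\eta_t^2<\infty\), the sum \(\sum_t\eta_t M^{t+1}\) converges a.s. by the martingale convergence theorem, since its quadratic variation is bounded by \(\sum\eta_t^2\cdot O(|\Sigma_i^{\mathrm{orig}}|)\).
\end{itemize}
Standard SA results (Kushner--Yin / Borkar, with asymptotically vanishing perturbations) then show that \(\{\theta_i^t\}\) tracks the ODE \(\dot\theta = \omega_i^* - \hat\omega_i(\theta)\), provided the iterates stay bounded.

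Next I analyse the ODE through the structure of \cref{alg:renorm}. Renormalization works infoset by infoset: at \(I\) with observed \(S_{I,i}\), it plays \(a\in S_{I,i}\) with probability proportional to the conditional weights \(\theta[(I,a)]/\theta[\sigma(I)]\). So the marginal behavioral probability at \(I\) is
\[
\mu_{I,i}(a\mid\theta)=\sum_{S}\rho_{I,i}(S)\,\mathbb{1}\{a\in S\}\,\frac{w_a}{\sum_{a'\in S}w_{a'}},
\]
where \(w\) is the local conditional weight vector. This is the normal-form GSAS map of \cite{schwarz2026computing}, and \(\hat\omega_i\) is obtained by multiplying these marginals along paths.

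For each infoset I would use the known normal-form facts: the map \(w\mapsto\mu_{I,i}(\cdot\mid w)\) on the simplex is continuous, surjective onto the implementable marginals, and its inverse-image fixed point is unique. In fact \(-\mu(w)\) is a monotone-type field; I would exhibit a Lyapunov function, e.g. the potential \(\Phi(w)=\sum_S\rho(S)\log\sum_{a\in S}w_a-\sum_a\mu^*_a\log w_a\), whose gradient flow matches the field. I would then lift this to the treeplex top-down. At the root infoset the flow converges. Given convergence of the parent sequence weights, the child infoset's dynamics is an asymptotically autonomous perturbation of the same kind, so it converges too. By \cref{thm:efgsascompact}, a \(W_i^*\) with \(\hat\omega_i(W_i^*)=\omega_i^*\) exists, so the equilibrium set is nonempty.

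Boundedness comes from two facts. The total mass \(\sum_a\theta[(I,a)]-\theta[\sigma(I)]\) evolves by an averaged zero-mean term, since both \(\bar\omega^t\) and \(\chi'\) satisfy flow conservation. The positive orthant is invariant in the limit, since \(\hat\omega\) vanishes on coordinates with zero weight while the target is nonnegative. Having established boundedness, I would apply the Benaïm asymptotic-pseudotrajectory theorem to get \(\theta_i^t\to\) the set of zeros of \(h\). Finally, \(\textsc{normalize}\) is continuous, so \(W_i^T\to W_i^*\) a.s.

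The main obstacle is the global convergence of the ODE. In particular, I would need a genuine Lyapunov function for the renormalization map, and uniqueness of \(W_i^*\) up to normalization; without uniqueness the conclusion weakens to convergence to the connected set of compact representations. I would try the log-potential \(\Phi\) first, checking that it is convex in \(\log w\), and handle boundary cases where \(\omega_i^*\) has zero entries by restricting to the support.
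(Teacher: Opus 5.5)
Your top-level architecture matches the paper's: a Robbins--Monro recursion, the ODE method, and a KL-type Lyapunov function. The execution, however, differs in ways that matter.

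\textbf{What the paper does.} It folds the averaging error $\bar\omega_i^t-\omega_i^*$ into $M^{t+1}$, which is then not a martingale difference. It appeals to a global-sum hyperplane $\sum_\xi\theta_i(\xi)=c$, which does not hold in general. It then asserts that a single global function $V(\theta)=D_{dil}(\chi^*\|\chi(\theta))$ satisfies $\dot V=-\|\omega_i^*-\omega_i(\theta)\|^2$.

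\textbf{What you do instead.} You treat $\bar\omega_i^t-\omega_i^*$ as a vanishing bias. You use the exact per-infoset law $\sum_a\theta[(I,a)]=\theta[\sigma(I)]$. This law is what actually holds, since both $\bar\omega_i^t$ and $\chi_i'$ conserve flow, and it pins down the per-infoset scale to which every KL term is blind. You then exploit the triangular structure $\hat\omega_i[(I,a)]=\hat\omega_i[\sigma(I)]\,\mu_{I,i}(a\mid\theta_I)$ to argue top-down.

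\textbf{How the Lyapunov functions relate.} Your $\Phi_I$ is, up to an additive constant, exactly the per-infoset term $\mathbb{E}_{S\sim\rho_{I,i}}[D_{KL}(b^*(S)\|b_\theta(S))]$ of the paper's dilated divergence. Along $\dot w=\mu^*-\mu(w)$ one gets $\dot\Phi_I=-\sum_a(\mu_a(w)-\mu^*_a)^2/w_a$. The field equals $-\nabla_{\log w}\Phi_I$, so this is a gradient flow only in the Shahshahani metric, not a Euclidean one. The paper's identity drops the $1/w_a$ weights.

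\textbf{What each route buys.} A single global function would give a one-shot stability argument. But with stochastic availability, the derivative of a child's term picks up a cross term proportional to the parent's marginal error $\omega_i^*[\sigma(I)]-\hat\omega_i[\sigma(I)]$. That term is not sign-definite, so the reach-weighted sum is not obviously monotone. Your cascade avoids this, at the cost of needing asymptotically autonomous ODE theory. It also makes the uniqueness requirement explicit: $\Phi_I$ is strictly convex in $\log w$ modulo scaling only if the positive-probability sets in $\calS_{I,i}$ connect $A_{I,i}$. Otherwise the roots form a continuum, a case the paper also ignores.

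\textbf{The positivity step fails as justified.} You argue that the positive orthant is invariant because $\hat\omega_i$ vanishes on zero-weight coordinates. It does not. If $\rho_{I,i}(\{a\})>0$, then \cref{alg:renorm} gives $a$ full conditional mass whenever $S_{I,i}=\{a\}$, whatever $\theta[(I,a)]$ is. This is exactly the $\{H_2\}$ situation in the running example. Three consequences follow.
\begin{enumerate}
\item At a non-root infoset, with the other weights at $I$ positive, the drift at $\theta[(I,a)]=0$ is $\omega_i^*[\sigma(I)]\mu^*_{I,i}(a)-\hat\omega_i[\sigma(I)]\rho_{I,i}(\{a\})$.
\item This drift is negative whenever the parent's marginal overshoots its target by a factor exceeding $\mu^*_{I,i}(a)/\rho_{I,i}(\{a\})$. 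So the ODE itself can leave the orthant before the parent has converged.
\item In discrete time, a draw $S^t_{I,i}=\{a\}$ changes $\theta[(I,a)]$ by $\eta_t(\bar\omega_i^t[(I,a)]-\chi'_i[\sigma(I)])$. This can cross zero while $\eta_t$ is still large.
\end{enumerate}
Flow conservation confines $\theta_i^t$ to the affine hull of the treeplex. Nonnegativity is therefore exactly what you need, both for boundedness and for \cref{alg:renorm} to be well defined. You need either a projection onto a truncated treeplex (projected stochastic approximation) or a genuine invariance argument. The paper's proof never addresses boundedness, so the same repair is needed there.
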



Beyond asymptotic convergence, we are also interested in finite-time convergence rates to recover $W_i^*$. To this end, we utilize the \emph{robust stochastic approximation} (RSA) approach introduced by~\cite{nemirovski1978cezari,nemirovski2009robust}, which modifies~\cref{alg:compute_w_efg} by taking Cesàro means over $\theta$ (which we call \emph{robust time-averaging}) and utilizing diminishing stepsize schedule $\eta_t = O(1/\sqrt{t})$ (details in~\cref{appsec:RSAdetails}). Under the RSA procedure, 
we obtain a finite-time convergence result. Crucially, our bound is given in terms of the \emph{duality gap} $\gamma$ (cf. \cref{def:dualitygap}) of the strategies induced by a player utilizing the compact EFGSAS sequence-form vector $W_i$. 


\begin{theorem}\label{thm:finitetimeRSAEFG}
Suppose~\cref{alg:compute_w_efg} is run for $T$ timesteps with stepsizes $O(1/\sqrt{t})$ on a sequence of iterates $\chi_i^t$ where $\frac{1}{T}\sum_{t=1}^T \chi_i^t(S^t_i)\to \omega_i^*$ as $T\to \infty$ such that \((\omega_1^*, \omega_2^*)\) is a Nash Equilibrium.
    Let $\tilde{W}$ denote the robust time-averaged value of $W$ for both players obtained after $T$ timesteps. Then, for all  $p\in(0,1)$, with probability at least $1-p$, we have 
    $\gamma^2(\tilde{W}) \leq O\left(\frac{1}{p\sqrt{T}}\right)$.
\end{theorem}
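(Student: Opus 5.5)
We will treat \cref{alg:compute_w_efg} with robust time-averaging as a robust stochastic approximation scheme for the root-finding problem $\hat\omega_i(W_i)=\omega_i^*$. We will then translate the resulting error bound into a duality-gap bound via Lipschitz continuity.

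\textbf{Step 1: target map and monotonicity.} Define $F_i(\theta)\coloneqq \hat\omega_i(\theta)-\omega_i^*$, where $\hat\omega_i(\theta)=\mathbb{E}_{S_i\sim\rho}[\chi'_i(\theta,S_i)]$ is computed by the renormalization of \cref{alg:renorm}. The stochastic direction used by the algorithm is $G_i^t=\bar\chi^t_i-\chi'_i(\theta^t_i,S_i^t)$, where $\bar\chi_i^t$ denotes the running average of SI-CFR iterates. We split it as
\[
G_i^t=-F_i(\theta^t_i)+\xi^t+\delta^t,
\]
with the two error terms
\[
\xi^t\coloneqq\hat\omega_i(\theta^t_i)-\chi'_i(\theta^t_i,S^t_i),\qquad \delta^t\coloneqq\bar\chi^t_i-\omega_i^*.
\]
The term $\xi^t$ is a bounded martingale difference, since all entries lie in $[0,1]$, and the bias $\delta^t$ tends to $0$ by hypothesis. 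By \cref{thm:efgsascompact}, a root $W_i^*$ exists. The key structural claim is that $F_i$ is monotone, i.e.\ $\langle F_i(\theta)-F_i(\theta'),\theta-\theta'\rangle\ge 0$, on the relevant compact domain. Working infoset by infoset, the renormalization makes the marginal of action $a$ equal to $\mathbb{E}_S[\theta_a\mathbbm{1}\{a\in S\}/\sum_{a'\in S}\theta_{a'}]$. Its Jacobian is an expectation of matrices of the form (diagonal) minus (rank-one), and we would verify positive semidefiniteness on the sum-normalized slice. This is the same argument used for normal-form GSAS in~\cite{schwarz2026computing}, lifted through the scaled-extension structure top-down. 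Bounded gradient norm follows from compactness.

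\textbf{Step 2: RSA bound.} With $\eta_t=c/\sqrt t$ and the Cesàro average $\tilde\theta$, the standard Nemirovski et al.\ analysis for monotone operators gives a bound on the gap function $\mathbb{E}[\max_{\theta}\langle F_i(\theta),\tilde\theta-\theta\rangle]$ of order $O(D^2+M^2\log T)/\sqrt T$ plus $\frac1T\sum_t\|\delta^t\|D$. This reduces to $O(1/\sqrt T)$ once the SI-CFR rate of \cref{thm:siregretboundprob} bounds $\|\delta^t\|$, since $\frac1T\sum_t t^{-1/2}=O(T^{-1/2})$. Using strong-monotonicity-type behaviour near the root (or, failing that, the squared residual $\|F_i(\tilde\theta)\|^2$ bounded by the gap through Lipschitz continuity and monotonicity), we get $\mathbb{E}\|\hat\omega_i(\tilde W_i)-\omega_i^*\|^2=O(1/\sqrt T)$. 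Markov's inequality then yields the $1/(p\sqrt T)$ high-probability form.

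\textbf{Step 3: duality gap.} The utility $U(\omega_1,\omega_2)$ is bilinear in sequence form with payoffs in $[-1,1]$, so it is Lipschitz in each argument with constant $|\Sigma^{\mathrm{orig}}|$. Since $(\omega_1^*,\omega_2^*)$ is a saddle point by \cref{prop:minimaxEFGSAS}, we have $\gamma(\tilde W)\le L\sum_i\|\hat\omega_i(\tilde W_i)-\omega_i^*\|$. Squaring, applying Step 2 to both players, and taking a union bound (adjusting $p$) gives $\gamma^2(\tilde W)\le O(1/(p\sqrt T))$.

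\textbf{Main obstacle.} The main obstacle is Step 1. Monotonicity of the renormalized marginal map across the recursive DAG-plex structure is needed, and degeneracy at the boundary, where $\theta$ places zero mass on an action, must be handled. If global monotonicity fails, our first fallback is to restrict to a compact interior region containing $W^*$ via projection.
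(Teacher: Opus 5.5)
Your architecture is the paper's: an RSA bound on $\mathbb{E}\|\hat\omega_i(\tilde\theta_i)-\omega_i^*\|_2^2$, then the saddle-point estimate $\gamma\le D\sum_i\|\hat\omega_i-\omega_i^*\|_1$, then Markov. The paper applies Markov once to $\|g(\tilde\theta_1)\|_2^2+\|g(\tilde\theta_2)\|_2^2$ after using $\|\cdot\|_1\le\sqrt{|\Sigma|}\|\cdot\|_2$, so it needs no union bound. The paper simply quotes $\mathbb{E}\|g(\tilde\theta)\|_2^2\le DM/\sqrt T$ from Nemirovski et al.\ without checking its hypotheses. You try to supply them, and you rightly single out monotonicity of $F_i$. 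However, that property is false in the coordinates $\theta$ that \cref{alg:compute_w_efg} updates additively.

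Take one infoset with actions $\{1,2,3\}$, $\rho(\{1,2\})=q$ and $\rho(\{1,2,3\})=1-q$. The quadratic form of $\nabla F$ is $\sum_S\rho(S)\,s_S^{-1}\bigl(\|x_S\|^2-(p_S\cdot x_S)(\mathbf{1}\cdot x_S)\bigr)$, with $s_S=\sum_{a\in S}\theta_a$ and $p_S=\theta_S/s_S$. Evaluate it at the interior point $\theta=(1,\epsilon,1)$ along $x=(1,\tfrac12,-\tfrac32)$, which lies in the sum-zero slice. It equals $-\tfrac{q}{4}+\tfrac{7(1-q)}{4}+O(\epsilon)$, which is negative for $q>7/8$. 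Numerically, $q=0.9$, $\theta=(1,0.01,1)$ and $\theta'=(1.02,0.02,0.97)$ have equal coordinate sums and give $\langle F(\theta')-F(\theta),\theta'-\theta\rangle\approx-1.4\cdot10^{-5}$. So neither the sum-normalized slice nor projection onto an interior compact set rescues Step 1, and the monotone-VI RSA analysis does not apply to the update as written.

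The structure that does exist is convexity in log-coordinates. At one infoset, $\hat\omega=\nabla_\lambda\Phi$ with $\Phi(\lambda)=\mathbb{E}_S\log\sum_{a\in S}e^{\lambda_a}$ and $\lambda=\log\theta$. Hence $\nabla_\theta\hat\omega=\nabla^2\Phi\,\mathrm{diag}(\theta)^{-1}$ has nonnegative spectrum but an indefinite symmetric part. Also, $-G_i^t$ is a biased stochastic gradient of $\Phi(\lambda)-\langle\omega_i^*,\lambda\rangle$. That would at best support an RSA argument for the multiplicative update $\theta^{t+1}=\theta^t\odot e^{\eta_tG_i^t}$, and only infoset by infoset. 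The sequence-form residual at $\xi a$ also carries the ancestor error $\hat\omega_i(\xi)-\omega_i^*(\xi)$, so the scaled-extension lifting you propose does not preserve this structure.

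Step 2 also needs a rate for the bias $\delta^t=\bar\chi_i^t-\omega_i^*$ that is not available. The theorem assumes only $\bar\chi_i^t\to\omega_i^*$, with no rate. \cref{thm:siregretboundprob} together with \cref{prop:NEconvergeEFG} bounds the exploitability of the averages, not their distance to a fixed equilibrium. Unless the equilibrium is unique and an error bound converts exploitability into distance, the averages can drift within the equilibrium set. In that case $\frac1T\sum_t\|\delta^t\|$ need not be $O(T^{-1/2})$. The paper does not resolve this either: it places $\hat\omega^t_i-\omega^*_i$ inside the ``martingale difference'' $M^{t+1}$, whose conditional mean is not zero.

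Two smaller points. With $\eta_t=c/\sqrt t$ and averaging from $t=1$, the standard bound is $O(\log T/\sqrt T)$, not $O(1/\sqrt T)$; Nemirovski et al.\ remove the logarithm with a constant step $c/\sqrt T$ or tail averaging. A finite $D$ also requires projecting $\theta$ onto a bounded set, which \cref{alg:compute_w_efg} does not do.
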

In the upper bound, $O(\cdot)$ hides polynomially-sized constant factors which depend on $|\Sigma|$, the maximal size of either player's sequence-form strategy set in the base game, and $|\calI|$, the maximal number of infosets belonging to either player in the base game. Note that these factors depend on $\calG^{\mathrm{orig}}$, and not the expanded game $\calG^\dagger$.
Combining the statements of Theorems~\ref{thm:siregretboundprob} and~\ref{thm:finitetimeRSAEFG} ensure that with high probability, running RSA in tandem with SI-CFR leads to a compact vector $W$ that encodes a strategy with low duality gap in 2p0s-EFGSAS. Moreover, asymptotic convergence to the theoretically optimal $W^*$ holds.


\begin{example}[Computing Compact NE using SI-CFR and RSA]
\label{ex:updown_corroborate}
To corroborate our proposed computational procedure, we run SI-CFR and RSA on the game from~\cref{ex:updown} with $\alpha=0$, $\lambda=0.5$. We repeat for 100 runs, sampling the stochastic action availabilities with different random seeds each run. In each plot, we also show the mean and central 95\% interval across all runs. \cref{fig:updown_regret} shows the maximum SI-regret over all infosets obtained by running SI-CFR, indicating that SI-CFR indeed obtains sublinear SI-regret.~\cref{fig:updown_compact} shows the computed values of $W_i$ obtained via the RSA procedure outlined in~\cref{sec:SA}. Finally, an additional empirical example based on a variant of Kuhn poker with stochastic action sets is given in~\cref{appsecs:add_example}.

\begin{figure}
  \centering
  \begin{minipage}{0.49\textwidth}
    \centering
    \includegraphics[width=\textwidth]{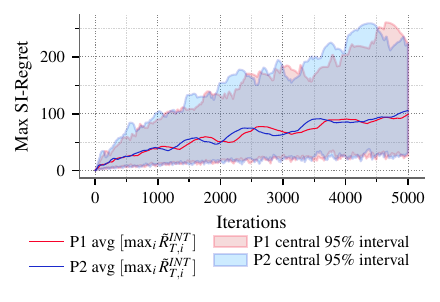}
    \caption{The \(\max_{\hat{a}_i, \hat{a}'_i,I}R^\mathrm{INT}_{t,I}\) accrued by SI-CFR for both players.}
    \label{fig:updown_regret}
  \end{minipage}
  \hfill
  \begin{minipage}{0.49\textwidth}
    \centering
    \includegraphics[width=\textwidth]{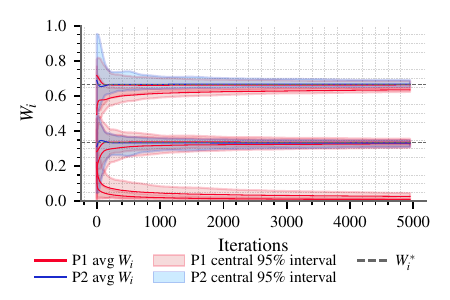}
    \caption{\(W^t_i\) entries obtained via the RSA procedure. The entries converge to the optimal \(W^*_i\) values from~\cref{example:lsg_compact}.}
    \label{fig:updown_compact}
  \end{minipage}
\end{figure}    
\end{example}

\section{Discussion and Future Work}\label{sec:discussion}

In this paper, we have formalized and studied the properties of strategy representation in EFGSAS, and provided a new algorithm for compact Nash equilibrium computation in 2p0s-EFGSAS. Our analysis leaves open several research directions. First, in this paper we focused on the ex-interim setting of action availabilities. However, it is important to study the computational properties of the ex-ante setting and conditions under which similar `compactification' of players' strategies can be derived.
Second, the duality gap bound for computing compact Nash equilibria is relatively loose due to the additional quadratic factor introduced by the RSA procedure. Improving upon this bound  using tailored stochastic approximation techniques is crucial. Finally, we have utilized SI-MWU as a sleeping internal regret minimizer for SI-CFR, rather than a `standard' regret matching algorithm. 
Deriving a suitable notion of `SI-RM' that obtains sublinear SI-regret in EFGSAS would be a significant computational contribution.
\bibliographystyle{splncs04}
\bibliography{ref}

\newpage
\appendix


\section*{Appendix}
This supplementary material contains 
additional preliminaries and algorithmic details for normal-form GSAS in~\cref{appsecs:addprelims}, additional details on SI-CFR in~\cref{appsecs:cfr-sicfr}, and proofs omitted from the main paper for space considerations in~\cref{appsecs:efgproofs}.


\section{Additional Preliminaries and Algorithms}\label{appsecs:addprelims}
\subsection{Normal-Form GSAS Setting and Properties}\label{appsec:gsasprelims}
In this section, we provide preliminaries and relevant properties of normal-form GSAS, as introduced and analyzed by~\cite{schwarz2026computing}.

Consider a $n$-player normal/strategic-form game $\mathcal{G}_\text{orig}$ with finite action set $A_i$, strategy profiles $A = A_1 \times \dots \times A_n$, and utility functions $u_i: A \rightarrow [-1, 1]$. 
In line with prevailing conventions, we denote $a=(a_1,\dots,a_n)$ to be a strategy profile, and as shorthand $u_i(a) = u_i(a_1,\dots,a_n)$. We also denote by $-i$ the set of players other than $i$, such that $u_j(a'_i, a_{-i}) = u_j(a_1,\dots, a_i', \dots, a_n)$.

\begin{definition}[GSAS]
    Given a game $\mathcal{G}_\text{orig} = (A_1, \dots, A_n, u_1, \dots, u_n)$, let $\mathcal{S}_i \subseteq 2^{A_i} \backslash \{ \varnothing \}$ such that $\mathcal{S} = \mathcal{S}_i \times \dots \times \mathcal{S}_n$, and $\rho \in \Delta_\mathcal{S}$ be a distribution over elements $\mathcal{S}$, such that $\rho(S), S \in \mathcal{S}$ gives the probability that stochastic action set $S$ is observed.
    A normal form \textbf{G}ame with \textit{\textbf{S}tochastic \textbf{A}ction \textbf{S}ets} is given by the tuple $\mathcal{G} = (\mathcal{G}_\text{orig}, \mathcal{S}, \rho)$.
\end{definition}

\begin{definition}[2p0s-GSAS]
    A two-player zero-sum GSAS (2p0s-GSAS) $\mathcal{G}$ is one where $\mathcal{G}_\text{orig}$ is two-player zero-sum, i.e., $n=2, u_1(a)=-u_2(a)$ for all action profiles $a \in A$.
\end{definition}

A GSAS proceeds as follows. At the start of the game, each player privately receives their action set $S_i \in \mathcal{S}_i$ from Nature based on $\rho$. Each then plays an action $a_i \in S_i$ simultaneously and receives a reward $u_i(a)$ based on the strategy profile $a \in A$. While each player $i$ observes its action set $S_i$ and might have full knowledge of $\rho$,
they do not observe their opponents' action set $S_{-i}$ at any point. 
The analogous independence assumption made for GSAS is:


\begin{assumption}
     $\rho(S) = \prod_{i}^{n} \rho_i(S_i)$ for some probability distributions $\rho_i: \mathcal{S}_i \rightarrow [0, 1]$, i.e., the availability of actions is independent across players.
    \label{assumption:product}
\end{assumption}
Despite this assumption, a GSAS remains large since $|\mathcal{S}_i|$ remains exponential in the size of $A_i$.
A pure strategy in a GSAS is a deterministic mapping $\pi_i: \mathcal{S}_i \rightarrow A_i$ where $\pi_i(S_i) \in S_i$. A \textit{mixed strategy} (or simply strategy) for player $i$ as a mapping $\pi_i: \mathcal{S}_i \to \Delta(A_i)$ such that $\operatorname{supp}(\pi_i(S_i)) \subseteq S_i$. 
A player's strategy gives, for every possible subset of actions they could observe, a distribution of actions corresponding to the observed action subset. 

Notice that this definition is compatible to strategies in \emph{Bayesian games}~\cite{harsanyi1968games}. In particular, since we assume that players have full knowledge of $\rho$, they can expand the game into a Bayesian game where action availabilities are modeled as `types'. Hence, strategies can be viewed as mappings from types of distributions over actions. We further clarify that while this connection is intuitive and immediate in normal-form GSAS, the nuance of the EFGSAS setting is that \emph{when} the players observe the realization of their action availabilities has a significant effect on the strategy representation.


Given a joint action set $S\in\mathcal{S}$, $\pi$ denotes the joint strategy of all players and $\pi(a \mid S) = \prod_{i \in \mathcal{I}} \pi_i(a_i \mid S_i)$ is the probability of an action profile $a$ for every $a\in A$.
The expected payoff to player $i$ is then given by
    $U_i(\pi) = \mathbb{E}_{S \sim \rho}\left[\mathbb{E}_{a \sim \pi(S)}[u_i(a)]\right]$,
where the inner expectation is over the actions sampled independently according to each player's strategy given their available action sets, and the outer expectation is over the stochastic action set $S$ drawn from $\rho$. The expected payoff of a specific action $a_i$ of player $i$ w.r.t. the ensemble of the opponents' strategies $\pi_{-i}$ is given by 
    $U_i(a_i; \pi_{-i}) = \mathbb{E}_{S \sim \rho}\left[\mathbb{E}_{a_{-i} \sim \pi_{-i}(S_{-i})}[u_i(a_i)]\right]$.

Similarly to the EFG setting, an $\epsilon$-NE of a GSAS is a strategy profile from which no player has incentive to unilaterally deviate, i.e.

\begin{equation}
    U_i(\pi_i, \pi_{-i}) \ge U_i(\pi_i', \pi_{-i}) - \epsilon, \quad \forall i \in [n],\ \forall \pi_i'.
\end{equation}
These definitions extend the classical normal-form Nash equilibrium to GSAS and are consistent with Bayesian games and Bayesian-Nash equilibria~\cite{harsanyi1968games}.

Under~\cref{assumption:product}, GSAS have several key properties which were shown in~\cite{schwarz2026computing}. For the sake of brevity, we review the most relevant properties to our work. First, a player's expected utility can be written in terms of the marginal distribution over actions $a_i\in A_i$ induced by strategy $\pi_i$. In particular, let
\begin{equation}
    \mathbb{P} \left[ a_i ; \rho_i, \pi_i \right] =
    \sum_{S_i \in \mathcal{S}_i}  \left( \rho_i(S_i)  
     \pi_i(a_i | S_i) \mathbbm{1}\{ a_i \in S_i \}\right).
\end{equation}
Then, the expected utility of a player for joint strategy profile $\pi$ is
\begin{align}
    U_i(\pi) = \sum_{a \in A} u_i(a) \prod_{j \in [n]} \mathbb{P} \left[ a_j ; \rho_j, \pi_j \right].
    \label{eq:depend-only-marginals}
\end{align}


An analogous definition of \emph{implementable} strategies to the EFGSAS setting was also given.
\begin{definition}\label{def:implementable}
Let $\mu_i \in \Delta(A_i)$ be a probability distribution over player $i$'s possible actions. $\mu_i$ is called \textit{implementable} if there exists a strategy $\pi_i:\mathcal{S}_i \to \Delta(A_i)$ such that, for every action $a_i \in A_i$, $\mu_i(a_i) = \mathbb{P}[a_i; \rho_i, \pi_i]$. 
In this case, we also say that $\pi_i$ \textit{implements} $\mu_i$, or that $\pi = (\pi_1,\dots, \pi_n)$ implements $\mu = (\mu_1, \dots, \mu_n)$ if $\pi_i$ implements $\mu_i$ for all $i$. The set of implementable strategies for player $i$ is denoted by $M_i \subseteq \Delta(A_i)$.
\end{definition}

Given a implementable strategy $\mu$, the expected payoff to player $i$ following $\mu$ is $U_i(\mu) = \mathbb{E}_{S \sim \rho}\left[\mathbb{E}_{a \sim \mu}[u_i(a)]\right]$. Then, 
let $U_i(a_i; \mu_{-i}) = \mathbb{E}_{S \sim \rho}\left[\mathbb{E}_{a_{-i} \sim \mu_{-i}}[u_i(a_i)]\right]$.
By definition of $\mu$, $\mathbb{E}_{S \sim \rho}\left[\mathbb{E}_{a \sim \pi(S)}[u_i(a)]\right] = \mathbb{E}_{S \sim \rho}\left[\mathbb{E}_{a \sim \mu}[u_i(a)]\right]$.
Similarly to EFGSAS, implementable strategies suffice to capture NE in GSAS.


\begin{proposition}
    \label{thm:ne-implementable-GSAS}
    Consider a GSAS where $\pi = (\pi_1, \dots, \pi_n)$ is a strategy profile that implements $\mu = (\mu_1, \dots, \mu_n)$. Then $\pi$ is a $\epsilon$-Nash equilibrium if and only if for all $i \in [n]$,
        $U_i(\mu) \geq \max_{\mu_i' \in M_i} U_i(\mu_i', \mu_{-i}) - \epsilon$.
\end{proposition}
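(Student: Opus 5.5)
The plan is to reduce everything to the multilinear expected-utility formula \eqref{eq:depend-only-marginals}. Under \cref{assumption:product}, for any strategy $\pi_i$ the expected utility of every player depends on $\pi_i$ only through its marginal $\mathbb{P}[\cdot;\rho_i,\pi_i]$. So I will first record two identities that follow directly from \eqref{eq:depend-only-marginals}:
\[
U_i(\pi)=U_i(\mu)
\qquad\text{and}\qquad
U_i(\pi_i',\pi_{-i})=U_i(\mu_i',\mu_{-i}),
\]
where $\mu_i'$ is the marginal induced by an arbitrary deviation $\pi_i'$. The second identity holds because $\pi_{-i}$ implements $\mu_{-i}$, and the marginals of distinct players enter the product in \eqref{eq:depend-only-marginals} independently.

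Next I will show that the set of achievable deviation marginals is exactly $M_i$. By \cref{def:implementable}, every $\pi_i'$ induces some $\mu_i'\in M_i$. Conversely, every $\mu_i'\in M_i$ is induced by some $\pi_i'$. Combining this with the identities above gives
\[
\sup_{\pi_i'}U_i(\pi_i',\pi_{-i})=\max_{\mu_i'\in M_i}U_i(\mu_i',\mu_{-i}).
\]
Here $M_i$ is the image of the compact set of strategies (a product of simplices) under a linear map. It is therefore compact, so the maximum is attained, and the supremum on the left is also attained.

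Both directions then follow. For ($\Rightarrow$), if $\pi$ is an $\epsilon$-NE, then for every $\mu_i'\in M_i$ I pick a $\pi_i'$ implementing it. This gives
\[
U_i(\mu)=U_i(\pi)\ge U_i(\pi_i',\pi_{-i})-\epsilon=U_i(\mu_i',\mu_{-i})-\epsilon.
\]
Taking the maximum over $\mu_i'$ yields the condition in the statement. For ($\Leftarrow$), any deviation $\pi_i'$ has a marginal $\mu_i'\in M_i$, so
\[
U_i(\pi_i',\pi_{-i})=U_i(\mu_i',\mu_{-i})\le\max_{M_i}U_i(\cdot,\mu_{-i})\le U_i(\mu)+\epsilon=U_i(\pi)+\epsilon.
\]

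The main obstacle is justifying \eqref{eq:depend-only-marginals} cleanly. I will expand
\[
U_i(\pi)=\sum_{S}\prod_j\rho_j(S_j)\sum_{a}u_i(a)\prod_j\pi_j(a_j\mid S_j).
\]
Exchanging sums and factorizing over $j$ by independence gives $\sum_a u_i(a)\prod_j\sum_{S_j}\rho_j(S_j)\pi_j(a_j\mid S_j)$. The indicator $\mathbbm{1}\{a_j\in S_j\}$ can be inserted because $\mathrm{supp}(\pi_j(S_j))\subseteq S_j$. Everything else is bookkeeping.
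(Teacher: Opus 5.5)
Your proof is correct and takes essentially the same approach as the paper. The paper does not prove this GSAS statement itself; it imports it from \cite{schwarz2026computing}. Its proof of the EFGSAS analogue (\cref{prop:NE-equiv}) argues as you do: utilities depend on strategies only through their marginals via the multilinear formula \eqref{eq:depend-only-marginals}, which identifies $U_i(\pi)$ with $U_i(\mu)$ and the maximum over deviating strategies with the maximum over $M_i$. Your write-up is more careful than that analogue proof, since you derive the formula from \cref{assumption:product}, show that $M_i$ is exactly the set of attainable deviation marginals and is compact (so the maximum exists), and spell out both directions.
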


A version of the minimax theorem also holds for implementable strategies in 2p0s-GSAS under~\cref{assumption:product}.

\begin{proposition}
    \label{thm:ne-2p0s}
    For a 2p0s-GSAS, a strategy profile $(\pi_1^*, \pi^*_2)$ is a NE if and only if their associated $(\mu^*_1, \mu^*_2)$ is a saddle point of the function $U=U_1=-U_2$, i.e.,
    \begin{align*}
        \mu^*_1 = \argmax_{\mu_1 \in M_1} \min_{\mu_2 \in M_2} U(\mu_1, \mu_2), \quad \mathrm{and} \quad
        \mu^*_2 = \argmin_{\mu_2 \in M_2} \max_{\mu_1 \in M_1} U(\mu_1, \mu_2),
    \end{align*}
        where
        $\max_{\mu_1 \in M_1} \min_{\mu_2 \in M_2} U(\mu_1, \mu_2) = \min_{\mu_2 \in M_2} \max_{\mu_1 \in M_1} U(\mu_1, \mu_2)$.
\end{proposition}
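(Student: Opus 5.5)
The plan is to reduce the claim to the classical minimax theorem on the sets $M_1, M_2$ of implementable marginals. Then we translate between Nash equilibria of the GSAS and saddle points using \cref{thm:ne-implementable-GSAS} with $\epsilon = 0$.

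\textbf{Step 1: structure of $M_i$ and $U$.} First I will show that each $M_i$ is a nonempty, convex, compact polytope. The set $\Pi_i$ of GSAS mixed strategies $\pi_i:\mathcal{S}_i\to\Delta(A_i)$ with $\mathrm{supp}(\pi_i(S_i))\subseteq S_i$ is a finite product of (faces of) simplices, hence a nonempty compact convex polytope. The map $\pi_i \mapsto \bigl(\mathbb{P}[a_i;\rho_i,\pi_i]\bigr)_{a_i\in A_i}$ is linear in the entries $\pi_i(a_i\mid S_i)$, and by \cref{def:implementable} $M_i$ is exactly the image of $\Pi_i$ under this map. So $M_i$ is a polytope contained in $\Delta(A_i)$. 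By \eqref{eq:depend-only-marginals} and \cref{assumption:product}, $U(\mu_1,\mu_2)=\sum_{a}u_1(a)\mu_1(a_1)\mu_2(a_2)=\mu_1^\top A\mu_2$ is bilinear in the marginals. Von Neumann's (or Sion's) minimax theorem on $M_1\times M_2$ then gives
\[
\max_{\mu_1\in M_1}\min_{\mu_2\in M_2}U=\min_{\mu_2\in M_2}\max_{\mu_1\in M_1}U=:v,
\]
with both optima attained by compactness and continuity.

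\textbf{Step 2: NE $\Leftrightarrow$ mutual best responses in $M$.} By \cref{thm:ne-implementable-GSAS} with $\epsilon=0$, $(\pi_1^*,\pi_2^*)$ is a NE iff two conditions hold: $U(\mu_1^*,\mu_2^*)\ge U(\mu_1,\mu_2^*)$ for all $\mu_1\in M_1$, and $-U(\mu_1^*,\mu_2^*)\ge -U(\mu_1^*,\mu_2)$ for all $\mu_2\in M_2$. Together these say exactly that $(\mu_1^*,\mu_2^*)$ is a saddle point of $U$ on $M_1\times M_2$.

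\textbf{Step 3: saddle points $\Leftrightarrow$ max-min/min-max optimizers.} This is the standard argument, using the value equality from Step 1. If $(\mu_1^*,\mu_2^*)$ is a saddle point, then
\[
\min_{\mu_2}U(\mu_1^*,\mu_2)=U(\mu_1^*,\mu_2^*)=\max_{\mu_1}U(\mu_1,\mu_2^*).
\]
Combined with the weak duality $\max\min\le\min\max$, this shows $\mu_1^*$ attains the max-min and $\mu_2^*$ attains the min-max. Conversely, suppose $\mu_1^*$ is a max-min optimizer and $\mu_2^*$ a min-max optimizer. Then
\[
v=\min_{\mu_2}U(\mu_1^*,\mu_2)\le U(\mu_1^*,\mu_2^*)\le \max_{\mu_1}U(\mu_1,\mu_2^*)=v,
\]
so equality holds throughout and the pair is a saddle point. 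Because optimizers need not be unique, I will read ``$=\argmax$'' as membership in the argmax set.

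\textbf{Main obstacle.} The only genuinely nontrivial point is Step 1. We must make sure the minimax theorem is applied on the right domain, namely the implementable marginals rather than all of $\Delta(A_i)$. This requires convexity and compactness of $M_i$, which follows from linearity of the marginalization map. It also requires that utilities depend only on marginals, which is where \cref{assumption:product} is essential.
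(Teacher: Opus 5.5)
Your proposal is correct and follows the same route as the paper. For this GSAS statement the paper defers to prior work, and for its EFGSAS analogue (\cref{prop:minimaxEFGSAS}) it argues exactly this way: utility depends only on the implemented marginals, so von Neumann's minimax theorem on $M_1\times M_2$, combined with the NE characterization of \cref{thm:ne-implementable-GSAS}, gives the claim. Your version is more careful than the paper's sketch: you explicitly check that $M_i$ is a nonempty convex compact polytope (as the linear image of $\Pi_i$), which is what licenses applying the minimax theorem on the restricted domain, and you spell out the saddle-point/optimizer equivalence.
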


Moreover, it holds that any implementable NE in $\calG^{\mathrm{orig}}$ correspond to NE in $\calG$. Note that even in the 2p0s-case, \cref{thm:gsas-contains-ne} requires the NE in $\mathcal{G}_\text{orig}$, $x^*$, to be implementable for both players, i.e., $x^*_i$ being implementable does not imply a solution to the max-min problem (or optimal strategy for player $i$).
\begin{proposition}
Consider GSAS $\mathcal{G} = (\mathcal{G}_\text{orig}, \mathcal{S}, \rho)$. Let $x^*=(x^*_1,\dots,x^*_n)$ be a $\epsilon$-NE of $\mathcal{G}_\text{orig}$, where $x^*_i \in \Delta(A_i)$. 
 If $\mu^*=x^*$ is implementable in $\mathcal{G}$ by $\pi^*=(\pi_1, \dots, \pi_n)$, then $\pi^*$ is a $\epsilon$-NE in $\mathcal{G}$. 
\label{thm:gsas-contains-ne}
\end{proposition}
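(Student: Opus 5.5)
The plan is to reduce everything to marginals and then apply \cref{thm:ne-implementable-GSAS}. The key fact is \eqref{eq:depend-only-marginals}. Under \cref{assumption:product}, the GSAS utility of any strategy profile \(\pi\) depends only on the induced marginals \(\mathbb{P}[a_j;\rho_j,\pi_j]\). Moreover, as a function of those marginals it is exactly the multilinear expected utility of the base game \(\mathcal{G}_\text{orig}\) evaluated at the mixed profile given by the marginals. So for any implementable profile \(\mu\), the quantity \(U_i(\mu)\) in the GSAS coincides with the base-game expected utility of the mixed strategy profile \(\mu\in\prod_j\Delta(A_j)\). I will state this identification explicitly first, since every later step uses it.

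Next I verify the condition of \cref{thm:ne-implementable-GSAS} for \(\mu^*=x^*\), which \(\pi^*\) implements by hypothesis. Fix a player \(i\) and any \(\mu_i'\in M_i\). Since \(M_i\subseteq\Delta(A_i)\), the vector \(\mu_i'\) is a legitimate mixed strategy of \(\mathcal{G}_\text{orig}\). Because \(x^*\) is an \(\epsilon\)-NE of \(\mathcal{G}_\text{orig}\), we have \(u_i(\mu_i',x^*_{-i})\le u_i(x^*)+\epsilon\) in the base game. By the identification above, this reads
\[
U_i(\mu_i',\mu^*_{-i})\le U_i(\mu^*)+\epsilon
\]
in the GSAS. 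Taking the maximum over \(\mu_i'\in M_i\), which is attained since \(M_i\) is compact (or else using a supremum), gives exactly the inequality required in \cref{thm:ne-implementable-GSAS}. That proposition then yields that \(\pi^*\) is an \(\epsilon\)-NE of \(\mathcal{G}\).

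The only delicate point is making sure the identification of utilities is legitimate for mixed deviations \(\mu_i'\) paired with \(\mu^*_{-i}\). This requires the product structure of \(\rho\): player \(i\)'s action set must be independent of the others', so the joint distribution of realized actions factorizes into the marginals. That is precisely what \eqref{eq:depend-only-marginals} encodes, so once it is cited no further work is needed.

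Restricting deviations to \(M_i\) rather than all of \(\Delta(A_i)\) only helps, since the base-game condition covers a larger set. This also explains why implementability of \(x^*\) is needed for all players. The opponents' marginals must actually be realizable, so that \(U_i(\cdot,\mu^*_{-i})\) is the true GSAS payoff against \(\pi^*_{-i}\).
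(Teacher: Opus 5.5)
Your proof is correct. It rests on \eqref{eq:depend-only-marginals}, which, as you point out, genuinely needs \cref{assumption:product}. That identity makes GSAS payoffs of implementable profiles coincide with base-game payoffs of the marginals. So the $\epsilon$-NE property of $x^*$ in $\mathcal{G}_\text{orig}$, restricted to deviations in $M_i\subseteq\Delta(A_i)$, is exactly the condition of \cref{thm:ne-implementable-GSAS}.

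The paper gives no proof of this proposition; it is quoted as background from \cite{schwarz2026computing}. Your reduction to marginals is the natural route, and it mirrors how the paper handles the EFGSAS analogue, \cref{prop:NE-equiv}.

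You could also skip \cref{thm:ne-implementable-GSAS} entirely. For any deviation $\pi_i'$ with marginal $\mu_i'$, write $U_i(\pi_i',\pi^*_{-i}) = \mathbb{E}_{a\sim(\mu_i',x^*_{-i})}[u_i(a)] \le \mathbb{E}_{a\sim x^*}[u_i(a)]+\epsilon = U_i(\pi^*)+\epsilon$ directly.
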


The above discussion indicates that for the purposes of equilibrium representation, and analogous to the EFGSAS setting, it suffices to work over the space of $M_i$ rather than the larger set of possible $\pi_i$ for the purposes of equilibrium representation. A key result is that in a GSAS satisfying~\cref{assumption:product}, every $\mu_i\in M_i$ can be implemented by a (possibly non-unique) compact, polynomially-sized vector $w_i$. This is formally stated in~\cref{lem:compact}.


\subsection{SI-MWU}
SI-MWU is a modification to the ubiquitous multiplicative weights update (MWU) algorithm~\cite{freund1999adaptive} that provably minimizes SI-regret in normal-form GSAS. SI-MWU is also closely related to the SI-EXP3 algorithm which was developed by~\cite{gaillard2023one} for the sleeping bandit setting~\cite{blum2007external,kleinberg2010regret}.
Here, we present the standard implementation of the algorithm, which utilizes MWU as a subroutine.
\begin{algorithm}[h]
\caption{SI-MWU}
\label{alg:si_mwu}
\begin{algorithmic}[1]
\renewcommand{\algorithmicrequire}{\textbf{Input:}}
\renewcommand{\algorithmicensure}{\textbf{Output:}}
\Require $E \gets \{a_i\rightarrow a_i' : a_i, a_i'\in A_i, a_i\neq a_i'\}$;
\Require $\Tilde{q}^1 \gets \left(\frac{1}{|E|}, \ldots, \frac{1}{|E|}\right)\in \Delta(E)$;
\For{$t = 1,2,\ldots,T$}
    \State Observe the set of available action $S_i^t$;
    \State Normalization among awake experts: \[q^t(a_i\rightarrow a_i') \gets \frac{\Tilde{q}^t(a_i\to a_i')\mathbbm{1}\{a_i'\in S_i^t\}}{\sum_{b_i\neq b_i'}\Tilde{q}^t(b_i\to b_i')\mathbbm{1}\{b_i'\in S_i^t\}}, \forall a_i\neq a_i';\]
    \State Calculate $\pi_i^t(S_i^t)$ by solving system $\pi_i^t(S_i^t) = \sum_{a_i\neq a_i'}\pi^t_{i, a_i\to a_i'}(S_i^t)q^t(a_i\to a_i')$;
    \State Play $\pi_i^t(S_i^t)$ and observe $u_i(\cdot, \pi_{-i}^t)$;
    \State Update $\Tilde{q}^{t+1}(a_i\to a_i') \propto \Tilde{q}^{t}(a_i\to a_i')e^{(-\eta \ell^t(a_i\to a_i'))}$\Comment{MWU with $\ell^t$ as in \eqref{eq:expert_loss}}; 
\EndFor
\end{algorithmic}
\end{algorithm}

Sleeping Internal Regret MWU (SI-MWU) is a two-level procedure outline in~\cref{alg:si_mwu} where the upper level manages a vector $\pi^t_i(S_i^t)\in\Delta(A_i)$ where $\operatorname{supp}(\pi^t_i(S_i^t)) \subseteq S_i^t$. 
In the lower level, the algorithm maintains $|A_i|(|A_i|-1)$ `experts' indexed by $a_i\rightarrow a_i'$ with $a_i, a_i'\in A_i, a_i\neq a_i'$, where the expert $a_i\rightarrow a_i'$ recommends switching to $a_i'$ whenever $a_i$ is played. In expectation, this is equivalent to switching from $\pi_i^t(S^t_i)$ to a strategy $\pi^{t}_{i, a_i\to a_i'}(S^t_i)\in\Delta(A_i)$ where all probability mass of $\pi^t_i(S^t_i)$ on $a_i$ is moved to $a_i'$. If, at the lower level, the external regret with respect to all action swaps $a_i\to a_i'$ vanishes, then it follows that the SI-regret also vanishes. Hence, MWU is utilized at the lower level, with loss function defined as:
\begin{equation}
    \label{eq:expert_loss}
    \ell^t(a_i\rightarrow a_i') = 
    \begin{cases}
        \hat \ell^t(\pi^t_{i, a_i\to a_i'}(S_i^t), a_{-i}^t), &\text{if } a_i'\in S_i^t\\
        \hat \ell^t(\pi_i^t(S_i^t), a_{-i}^t) &\text{otherwise,}
    \end{cases}
\end{equation}
where for any $p\in \Delta(A_i)$, $\hat \ell^t(p, a_{-i}^t)$ is given by 
    $\hat \ell^t(p, a_{-i}^t) = 1 -\sum_{a_i\in A_i}p(a_i){u}_i(a_i, a^t_{-i})$.

\cite{schwarz2026computing} showed that SI-MWU achieves sublinear SI-regret in normal-form GSAS. In particular, for any sequence of available action sets $\{S_i^t\}_t$ and payoffs $\{u_i(\cdot,a_{-i}^t)\}_t$ in a GSAS, a player using SI-MWU with stepsizes $\eta_t = \sqrt{2\log|A_i|}/\sqrt{t}$ enjoys SI-regret bounded by $R^{\mathsf{INT}}_{T,i}(a_i\rightarrow a_i') \leq O(\sqrt{T\log|A_i|})$ for all $a_i, a_i'\in A_i, a_i\neq a_i'$. Notice here that the SI-regret is indexed for every pair of available actions in the game, and SI-MWU guarantees sublinear regret for any pair of available actions, under appropriate choice of stepsizes.

\section{CFR and SI-CFR}\label{appsecs:cfr-sicfr}
\subsection{CFR Preliminaries}\label{appsec:cfrprelims}
\begin{algorithm}[h]
\caption{(External) Regret Minimizer via Scaled Extension}
\label{alg:cfr-basic}
\begin{algorithmic}[1]
\renewcommand{\algorithmicrequire}{\textbf{Input:}}
\renewcommand{\algorithmicensure}{\textbf{Output:}}
\Require Regret minimizer over $\calX\subseteq\mathbb{R}^n_{\ge 0}$: $\calR^{\mathsf{EXT}}_{\calX}$
\Require Regret minimizer over $\Delta$: $\calR^{\mathsf{EXT}}_{\Delta}$
\Require $f: \calX\to\mathbb{R_+}$
\For{$t = 1,2,\ldots,T$}
    \State $x^t \gets$ strategy from $\calR^{\mathsf{EXT}}_\calX$  ;
    \State $y^t \gets$ strategy from $\calR^{\mathsf{EXT}}_{\Delta}$ (i.e. RM/MWU)  ;
    \State Play $\sigma^t\coloneqq (x^t, f(x^t)\cdot y^t)$
    \State Receive utility $u^t \coloneqq (u_\calX^t, u_\Delta^t)$ ;
    \State Pass utility $u_\Delta^t$ to $\calR^{\mathsf{EXT}}_{\Delta}$;
    \State Pass utility $u_\calX^t + a\cdot u^t_\Delta(y^t)$ to $\calR^{\mathsf{EXT}}_{\calX}$;
\EndFor
\end{algorithmic}
\end{algorithm}

\cref{alg:cfr-basic} can be composed recursively to construct an external regret minimizer for any set that is expressed via a chain of scaled extensions (cf.~\cref{def:scaledextension}), such as the polytope of sequence-form strategies in a standard EFG. When applied to the polytope of sequence-form strategies, this gives the standard CFR algorithm of~\cite{zinkevich2007regret} if all external regret minimizers for the individual simplexes in the chain of scaled extensions are implemented using the regret matching (RM) algorithm~\cite{hart2000simple}. The convergence of the procedure nonetheless holds for any external regret minimizers, though RM is typically used in practice
since it does not require any parameter tuning. Our proposed algorithm, SI-CFR, uses SI-MWU in place of a `sleeping internal regret matching' (SI-RM). Extending our analysis by analyzing SI-regret minimization under an appropriate notion of SI-RM is left for future work.


\subsection{Detailed Description and Discussion of SI-CFR}\label{appsec:sicfr}

\begin{algorithm}
\caption{SI-CFR Pseudocode}
\label{alg:sicfrpseudocode}
\begin{algorithmic}[1]
\renewcommand{\algorithmicrequire}{\textbf{Input:}}
\renewcommand{\algorithmicensure}{\textbf{Output:}}
\Require One SI-regret minimizer $\calR^{\mathrm{INT}}_{I}$ for each infoset $I\in\calI$
\Require A set of experts $E_I \gets \{a\rightarrow a' : a, a'\in A_{I,i}, a\neq a'\}$ at each  $I\in\calI$
\Require $\Tilde{q}^1 \gets \left(\frac{1}{|E_I|}, \ldots, \frac{1}{|E_I|}\right)\in \Delta(E_I)$;
\Require Action availability set $\calS$
    \Function{NextStrategy}{$\calS$}
    \For{each infoset $I\in \calI$}
        \State $\beta_I^t \in \Delta(A_I)\gets \calR^{\mathrm{INT}}_I.\textsc{NextStrategy}(S_I^t)$
    \EndFor
    \State $x^t\in \mathbf{0}^{|\Sigma|}$
    \For{each infoset $I\in\calI$ in top-down order}
    \For{each action $a\in S^t_I$}
    \If{$p_I = \varnothing$}
    \State $x^t[Ia] =  \beta_I^t[a]$
    \Else
    \State $x^t[Ia] = x[p_I]\cdot \beta_I^t[a]$
    \EndIf
    \EndFor
    \EndFor
    \State \Return $x^t$
    \EndFunction
    \Statex \hrulefill

    \Function{ObserveUtility}{$\ell\in\mathbb{R}^{|\Sigma|}$}
        \State $V^t\gets$  empty dictionary 
        \State $V^t[\bot] \gets 0$
        \For{each node in the tree $v\in \calI \cup \calK$ in bottom-up order}
        \If{$v\in I$}
        \State Let $I=v$
        \State $V^t[I] = \sum_{a\in A_I} \beta^t_I[a] \cdot \left( \ell^t[Ia] + V^t[\rho(I,a)]\right)$
        \Else 
        \State Let $k=v$
        \State $V^t[k] \gets \sum_{o\in O_k} V^t[\rho(k,o)]$
        \EndIf
        \EndFor
        \For{each infoset $I\in\calI$}
        \State $\ell^t_I \gets \mathbf{0}\in\mathbb{R}^{|A_I|}$
        \For{each action $a\in A_I$}
         \State   $\ell^t_I[a] \gets \ell^t[Ia] + V^t[\rho(I,a)]$
        \EndFor
        \State $\calR^{\mathrm{INT}}_I$.\textsc{NextStrategy}($\ell^t_I$)
        \State Update $\tilde{q}^t$ using $\ell^t_I$ according to $\calR^{\mathrm{INT}}_I$
        \EndFor
    \EndFunction
\end{algorithmic}
\end{algorithm}

A natural idea for solving 2p0s-EFGSAS is to use a sample efficient CFR-type algorithm such as external-sampling MCCFR~\cite{lanctot2009monte} to obtain sublinear regret bounds for 2p0s-EFGSAS. Unfortunately, several issues arise when applying MCCFR to the expanded game tree: (i) representing a sequence-form strategy is still exponential, and (ii) the convergence bound of MCCFR contains a factor that is upper bounded by the number of infosets in the expanded game $\calG^\dagger$, which is exponential in our case. 

We propose a modified version of CFR which we call SI-CFR (\cref{alg:cfrsimwu}) that exploits the additional structure afforded by EFGSAS, allowing it to be run on the base game $\calG^{\mathrm{orig}}$ directly. Consider a decision point $h$ for a player with action set $A_h$. In the naive sequence-form expansion, the player's strategy has to specify a behavioral strategy for each possible action subset. By utilizing the compact structure outlined earlier, we can instead perform the SI-CFR update in  implementable sequence-form space. In particular, our aim is to minimize the SI-regret over the whole game tree. Assume that $\calR^{\mathsf{INT}}_{\Delta}$ is an SI-regret minimizer on the simplex (in SI-CFR we use SI-MWU). Then, one can recursively perform `scaled extensions' of the simplex that constructs sequence-form strategies in the game tree (see e.g.~\cite{farina2019efficient} for more details).

The full specification of SI-CFR requires some additional notation for clarity. First, SI-CFR is run on a player's treeplex associated with $\calG^{\mathrm{orig}}$, which is partitioned into infosets/decision nodes $\calI$ and observation nodes $\calK$. At a decision node $I\in\calI$, the player selects an action $a\in A_I$. 
At observation node $k\in\calK$, player observes a signal from Nature/other players denoted by $o\in O_k$. In EFGSAS, this observation node also includes the action availability of the upcoming infoset. $\rho$ denotes a transition function such that (i) selecting action $a$ at $j$ results in the subsequent node $\rho(I,a) \in \calI \cup \calK\cup \{\bot\}$, and (ii) observing $o$ at $k$ results in subsequent node $\rho(k,o) \in \calI \cup \calK\cup \{\bot\}$. The set of sequences are $\Sigma = \{(I,a) : I\in\calI, a\in A_I \}$, and $p(I)$ denotes the parent of a decision node $I$. If the player does not act before $I$ (i.e. if $I$ is the root node or there are only observation points preceding it), then $p(I) = \varnothing$.


SI-CFR needs to maintain a set of $\vert A_I\vert^2$ `experts' at each infoset $I$ that encode all possible action swaps in the infoset. In~\cref{alg:sicfrpseudocode}, the $\textsc{NextStrategy}$ function invokes SI-regret minimizers at each decision point/infoset of the player, which returns a valid sequence-form strategy over the whole treeplex, taking into account the action availabilities. In our setting, in each traversal of the game tree, SI-CFR runs SI-MWU at each infoset, and obtains a behavioral strategy $\beta_{I}$ at each infoset. Here, we require the ex-interim action disclosure setting, so action availabilities are sampled whenever the player reaches an infoset. At the end of the traversal, a valid sequence-form strategy $x^t$ is returned. 
Then, the $\textsc{ObserveUtility}$ function updates `counterfactual' utilities in a bottom-up fashion, then propagates these utilities through the tree, which are then used to update the SI-regret minimizer experts at each infoset.
In particular, since we use SI-MWU, the experts are updated using the MWU subroutine utilizing the counterfactual utilities as the loss function as per~\cref{alg:si_mwu} (Line 6).

\section{Omitted Proofs and Technical Details}\label{appsecs:efgproofs}

\subsection{EFGSAS Ex-Interim Expansion Details}
\label{appsubsec:exinterimexpansion}
Recall that in an ex-interim EFGSAS \(\calG\), for each information set \(I \in \calI_i\) assigned to player \(i\), the action set available at that infoset is given by \(A_{I,i}\). Moreover, \(\rho_{I,i}\in \Delta(\calS_{I,i})\) is the probability distribution of observing each action availability subset \(\calS_{I,i}=2^{|\calA_{I,i}|}\backslash\{\varnothing\}\).
Any ex-interim EFGSAS \(\calG\) meeting~\cref{assumption:productefg} can be naively expanded into an EFG \(\calG^\dagger\) using the following procedure in a top-down traversal:
\begin{enumerate}
	\item Replace the infoset $I\in\calI_i$ with a chance node with children encoding all subsets $\calS_{I,i}$. Each edge is associated with the corresponding probability given by $\rho_{I,i}\in\Delta(\calS_{I,i})$.
	\item Each child of the chance node then incorporates the realizations of action availability sets $S_{I,i}\in\calS_{I,i}\subseteq\calA_{I,i}$. In particular, each action which is not available is removed from the corresponding action set according to the realized availability set $S_{I,i}$.
    For all nodes \(h\) within the subtree of \(S_{I,i}\) and are a decision point of \(i\), move from infoset \(I_h\) to a new infoset \((I_h,S_{I,i})\) (i.e. preserve perfect recall). All nodes within the subtree of \(S_{I,i}\) that are not a decision point of \(i\) stay in the same infoset (i.e. do not observe \(S_{I,i}\))
\end{enumerate}


\subsection{Ex-Interim Expansion Size Analysis}
\label{appsubsec:expansion_size}
We analyze the size of a player's decision problem in an EFGSAS. Consider the case where at each decision point, the player has $a$ actions with action availabilities encoding every non-empty subset of actions, and the game tree has depth $d$.

Each infoset in the ex-interim expansion is replaced with a chance node enumerating all non-empty subsets of actions. If a child of the chance node has \(k\) available actions, after making a decision at that node there are \(k\) possible infosets the player could arrive at. Thus, we expand infoset \(I\) into:
\begin{equation}
    \sum_{S_{I,i} \in 2^{[a]}\setminus \varnothing} \lvert S_{I,i} \rvert = a2^{a-1}\\
\end{equation}

Performing this expansion for every infoset, with \(l\) the layer of the tree considered, implies that the \(l\)-th layer contributes \((a2^{a-1})^l\) many sequences. Thus, over the entire treeplex, the number of sequences can be lower bounded as
\begin{align}
    \lvert \Sigma^\dagger_i \rvert &= \sum_{l=1}^d (a2^{a-1})^l\\
    &= \frac{(a{2}^{a-1})^{d+1}-a 2^{a-1}}{a{2}^{a-1} - 1}\\
    & \geq (a 2^{a-1})^{d}.
\end{align}

\subsection{Proof of Proposition~\ref{prop:decisionmerging}}
We use an exchange argument, showing we can incrementally modify any strategy \(\beta_i\) to one which, at all infosets that correspond to observing the same \(S_{I,i}\), plays the same distribution.
In a bottom up traversal in the EFGSAS game tree of player \(i\)'s infosets, for an infoset \(I \in \calI_i\) and action availability set \(S_{I,i}\), we consider the set \(M(I,S_{I,i}) = \{I^\dagger \in {\calI^\dagger}_i \mid I^\dagger \text{ an expanded infoset for observing } S_{I,i} \text{ at } I \in \calI_i\}\).
We  construct a new behavioral strategy \(\beta'_i\), which (i) at an infoset \(I^\dagger \in M(I,S_{I,i})\) outputs the `average' strategy played by \(\beta_i\) across \(M(I,S_{I,i})\), and (ii) at all other infosets plays identically to \(\beta_i\).
We show for any joint strategy \(\beta_{-i}\), \(U_i(\beta_i, \beta_{-i})=U_i(\beta'_i, \beta_{-i})\).
Repeatedly performing this procedure over all the infosets gives us the desired strategy from which we can extract \(b_i\).



We start by making a key definition and proving a key lemma.
We let \(M(I,S_{I,i})\) be the infosets in \(\calI^\dagger\) which represent observing \(S_{I,i}\) at \(I\). Similarly, let \(M^\dagger(I^\dagger)\) be the infosets that represent observing the same action availability set at the same infoset as \(I^\dagger\).
Formally, for an infoset \(I \in \calI_i\) and an action availability set \(S_{I,i} \in \calS_{I,i}\) let
\begin{align}
    M(I,S_{I,i}) &:= \{I^\dagger \in {\calI^\dagger}_i \mid I^\dagger \text{ an infoset for observing } S_{I,i} \text{ at } I \in \calI_i\}\\
    M^\dagger(I^\dagger) &:= M(\mathrm{obs}(I^\dagger), S_{\mathrm{obs}(I^\dagger), i})
\end{align}
where \(S_{\mathrm{obs}(I^\dagger), i}\) is the action availability set \(I^\dagger\) corresponds to observing.

\begin{lemma}
    \label{lemma:merging_exchange}
    Let \(\beta_i\) be a behavioral strategy for \(\calG^\dagger\),
    \(I \in \calI_i\) an infoset in \(\calG\) and \(S_{I,i} \in \calS_{I,i}\). Let \(I^\dagger \in \calI^\dagger_i\) be an infoset in \(\calG^\dagger\) corresponding to observing \(S_{I,i}\) at \(I\).
    
    If for any child infoset \({I^\dagger}' \in \calI_i\) of \(I^\dagger\) we have
    \begin{equation}
    \label{eqn:merging_exchange_lemma_condition}
        \forall {I^\dagger}'' \in M^\dagger(I^\dagger), \; \beta_i({I^\dagger}')=\beta_i({I^\dagger}''),
    \end{equation}
    then we can construct a new behavioral strategy \(\beta'_i\) for \(\calG^\dagger\) such that
    \[\forall \beta_{-i}, U_i(\beta_i, \beta_{-i}) = U_i(\beta'_i, \beta_{-i}).\]
\end{lemma}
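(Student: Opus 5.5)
The approach is to take \(\beta'_i\) to be the reach-weighted average of \(\beta_i\) over the merged class \(M^\dagger(I^\dagger)\), and to check that this leaves every payoff unchanged. Concretely, for each \(J \in M^\dagger(I^\dagger)\) let \(r(J)\) be player \(i\)'s own reach probability of \(J\) under \(\beta_i\) in \(\calG^\dagger\): the product of \(\beta_i\)'s action probabilities along \(i\)'s sequence into \(J\), which is well defined by perfect recall. Define \(\beta'_i(J) = \bar\beta := \sum_{J'} r(J')\beta_i(J') / \sum_{J'} r(J')\) for all \(J \in M^\dagger(I^\dagger)\), with an arbitrary common distribution if the denominator vanishes. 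Set \(\beta'_i = \beta_i\) everywhere else. Since every \(J\) in the class observes the same \(S_{I,i}\), we have \(\mathrm{supp}(\bar\beta)\subseteq S_{I,i}\), so \(\beta'_i\) is a valid behavioral strategy.

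The first step is to decompose the utility. Fix any \(\beta_{-i}\). Because the subtrees rooted at distinct members of \(M^\dagger(I^\dagger)\) are disjoint (no member is an ancestor of another, as they all sit at the same base infoset \(I\)), we can write
\[U_i(\beta_i,\beta_{-i}) = C + \sum_{J \in M^\dagger(I^\dagger)} r(J)\sum_{a\in S_{I,i}} \beta_i(a\mid J)\, v(J,a).\]
Here \(C\) collects all leaves not below the class, and \(v(J,a)\) is the counterfactual value of playing \(a\) at \(J\): the sum over nodes \(h\in J\) of opponent and chance reach times the continuation value below \((h,a)\) under \(\beta_i,\beta_{-i}\). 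The term \(C\) and the weights \(r(J)\) depend only on \(\beta_i\) outside the class and its subtrees, so they are the same under \(\beta'_i\).

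The key step, and the one I expect to be the main obstacle, is to show that \(v(J,a)=v(a)\) is the same for every \(J\) in the class. Two facts drive this. First, the hypothesis~\eqref{eqn:merging_exchange_lemma_condition} says \(\beta_i\) already agrees across merged copies below \(I^\dagger\). This is the inductive invariant of the bottom-up traversal: all descendant classes have already been merged. Second, by~\cref{assumption:productefg} the availability draws at later infosets are independent of the history, and utilities do not depend on the realization of \(\rho\). So the continuation subgame below \((h,a)\) looks identical across copies, up to relabeling of infosets by earlier observations, and \(\beta_i\) acts identically there.

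The non-player part of the reach also needs care, because the nodes of different copies \(J\) differ only through \(i\)'s earlier availability observations. I would argue as follows. The opponent's infosets do not observe \(i\)'s availabilities (step 2 of the expansion). Chance probabilities factor by independence. Hence, for each base history \(h_0\in I\), the opponent and chance weight splits as \(\rho(\text{past observations})\cdot q(h_0)\). The factor \(\rho(\cdot)\) is absorbed into \(r(J)\) by including chance availability edges in \(i\)'s reach. One must choose the normalization consistently so that \(v(J,a)=\sum_{h_0} q(h_0)\,\mathrm{cont}(h_0,a)\) is independent of \(J\). With this done, the value term equals \(\sum_a v(a)\sum_J r(J)\beta_i(a\mid J) = \big(\sum_J r(J)\big)\sum_a \bar\beta(a)v(a)\), which is exactly the value under \(\beta'_i\). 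Therefore \(U_i(\beta_i,\beta_{-i})=U_i(\beta'_i,\beta_{-i})\).

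Finally, \(\beta'_i\) satisfies the hypothesis of the lemma at the class of the parent. This lets \cref{prop:decisionmerging} follow by iterating bottom-up and reading off \(b_i(S_{I,i})\) as the common value.
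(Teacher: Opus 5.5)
Your proposal is correct and takes essentially the same route as the paper. Both arguments replace \(\beta_i\) on the class \(M(I,S_{I,i})\) by its reach-weighted average (with \(i\)'s reach including the past availability probabilities, as in the paper's \(p(a,I^\dagger)\)), split terminals into those through the class and those not, and use the already-merged descendants plus the opponent's blindness to \(i\)'s availability sets to show that the through-class term depends only on \(\sum_J r(J)\beta_i(\cdot\mid J)\). Your counterfactual-value form \(v(J,a)\), with an explicitly normalized average, is a tidier packaging of the paper's leaf-by-leaf factorization over base terminals \(\hat z\); strictly, \(v(J,a)\) should also sum over the opponent's own availability histories, since the opponent's continuation strategy depends on them, but this does not affect independence from \(J\).
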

\begin{proof}
We define \(p(a,I^\dagger)\) as the contribution from \(\beta_i\) and action availability towards reaching infoset \(I^\dagger\) and playing action \(a\). Let \(Q(I^\dagger)\) be the set of action availabilities observed from the root to infoset \(I^\dagger\).
Let \[p(a, I^\dagger) := \prod_{(I^\dagger, a) \succeq ({I^\dagger}', a')} \beta_i(a' \vert {I^\dagger}')
\prod_{q \in Q(I^\dagger)} \rho_i(q).
\]
With \(\hat{b}_i: S_i \to [0,1]\), let
\begin{align}
\hat{b}_i(a) &:= \sum_{I^\dagger \in M(I, S_{I,i})} \beta_i(a \vert I^\dagger) p(a, I^\dagger)\\
\beta'_i({I^\dagger}')&:= \begin{cases} [\hat{b}_i(a_1), \ldots, \hat{b}_i(a_{\lvert A_{I,i} \rvert})] & \text{ if } {I^\dagger}' \in M(I,S_{I,i}) \label{eqn:merging_beta'_def}\\
\beta_i({I^\dagger}') & \text { otherwise}\end{cases}
\end{align}
For any joint strategy \(\beta_{-i}\), with \(\mathrm{hist}(z)\) the set of nodes on the path from the root to node \(z\) (which is unique from perfect recall), with \(a_{h,z}\) the unique action at \(h\) that leads towards node \(z\), we have:
\begin{align}
    U_i(\beta_i, \beta_{-i}) &= \sum_{z \in \calZ^\dagger} u_i(z) \prod_{h \in \mathrm{hist}(z)}\beta_{\mathrm{player}(h)}(a_{h,z} \vert I_h) \\
    &= \sum_{z \in \calZ^\dagger} u_i(z) \left[ \prod_{\substack{h \in \mathrm{hist}(z)\\\mathrm{player}(h) \neq i}}\beta_{\mathrm{player}(h)}(a_{h,z} \vert I_h)
    \prod_{\substack{h \in \mathrm{hist}(z)\\\mathrm{player}(h) = i}}\beta_i (a_{h,z} \vert I_h) \right]\label{eqn:merging_lemma_finalpt1}
\end{align}

We split \(\calZ^\dagger\) into two disjoint sets \(\calZ^\dagger_M\) with terminals whose history includes a node in \(M(I, S_{I, i})\) and \(\calZ^\dagger_{\neg M}\) with their history never intersecting \(M(I,S_{I,i})\). Formally,
\begin{align*}
    \calZ^\dagger_M &:= \{z \in \calZ^\dagger \mid \exists h \in \mathrm{hist}(z) \text{ s.t. } I^\dagger_h \in M(I,S_{I,i})\}\\
    \calZ^\dagger_{\neg M} &:= \{z \in \calZ^\dagger \mid \forall h \in \mathrm{hist}(z),  I^\dagger_h \notin M(I,S_{I,i})\} =\calZ^\dagger \setminus \calZ^\dagger_M
\end{align*}
As these sets are disjoint and include all terminal nodes,

\begin{align}
    U_i(\beta_i, \beta_{-i}) =\sum_{z \in \calZ^\dagger_M} &u_i(z) \left[ \prod_{\substack{h \in \mathrm{hist}(z) \\ \mathrm{player}(h) \neq i}} \beta_{\mathrm{player}(h)}(a_{h,z} \vert I_h) \prod_{\substack{h \in \mathrm{hist}(z)\\\mathrm{player}(h) = i}}\beta_i (a_{h,z} \vert I_h)\right] \nonumber\\
    +\sum_{z \in \calZ^\dagger_{\neg M}} &u_i(z) \left[ \prod_{\substack{h \in \mathrm{hist}(z)\\ \mathrm{player}(h) \neq i}}\beta_{\mathrm{player}(h)}(a_{h,z} \vert I_h) \prod_{\substack{h \in \mathrm{hist}(z)\\\mathrm{player}(h) = i}}\beta_i (a_{h,z} \vert I_h) \right] \label{eqn:merging_split_eqn}
\end{align}

We analyze these two sums individually. For the sum over \(\calZ^\dagger_{\neg M}\), by construction of \(\beta'_i\) and \(\calZ^\dagger_{\neg M}\), \(\beta'_i\) plays identically to \(\beta_i\) i.e. we have \(\forall z \in \calZ^\dagger_{\neg M} \, \forall h \in \mathrm{hist}(z) \text { with } \textrm{player}(h)=i, \beta_i(I_h) = \beta'_i(I_h)\). Therefore,

\begin{align}
&\sum_{z \in \calZ^\dagger_M} u_i(z) \left[ \prod_{\substack{h \in \mathrm{hist}(z)\\ \mathrm{player}(h) \neq i}}\beta_{\mathrm{player}(h)}(a_{h,z} \vert I_h) \prod_{\substack{h \in \mathrm{hist}(z)\\\mathrm{player}(h) = i}}\beta_i(a_{h,z} \vert I_h)\right] \nonumber\\
= &\sum_{z \in \calZ^\dagger_M} u_i(z) \left[ \prod_{\substack{h \in \mathrm{hist}(z)\\ \mathrm{player}(h) \neq i}}\beta_{\mathrm{player}(h)}(a_{h,z} \vert I_h) \prod_{\substack{h \in \mathrm{hist}(z)\\\mathrm{player}(h) = i}}\beta'_i(a_{h,z} \vert I_h)\right]\label{eqn:20}
\end{align}

For the sum over \(\calZ^\dagger_{\neg M}\), we consider the sum over the nodes corresponding to a single terminal in the EFGSAS \(\hat{z} \in \calZ\),
formally for \(\hat{z} \in \calZ\), let \(\calZ^\dagger_M(\hat{z}) = \{z \in \calZ^\dagger_M \mid z \text{ corresponds to } \hat{z}\}\).
Note \(\calZ^\dagger_M = \bigsqcup_{\hat{z} \in \calZ} \calZ^\dagger_M(\hat{z})\).

We also split the product further into whether the node is an action availability chance node, and whether the nodes for player \(i\) are a child of \(I^\dagger\) or not. To this end, let \(H^\dagger_{\rho_i}\) be the set of action availability chance nodes for player \(i\).

\begin{align}
    \eqref{eqn:20} = \sum_{\hat{z} \in \calZ}u_i(\hat{z}) \sum_{z \in \calZ^\dagger_M(\hat{z})} &\Bigg[
    \prod_{\substack{h \in \mathrm{hist}(z) \setminus H^\dagger_{\rho_i}\\ \mathrm{player}(h) \neq i}} \beta_{\mathrm{player}(h)}(a_{h,z} \vert I_h)
    \prod_{\substack{h \in \mathrm{hist}(z)\\\mathrm{player}(h) = i\\h \text{ a descendant of } I^\dagger}}\beta_i (a_{h,z} \vert I_h)\nonumber\\
    &\prod_{\substack{h \in \mathrm{hist}(z) \cap H^\dagger_{\rho_i}\\}}\beta_c (a_{h,z} \vert I_h)
    \prod_{\substack{h \in \mathrm{hist}(z)\\\mathrm{player}(h) = i\\h \text{ a parent of } I^\dagger \text { or } h \in I^\dagger}}\beta_i (a_{h,z} \vert I_h)\Bigg]
\end{align}
As all other players' infosets are such that they can not distinguish between player \(i\)'s received action availability sets, \[\prod_{\substack{h \in \mathrm{hist}(z) \setminus H^\dagger_{\rho_i} \\ \mathrm{player}(h) \neq i}} \beta_{\mathrm{player}(h)}(a_{h,z} \vert I_h)\] is equal for all \(z \in \calZ^\dagger_M(\hat{z})\).
Similarly, by the condition in~\cref{eqn:merging_exchange_lemma_condition}, \[\prod_{\substack{h \in \mathrm{hist}(z)\\\mathrm{player}(h) = i\\h \text{ a descendant of } I^\dagger}}\beta_i (a_{h,z} \vert I_h)\] is also equal for all \(z \in \calZ^\dagger_M(\hat{z})\).
Thus, for any choice of \(z' \in \calZ^\dagger_M(\hat{z})\),
as for a particular choice of \(z'\) player \(i\) has chosen the same action, and by condition~\cref{eqn:merging_exchange_lemma_condition},
\begin{align}
    = \sum_{\hat{z} \in \calZ}u_i(\hat{z}) \Bigg(&\prod_{\substack{h \in \mathrm{hist}(z') \setminus H^\dagger_{\rho_i} \\ \mathrm{player}(h) \neq i}} \beta_{\mathrm{player}(h)}(a_{h,z'} \vert I_h)
    \prod_{\substack{h \in \mathrm{hist}(z')\\\mathrm{player}(h) = i\\h \text{ a descendant of } I^\dagger}}\beta_i (a_{h,z'} \vert I_h) \nonumber\nonumber\\
    &\sum_{z \in \calZ^\dagger_M(\hat{z})}\Bigg[
    \prod_{h \in \mathrm{hist}(z) \cap H^\dagger_{\rho_i}}\beta_c (a_{h,z} \vert I_h)
    \prod_{\substack{h \in \mathrm{hist}(z)\\\mathrm{player}(h) = i\\h \text{ a parent of } I^\dagger \text { or } h \in I^\dagger}}\beta_i (a_{h,z} \vert I_h)\Bigg]\Bigg)
\end{align}
By the definition of \(\beta'_i\), the inner sum takes equal value when performed over \(\beta_i\) or \(\beta'_i\). By definition of \(\beta'_i\), for all terms outside the inner sum, \(\beta_{i}(a_{h,z'} \vert I_h) = \beta'_{i}(a_{h,z'} \vert I_h)\). Thus we have
\begin{align}
    &= \sum_{\hat{z} \in \calZ}u_i(\hat{z}) \Bigg(\prod_{\substack{h \in \mathrm{hist}(z') \setminus H^\dagger_{\rho_i} \\ \mathrm{player}(h) \neq i}} \beta_{\mathrm{player}(h)}(a_{h,z'} \vert I_h)
    \prod_{\substack{h \in \mathrm{hist}(z')\\\mathrm{player}(h) = i\\h \text{ a descendant of } I^\dagger}}\beta'_i (a_{h,z'} \vert I_h) \nonumber\nonumber\\
    &\quad\quad\sum_{z \in \calZ^\dagger_M(\hat{z})}\Bigg[
    \prod_{h \in \mathrm{hist}(z) \cap H^\dagger_{\rho_i}}\beta_c (a_{h,z} \vert I_h)
    \prod_{\substack{h \in \mathrm{hist}(z)\\\mathrm{player}(h) = i\\h \text{ a parent of } I^\dagger \text { or } h \in I^\dagger}}\beta'_i (a_{h,z} \vert I_h)\Bigg]\Bigg)\\
= &\sum_{z \in {\calZ^\dagger}_M} u_i(z) \left[ \prod_{\substack{h \in \mathrm{hist}(z)\\ \mathrm{player}(h) \neq i}}\beta_{\mathrm{player}(h)}(a_{h,z} \vert I_h) \prod_{\substack{h \in \mathrm{hist}(z)\\\mathrm{player}(h) = i}}\beta'_i(a_{h,z} \vert I_h)\right] \label{eqn:merging_lemma_finalpt2}
\end{align}

Substituting~\cref{eqn:merging_lemma_finalpt1} and~\cref{eqn:merging_lemma_finalpt2} into~\cref{eqn:merging_split_eqn} gives
\begin{align}
    U_i(\beta_i, \beta_{-i}) &=\sum_{z \in \calZ^\dagger_M} u_i(z) \left[ \prod_{\substack{h \in \mathrm{hist}(z) \\ \mathrm{player}(h) \neq i}} \beta_{\mathrm{player}(h)}(a_{h,z} \vert I_h) \prod_{\substack{h \in \mathrm{hist}(z)\\\mathrm{player}(h) = i}}\beta'_i (a_{h,z} \vert I_h)\right] \nonumber\\
    &\quad+\sum_{z \in \calZ^\dagger_{\neg M}} u_i(z) \left[ \prod_{\substack{h \in \mathrm{hist}(z)\\ \mathrm{player}(h) \neq i}}\beta_{\mathrm{player}(h)}(a_{h,z} \vert I_h) \prod_{\substack{h \in \mathrm{hist}(z)\\\mathrm{player}(h) = i}}\beta'_i (a_{h,z} \vert I_h) \right]\\
    &= \sum_{z \in \calZ^\dagger} u_i(z) \left[ \prod_{\substack{h \in \mathrm{hist}(z)\\ \mathrm{player}(h) \neq i}}\beta_{\mathrm{player}(h)}(a_{h,z} \vert I_h) \prod_{\substack{h \in \mathrm{hist}(z)\\\mathrm{player}(h) = i}}\beta'_i (a_{h,z} \vert I_h) \right]\\
    &= U_i(\beta'_i, \beta_{-i})
\end{align}
as required. 
\end{proof}

We now give the proof of~\cref{prop:decisionmerging}.
\begin{proof}
Perform a bottom up traversal of the EFGSAS decision tree, and for each infoset \(I \in \calI_i\) and for each action availability set \(S_{I,i} \in \calS_{I,i}\), consider \(M(I, S_{I,i})\). Note that since we perform a bottom-up traversal, the condition~\cref{eqn:merging_exchange_lemma_condition} is met.
If there exist infosets \(I^\dagger, {I^\dagger}' \in M(I, S_{I,i})\) such that \(I^\dagger \neq {I^\dagger}'\) and \(\beta_i(I^\dagger) \neq \beta_i({I^\dagger}')\), apply ~\cref{lemma:merging_exchange} to construct a new behavioral strategy \(\beta'_i\) with \(\forall \beta_{-i}, U_i(\beta_i, \beta_{-i}) = U_i(\beta'_i, \beta_{-i})\), and continue the traversal with \(\beta'_i\).

At the end of the bottom-up traversal we have a behavioral strategy \(\beta'_i\) such that
\begin{equation}
\label{eqn:decisionmerging_beta'_equality}
    \forall I \in \calI_i \; \forall S_{I,i} \in \calS_{I,i} \; \forall I^\dagger, {I^\dagger}' \in M(I,S_{I,i}), \; \beta'_i(I^\dagger) = \beta'_i({I^\dagger}').
\end{equation}
Define \(M': \calS_{i} \to \calI_i^\dagger\) to, for each \(S_{I,i}\), pick an arbitrarily element from \(M(I, S_{I,i})\).
Using~\cref{eqn:decisionmerging_beta'_equality}, we can define \(b_i: \cup_{I \in \calI_i} \calS_{I,i}  \to \cup_{I \in \calI_i} \Delta(A_I)\) so that
\begin{equation}
    \label{eqn:decisionmerging_b_def}
    b_i(S_{I,i}) = \beta'_i(M'(S_{I,i})).
\end{equation}
Note that by~\cref{eqn:decisionmerging_beta'_equality}, any arbitrary choice made by \(M'\) results in the same expected utility.
In addition, note that by the construction of \(\beta'_i\) from a valid EFGSAS strategy \(\beta_i\) we have \(\forall S_{I,i} \in \calS_i, \; \mathrm{supp}(\beta'_i(S_{I,i})) \subseteq S_{I,i}\) and thus \(\mathrm{supp}(b_i(S_{I,i})) \subseteq S_{I,i}\). Furthermore by the transitive property of equality, which was maintained throughout the exchange argument, \(\forall \beta_{-i}, U_i(\beta_i, \beta_{-i}) = U_i(\beta'_i, \beta_{-i})\).
Defining \(\hat{\beta}_i(I^\dagger) = b_i(\mathrm{obs}(I^\dagger))\) thus gives us \(U_i(\beta_i, \beta_{-i}) = U_i(\hat{\beta}_i, \beta_{-i})\) as required.





\end{proof}

\subsection{Proof of Proposition~\ref{prop:NE-equiv}}\label{appsec:ne-equivproof}
\begin{proof}
    $(\impliedby)$ Consider a strategy $\omega$ which is implemented by $\chi$. We have that $U_i(\omega) \geq \max_{\omega_i' \in \Omega_i} U_i(\omega_i', \omega_{-i}) - \epsilon$. Expanding the expression for expected utility of $\chi$ and substituting the equivalent expected utility for its corresponding behavioral strategy $b$, we get:
    \begin{align}
        U_i(\chi) &= \mathbb{E}_{S \sim \rho} \left[\sum_{z\in\mathcal{Z}} \mathbb{P}(z\vert b(S), r)\cdot u_i(z)\right]\\
         &=\sum_{z \in \calZ} u_i(z) \prod_{i \in \calN}
    \mathbb{P}[z \vert \mu_{i}, r]\\
        &= U_i(\omega)\\ 
        &\ge \max_{\omega_i' \in \Omega_i} U_i(\omega_i', \omega_{-i}) - \epsilon\\
        &= \max_{b_i}\sum_{z \in \calZ} u_i(z) 
    \mathbb{P}[z \vert \mu_{i}, r] \cdot \sum_{z \in \calZ} u_{-i}(z) \prod_{-i}
    \mathbb{P}[z \vert \mu_{-i}, r] -\epsilon\\
        &= \max_{\chi_i'} U_i(\chi_i', \chi_{-i}) - \epsilon
    \end{align}
    where we utilize the fact that $\mu_{I,i}(a) = \mathbb{P}[a; \rho_{I,i}, b_i]$ and the payoff equivalence between EFGSAS behavioral and sequence-form strategies on the DAG-plex. The proof for the forward direction is similar.
\end{proof}

\subsection{Proof of Proposition~\ref{prop:minimaxEFGSAS}}\label{appsec:minimaxEFGSASproof}
\begin{proof}
Consider $(\chi_1^*,\chi_2^*)$ that implements $(\omega_1^*,\omega_2^*)$. By definition of implementable strategies and~\cref{prop:decisionmerging}, it follows that $U(\chi_1^*,\chi_2^*) = U(\omega_1^*,\omega_2^*)$ when $\chi$ implements $\omega$. Then, invoking von Neumann's minimax theorem ensures that the statement holds.
\end{proof}

\subsection{Proof of Theorem~\ref{thm:efgsascompact}}\label{appsec:compactEFGSASderiv}
\begin{proof}
First, we consider an arbitrary infoset $I$ of the EFGSAS. Let $I$ have action set $A_{I,i}$ and action availability distribution $\rho_{I,i}$. Then, we will require the following lemma adapted from~\cite{schwarz2026computing}, which establishes that, under Assumption~\ref{assumption:product}, the behavioral strategy at the infoset can be represented compactly. In particular, the lemma was originally written for a mixed strategy $\pi : S\to\Delta(A)$ in the normal-form GSAS setting. Here, $\mu$ denotes implementable strategies in the normal-form sense (as given in~\cref{def:implementable}).

\begin{restatable}{lemma}{gsasCompact}
 Consider a normal-form GSAS $\calG = (\calG^{\mathrm{orig}}, \calS, \rho)$. Let $\pi_i$ be a normal-form mixed strategy that implements $\mu_i \in M_i$. Then, there exists some $\pi'$ implementing $\mu_i$ and $w_i \in \Delta(A_i)$ where
    $\pi'_i(a_i \mid S_i) 
    = \frac{w_i(a_i)\,\mathbbm{1}\{a_i \in S_i\}}
           {\sum_{a'_i \in A_i} w_i(a'_i)\,\mathbbm{1}\{a'_i \in S_i\}}$
    for all $S_i \in \mathcal{S}_i, a_i \in A_i$.
    \label{lem:compact}
\end{restatable}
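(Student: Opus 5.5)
The plan is to reduce the lemma to a fixed-point equation on the simplex and solve it with a convex potential. For $w \in \Delta(A_i)$, let $F(w)\in\Delta(A_i)$ be the marginal induced by the renormalization rule:
\[
F(w)(a) \;=\; \sum_{S_i \in \calS_i} \rho_i(S_i)\, \frac{w(a)\,\mathbbm{1}\{a\in S_i\}}{\sum_{a'\in S_i} w(a')}.
\]
The goal is to find $w$ with $F(w)=\mu_i$. For such a $w$, the strategy $\pi'_i$ defined by the displayed formula implements $\mu_i$ by definition of $\mathbb{P}[a_i;\rho_i,\pi'_i]$. We only need $F$ on those $w$ that put positive mass on every $S_i$ in the support of $\rho_i$.

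\textbf{Potential function.} Reparametrize $w = \mathrm{softmax}(v)$ with $v\in\mathbb{R}^{A_i}$, and set
\[
\Phi(v) \;=\; \sum_{S_i}\rho_i(S_i)\log\Big(\sum_{a\in S_i} e^{v_a}\Big) \;-\; \sum_{a}\mu_i(a)\,v_a .
\]
Each term is a log-sum-exp composed with a coordinate restriction, so $\Phi$ is convex. A direct computation gives $\partial\Phi/\partial v_a = F(\mathrm{softmax}(v))(a)-\mu_i(a)$. Hence any stationary point of $\Phi$ yields the desired $w$.

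\textbf{Lower bound from implementability.} Since $\pi_i$ implements $\mu_i$, we can write
\[
\sum_a \mu_i(a)v_a \;=\; \mathbb{E}_{S_i\sim\rho_i}\Big[\sum_{a\in S_i}\pi_i(a\mid S_i)v_a\Big] \;\le\; \mathbb{E}_{S_i}\big[\max_{a\in S_i}v_a\big] \;\le\; \mathbb{E}_{S_i}\Big[\log\sum_{a\in S_i}e^{v_a}\Big].
\]
So $\Phi\ge 0$, and $\Phi$ is bounded below. This is the only place where implementability enters.

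\textbf{Attainment via interior perturbation.} The recession function of $\Phi$ in direction $d$ is $\mathbb{E}_{S_i}[\max_{a\in S_i} d_a]-\sum_a\mu_i(a)d_a$, which is $\ge 0$ by the same inequality. If $\mu_i$ is in the relative interior of $M_i$, I expect this to be strictly positive for every $d$ not in the lineality space (constant shifts, together with directions that never change any $\max_{a\in S_i}$). In that case $\Phi$ is coercive modulo that space and attains its minimum, giving $w$ with full support.

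For general $\mu_i$, apply this to $\mu_i^\epsilon=(1-\epsilon)\mu_i+\epsilon F(\mathrm{unif})$, which is implementable because $M_i$ is convex. This gives $w^\epsilon$ with $F(w^\epsilon)=\mu_i^\epsilon$. Then pass to a limit point $w^*$ as $\epsilon\to 0$.

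\textbf{Main obstacle.} The hard step is this limit. $F$ is discontinuous at points $w^*$ where some $S_i$ in the support of $\rho_i$ gets zero total weight.

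My first approach is an induction on $|A_i|$. Let $Z=\{a: w^*(a)=0\}$. Sets $S_i\not\subseteq Z$ have denominators bounded away from zero, so $F$ converges on them. The sets $S_i\subseteq Z$ form a sub-GSAS on $Z$, with conditional distribution $\rho_i(\cdot\mid S_i\subseteq Z)$ and target marginal equal to the residual mass of $\mu_i$ on $Z$ that these sets must supply. That target is implementable because the limit of the $\pi'^\epsilon$ restricted to these sets supplies it. By induction, this sub-GSAS has its own weight vector $w_Z$.

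I then want to show that the two levels can be merged into a single vector. Concretely, I would check whether the residual mass on $Z$ from sets $S_i\not\subseteq Z$ is exactly zero, so that $Z$ is tight in the sense $\mu_i(Z)=\mathbb{P}[S_i\subseteq Z]$. If so, scaling $w_Z$ by a small positive constant and adding it to $w^*$ preserves $F=\mu_i$ only approximately. I would therefore re-solve the stationarity equation on the face $\{$coordinates outside $Z$ fixed up to scale$\}$ using the same convex potential restricted to that face.

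If this merging fails, the fallback is to relax the representation to a lexicographic (multi-level) weight vector, which still has polynomial size.
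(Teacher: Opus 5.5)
\textbf{The gap is the $\epsilon\to 0$ limit, and it cannot be closed.} The lemma as literally stated fails at some boundary points of $M_i$, so no merging argument can succeed there.

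Take $A_i=\{a,b,c\}$ with $\rho_i(\{a,b\})=\rho_i(\{a,b,c\})=\tfrac12$. Let $\pi_i$ play $c$ on $\{a,b,c\}$ and $(\tfrac23,\tfrac13)$ on $\{a,b\}$, so $\mu_i=(\tfrac13,\tfrac16,\tfrac12)\in M_i$. For any $w\in\Delta(A_i)$, the displayed rule puts marginal $\tfrac12 w(c)$ on $c$, since the denominator on $\{a,b,c\}$ is $1$. Matching $\mu_i(c)=\tfrac12$ forces $w=(0,0,1)$, and then the rule is $0/0$ on $\{a,b\}$. A fixed convention such as uniform would give $(\tfrac12,\tfrac12)$ rather than the required $(\tfrac23,\tfrac13)$.

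This is exactly your tight case, with $Z=\{a,b\}$, $w^*=(0,0,1)$ and $w_Z=(\tfrac23,\tfrac13)$. Every merge $(1-\delta)w^*+\delta w_Z$ with $\delta>0$ lowers the mass on $c$ to $\tfrac12(1-\delta)$, and no re-solving on a face can restore it. So your lexicographic fallback is not a backup proof of this lemma. It is a genuinely different (and correct) statement, and at such points some multi-level representation is unavoidable.

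\textbf{What your argument does prove is the interior case, and there it is a clean alternative to the paper's route.} The step you only ``expect'' is true. The recession function of $\Phi$ is $h_{M_i}(d)-\langle\mu_i,d\rangle$, where $h_{M_i}(d)=\mathbb{E}_{S_i}[\max_{a\in S_i}d_a]$ is the support function of $M_i=\sum_{S}\rho_i(S)\Delta(S)$. For $\mu_i\in\mathrm{relint}\,M_i$ it vanishes only for $d$ constant on every $S_i$ in the support of $\rho_i$, and $\Phi$ is constant along those directions. Hence the minimum is attained (Rockafellar, Convex Analysis, Thm.~27.1), and $w=\mathrm{softmax}(v)$ has full support.

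The paper instead cites Menon's diagonal-scaling result. The relevant scaling problem is to rescale the incidence matrix $[\mathbbm{1}\{a\in S_i\}]$ to row sums $\rho_i$ and column sums $\mu_i$. Your $\Phi$ is exactly the dual potential of that problem, so the two routes prove the same thing; yours is self-contained and makes the hypothesis visible. Scaling theorems of this type need a matching hypothesis: some nonnegative matrix with these marginals must have the \emph{same} zero pattern. Equivalently, some implementing strategy has full support on each $S_i\cap\{a:\mu_i(a)>0\}$.

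That is the correct scope of the lemma. Delete the zero-marginal actions, run your potential argument on the reduced game (where the hypothesis says $\mu_i$ is in the relative interior), and extend $w$ by zeros.
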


The proof of~\cref{lem:compact} relies on a linear algebraic result~\cite{menon1969spectrum} which establishes the existence of a vector which captures any possible action subset $S_i\in\calS_i$ under an appropriate rescaling. We will adapt the above in order to show the existence of a `compact' implementable sequence-form strategy in EFGSAS satisfying~\cref{assumption:productefg}.

Crucially, the existence of an implementable behavioral strategy over all infosets holds because at each infoset, by Lemma~\ref{lem:compact} there exists an EFGSAS behavioral strategy $b_{i}$ that implements $\omega_{i}$ and admits compact $w_{I,i}$ so that
    \begin{equation}
        b_{i} (a \vert S_{I,i}) = \frac{w_{I,i}(a) \mathbbm{1}\{a\in S_{I,i}\}}{\sum_{a'\in A_{I,i}} w_{I,i}(a') \mathbbm{1}\{a'\in S_{I,i}\}}
    \end{equation}
    Then, the existence of an implementable sequence-form follows trivially from~\cite{von1996efficient}, since the EFGSAS has perfect recall. Moreover, given a set of action availability sets $S_i = [S_{1,i},\dots,S_{\vert \calI_i\vert,i}]$ associated with player $i$ in an EFGSAS, the compact vectors $w_i$ belonging to player $i$ at each infoset can be used to recover their EFGSAS sequence-form strategy $\chi_i(S_i)$ over an ensemble of action availabilities $S_i$  using the relation
    \begin{equation}\label{eqn:renormalizationinfoset}
        \chi_i[\xi a] = \chi_i[\xi]\cdot \frac{w_{I,i}(a) \mathbbm{1}\{a\in S_{I,i}\}}{\sum_{a'\in A_{I,i}} w_{I,i}(a') \mathbbm{1}\{a'\in S_{I,i}\}}
    \end{equation}
   for all $a\in A_{I,i}$, $S_{I,i} \in \calS_{I,i}$ and $I\in\calI_i$.

   Indeed, the existence of $w_{I,i}$ at each infoset ensures that there exists a compact $W_i\in\mathbb{R}^{|\Sigma^{\mathrm{orig}}_i|}_{\ge0}$ defined on the set of sequences in the base EFG  $\calG^{\mathrm{orig}}$. The relation in Equation~\eqref{eqn:renormalizationinfoset} can be extended to compact sequence-form vectors given a sampled action availability subset, the full procedure of which is given in~\cref{alg:renorm}.
\end{proof}

To this end, any player's sequence-form strategy in an EFGSAS can be represented as a collection of vectors $w_{I,i}$, each of size $\vert A_{I,i}\vert$ and constructed according to $S_i$ in time linear in $\sum_{I\in\mathcal{I}_i}\vert A_{I,i}\vert$. However, for representational purposes, it is often more practical to work in the space of sequence-form strategies, and so we focus on the compact sequence-form vectors $W_i\in\mathbb{R}^{|\Sigma^{\mathrm{orig}}_i|}_{\ge0}$. This leads to a possibly exponential representation improvement compared to the expanded game  \(\calG^\dagger\), which is required to explicitly define strategies in $\calG$ (cf.~\cref{sec:infodisclosure}).


\subsection{Calculation of NE Compact Vectors in~\cref{example:lsg_compact}}
We focus on the setting \(\alpha=0\) (i.e. at first decision point, \(\{H_1,T_1\}\) are both always available) and  \(\lambda \in [0,1]\). Let the attacker choose \(H_1\) at \(A\) with probability \(p\), and the defender choose \(H\) with probability \(q\). 

As \(H_1T_2\) and \(T_1H_2\) are strictly dominated by strategies \(H_1H_2\) and \(T_1T_2\), the attacker will never choose these actions unless forced to by the stochastic action availability. Therefore, in the NE the defender wins either when (i) the defender chooses \(H\) and the attacker either also chooses \(H_1\), or chose \(T_1\) and was forced at the second decision point to choose \(H_2\) or (ii) both players choose \(T\):
\begin{equation}
    \mathbb{P}[\text{defender wins}] = pq + p(1-q)\lambda + (1-p)(1-q)
\end{equation}
Given that the defender seeks to maximize the chance of intersection, and attacker seeks to minimize it, the above can be written as a min-max problem, and applying the minimax theorem yields:
\begin{align}
    &\max_q \min_p \left[pq + p(1-q)\lambda + (1-p)(1-q)\right]\\
    =&\min_p \max_q \left[pq + p(1-q)\lambda + (1-p)(1-q)\right]
\end{align}
Solving this gives \(p = 1-\frac{1}{2-\lambda}\) and \(q= \frac{1}{2-\lambda}\).

Since the defender has no stochastic action availabilities, we can write their sequence-form NE strategy directly.

For the attacker, since \(\alpha=0\), the sequence form entries for the first decision point can also be written directly. For the second decision point, since \(H_1T_2\) and \(T_1 H_2\) are strictly dominated, the attacker will never play them unless forced and the corresponding entries can be set to zero. The remaining sequences can be set so as to conserve probability mass:
\begin{equation*}
    W^*_1 = \left[1, \frac{1}{2-\lambda}, 1-\frac{1}{2-\lambda}\right], \quad W^*_2 = \left[1, 1-\frac{1}{2-\lambda}, \frac{1}{2-\lambda}, 1-\frac{1}{2-\lambda}, 0, 0, \frac{1}{2-\lambda}\right]
\end{equation*}
Setting \(\lambda = 0.5\) gives the compact NE vectors desired for~\cref{example:lsg_compact}.

\subsection{Proof of Theorem~\ref{thm:siregretboundprob}}
\begin{proof}
The proof of this result consists of two major steps: 
\begin{itemize}
    \item Step (i): we show that the total expected SI-regret of the SI-CFR procedure is bounded by $O(\sqrt{T})$ after $T$ traversals of the game tree, and
    \item Step (ii): we show that with high probability, a sampled sequence of SI-regrets approaches the expected regret bound.
\end{itemize}
 In order to show Step (i), we need to utilize the notion of \emph{scaled extensions} introduced in~\cite{farina2019efficient}:
    
    \begin{definition}[Scaled Extension]\label{def:scaledextension}
        Let $\calX$ and $\calY$ be nonempty, compact and convex sets, and let $f\to \mathbb{R}_{\ge0}$ be a non-negative affine real function. The scaled extension of $\calX$ with $\calY$ via $f$, where $f(x) = \langle a, x\rangle+b$ is defined by the set:
        \begin{equation}
            \calX \overset{f}{\vartriangleleft} \calY \coloneqq \{(x,y):x\in\calX, y\in f(x)\calY\}.
        \end{equation}
    \end{definition}

In SI-CFR as defined in~\cref{alg:cfrsimwu}, the induction starts from a leaf infoset, and our goal is to bound the total SI-regret by utilizing the fact that at each infoset, we run an SI-regret minimizer (SI-MWU, in our case) whose SI-regret is bounded by $O(\sqrt{T\log(|A_{I,i}|})$. First, given two convex, compact sets $\calX$ and $\Delta$, we show an auxiliary lemma which bounds the overall \emph{SI-regret} of the scaled extension of $\calX$ by $\Delta$.

    \begin{lemma}\label{lem:scaled}
    Let $R^{\mathsf{INT}}_{T,i} (\calY)$ be the SI-regret of the scaled extension  $\calY \coloneqq \calX\overset{f}{\vartriangleleft}\Delta$, where $\Delta$ and $\calX$ are nonempty, compact and convex sets. Then, the following holds
        \begin{equation}  
        \left[R_{T,i}^{\mathsf{INT}}(\calY)\right]^+
        \leq  \left[R_{T,i}^{\mathsf{INT}}(\calX)\right]^+ + \left[R_{T,i}^{\mathsf{INT}}(\Delta)\right]^+
        \end{equation}
    \end{lemma}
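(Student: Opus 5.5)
The plan is to mirror the standard scaled-extension regret decomposition of~\cite{farina2019efficient}, adapted to SI-regret. The key step is to write the SI-regret of $\calY$ for a fixed swap pair $(\hat a\to\hat a')$ as a sum of two terms. One is attributable to the $\calX$ component under the modified utility $u_\calX^t + f(\cdot)\,u_\Delta^t(y^t)$ that \cref{alg:cfrsimwu} passes to $\calR^{\mathsf{INT}}_\calX$. The other is attributable to the $\Delta$ component under $u_\Delta^t$.

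Concretely, a point of $\calY$ is $(x, f(x)y)$, and the utility received at time $t$ is
\[
\langle u_\calX^t, x\rangle + f(x)\langle u_\Delta^t, y\rangle .
\]
A sleeping internal deviation on $\calY$ is a swap applied at one of the simplices composing $\calY$. There are two cases.

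\emph{Case 1: the swap acts at a decision point inside $\calX$.} The deviation changes $x^t$ to some $\tilde x^t$, and the $\Delta$ component is carried along as $f(\tilde x^t)y^t$, with $y^t$ unchanged. The regret increment is
\[
\langle u_\calX^t,\tilde x^t - x^t\rangle + \bigl(f(\tilde x^t)-f(x^t)\bigr)\langle u_\Delta^t,y^t\rangle .
\]
Since $f$ is affine, this is exactly the SI-regret increment of $\calR^{\mathsf{INT}}_\calX$ under the utility $u_\calX^t + f\cdot u_\Delta^t(y^t)$, up to the constant $b$, which cancels in the difference.

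\emph{Case 2: the swap acts at the simplex $\Delta$.} Then $x^t$ is fixed and the increment is
\[
f(x^t)\,\langle u_\Delta^t,\tilde y^t-y^t\rangle ,
\]
which is the SI-regret increment of $\calR^{\mathsf{INT}}_\Delta$ weighted by $f(x^t)\in[0,1]$.

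In both cases the availability indicator $\mathbbm{1}\{\hat a'\in S^t\}$ refers only to the infoset where the swap occurs. By \cref{assumption:productefg} and the ex-interim structure, that infoset's availability is sampled independently of the other component. The indicator therefore appears identically in the $\calY$ regret and in the component regret.

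Summing over $t$ and taking the worst-case deviation in each case, the total SI-regret of $\calY$ is at most the maximum of the two component regrets, hence at most their sum. Taking positive parts yields the claimed inequality, using $[a]^+\le[b]^++[c]^+$ whenever $a\le b+c$, or $a\le\max(b,c)$.

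The main obstacle is the weighting by $f(x^t)$ in the $\Delta$ term. SI-MWU bounds the unweighted SI-regret, but the relevant quantity here is
\[
\sum_t f(x^t)\,r^t ,
\]
a reach-weighted (counterfactual) regret. My first attempt would be to define the counterfactual utility passed to $\calR^{\mathsf{INT}}_\Delta$ as already incorporating the reach weight, as CFR does. Then $u_\Delta^t$ already equals the reach-scaled counterfactual utility, and the $\Delta$ component in $\calY$ is evaluated against $u_\Delta^t$ directly.

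If that reading of \cref{alg:cfrsimwu} fails, the fallback is to note that SI-MWU's guarantee holds for arbitrary bounded adaptive utility sequences. Feeding it $f(x^t)u_\Delta^t$, which is nonnegatively scaled and bounded by $1$, then preserves the bound.

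A secondary subtlety concerns the expectation over $S$ and $a\sim b(S)$ in \cref{def:internalregret}. Linearity of expectation handles it, since the decomposition above holds pathwise for each realization of availabilities.
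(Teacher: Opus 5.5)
Your route differs from the paper's. The paper never decomposes swap deviations. It bounds the \emph{sleeping external} regret of $\calY$ against a fixed comparator $(x,y)$. It uses \cref{assumption:productefg} to factor out $\mathbb{P}[y\in S_\Delta]$ and $\mathbb{P}[x\in S_\calX]$. It bounds one piece by $[R_{T,i}(\calX)]^+$ and the other by $f(x^*)[R_{T,i}(\Delta)]^+\le[R_{T,i}(\Delta)]^+$. It then asserts that the SE-regret inequality carries over to SI-regret after taking positive parts. The fixed comparator is what makes that route easy, because the scaling $f(x^*)$ is a constant in $[0,1]$. Your pathwise per-swap decomposition works directly with the quantity in \cref{def:internalregret}, and it does not actually need independence. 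It also shows that for a swap at $\Delta$ the scaling is the learner's own time-varying reach $f(x^t)$. The paper's SE-to-SI step passes over exactly this point.

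You must, however, commit to your fallback, because your first option does not work. In $\calY$ the $\Delta$-block always enters the payoff as $f(x)\langle u_\Delta^t,y\rangle$, so renaming $u^t_\Delta$ cannot absorb $f(x^t)$. Nor can you appeal to CFR here: its counterfactual values carry opponent and chance reach, not the learner's own reach. The fallback changes line 6 of \cref{alg:cfrsimwu}. The $\Delta$-minimizer must be fed $f(x^t)u_\Delta^t$, and $R^{\mathsf{INT}}_{T,i}(\Delta)$ must mean its SI-regret on those utilities. This change is necessary, not cosmetic, as the following counterexample shows. Make all actions always available, let $\calX$ be a two-action simplex with $f(x)=x[a_1]$, let $\Delta$ be over $\{c_1,c_2\}$, and set $y^t\equiv(\tfrac12,\tfrac12)$. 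At odd $t$ let $x^t=e_{a_1}$, $u^t_\Delta=e_{c_1}$ and $u^t_\calX=(\tfrac12,0)$. At even $t$ let $x^t=e_{a_2}$, $u^t_\Delta=e_{c_2}$ and $u^t_\calX=(-\tfrac12,1)$. The modified utility passed to $\calX$ is then $(1,0)$ or $(0,1)$, so $x^t$ is a best response every round and $[R^{\mathsf{INT}}_{T,i}(\calX)]^+=0$. The unweighted $[R^{\mathsf{INT}}_{T,i}(\Delta)]^+$ is at most $\tfrac12$. Yet the swap $c_2\to c_1$ at $\Delta$ gains about $T/4$ in $\calY$. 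So state the lemma, and the SI-CFR update, in the reach-weighted form; with that change your argument is complete.
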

    \begin{proof}[Proof of Lemma~\ref{lem:scaled}]
    Define $[x]^+ = \max(x, 0)$. For simplicity, we will work in the space of normal-form strategies associated with the implementable sequence-form strategies $\Sigma_i$. We first show a relationship between the expected sleeping external regret for a normal-form strategy which is a scaled extension of a normal-form strategy in the simplex.
     Concretely, the notion of Sleeping External Regret (SE-Regret) was introduced in~\cite{blum2007external,kleinberg2010regret}. 
\begin{definition}[Sleeping External Regret]
For any strategy $x_i\in \calX_i$, the sleeping external regret for player $i$ is defined as:
\[
R_{T,i}(x) \coloneqq \mathbb{E}\left[ \sum_{t=1}^T \mathbbm{1}\{x_i\in S_i^t\} (u_i(x_i,x^t_{-i}) - u_i(x^t_i,x^t_{-i})) \right]
\]
where the expectation is taken over the randomness of action availabilities and player strategies.
\end{definition}
In other words, the sleeping external regret captures the amount that player $i$ benefits if they always swapped to strategy $x_i$ in their strategy set $\calX_i$ for all $t \in [T]$ where possible, regardless of the original (distribution over) strategies $x_i^t$ taken. 
Then, by definition of the scaled extension, the non-negative component of the SE-regret for $\calY$ is given by
\begin{align}
         &\left[ R_{T,i}(\calY) \right]^+ \notag\\
         &= \Bigg[ \max_{x\in\calX, y\in\Delta^m} \Bigg\{\mathbb{E}_{(S^t_X, S^t_\Delta) \sim \rho_{X,\Delta}} \Bigg[ \sum_{t=1}^T \mathbbm{1}[x\in S_X^t, y\in S_\Delta^t]\big(u_\calX^t(x) + \\ & \hspace{16em} f(x) u_\Delta^t(y) - u_\calX^t(x^t) - f(x^t) u_\Delta^t(y^t)\big)\Bigg]\Bigg\} \Bigg]^+\notag\\
         &= \Bigg[ \max_{x\in\calX, y\in\Delta^m} \Bigg\{\mathbb{E}_{S^t_X \sim \rho_X, S^t_\Delta \sim \rho_\Delta} \Bigg[ \sum_{t=1}^T \mathbbm{1}[x\in S_X^t, y\in S_\Delta^t]\big(u_\calX^t(x) + \\ &\hspace{16em} f(x) u_\Delta^t(y) - u_\calX^t(x^t) - f(x^t) u_\Delta^t(y^t)\big)\Bigg]\Bigg\} \Bigg]^+\notag\\
         &= \Bigg[ \max_{x\in\calX, y\in\Delta^m} \Bigg\{\mathbb{E}_{S^t_X \sim \rho_X, S^t_\Delta \sim \rho_\Delta} \Bigg[ \sum_{t=1}^T \mathbbm{1}[x\in S_X^t, y\in S_\Delta^t]\big(u_\calX^t(x) -  \notag\\ & \underbrace{\hspace{20em} u_\calX^t(x^t) - f(x^t) u_\Delta^t(y^t)\big)\Bigg]\Bigg\} \Bigg]^+}_{(A)} \notag\\
         &+ \underbrace{\left[ \max_{x\in\calX, y\in\Delta^m} \left\{\mathbb{E}_{S^t_X \sim \rho_X, S^t_\Delta \sim \rho_\Delta} \left[ \sum_{t=1}^T \mathbbm{1}[x\in S_X^t, y\in S_\Delta^t]\left( f(x) u_\Delta^t(y) \right)\right]\right\} \right]^+}_{(B)}     
\end{align}
The first term $(A)$ in the summand will bound regrets in $\mathcal{X}$, while $(B)$ will bound regrets in $\Delta$. We will begin with the first. 
\begin{align}
        (A) 
        &= \Bigg[ \max_{x\in\calX, y\in\Delta^m} \Bigg\{\mathbb{E}_{S^t_X \sim \rho_X, S^t_\Delta \sim \rho_\Delta} \Bigg[ \sum_{t=1}^T \mathbbm{1}[x\in S_X^t, y\in S_\Delta^t]\big(u_\calX^t(x) - \\&\hspace{19em}u_\calX^t(x^t) - f(x^t) u_\Delta^t(y^t)\big)\Bigg]\Bigg\} \Bigg]^+ \notag\\
         &= \Bigg[ \max_{x\in\calX, y\in\Delta^m} \Bigg\{\mathbb{E}_{S^t_X \sim \rho_X, S^t_\Delta \sim \rho_\Delta} \Bigg[ \sum_{t=1}^T \mathbbm{1}[x\in S_X^t]\mathbbm{1}[y\in S_\Delta^t]\big(u_\calX^t(x) - \\&\hspace{19em}u_\calX^t(x^t) - f(x^t) u_\Delta^t(y^t)\big)\Bigg]\Bigg\} \Bigg]^+ \notag\\
         &= \Bigg[ \max_{x\in\calX, y\in\Delta^m} \mathbb{P} [y \in S_\Delta]  \cdot \Bigg\{ \mathbb{E}_{S^t_X \sim \rho_X} \Bigg[ \sum_{t=1}^T \mathbbm{1}[x\in S_X^t]\big(u_\calX^t(x) - \\&\hspace{19em}u_\calX^t(x^t) - f(x^t) u_\Delta^t(y^t)\big)\Bigg]\Bigg\} \Bigg]^+ \notag
    \end{align}
There are two cases for the equality above. \\
\textbf{Case 1:} The term in the maximum is negative, so the entire expression evaluates to zero, which is clearly $\leq [R_{T,i}(\mathcal{X})]^+$. \\
\textbf{Case 2:} It is non-negative. Fix $y^*$ to be the argument achieving the maximum. The expression is then equal to 
\begin{align*}
    & \mathbb{P} [y^* \in S_\Delta] 
    \underbrace{\max_{x\in\calX} \left\{ \mathbb{E}_{S^t_X \sim \rho_X} \left[ \sum_{t=1}^T \mathbbm{1}[x\in S_X^t]\left(u_\calX^t(x) - u_\calX^t(x^t) - f(x^t) u_\Delta^t(y^t)\right)\right]\right\}}_{\geq 0} \\ 
    \leq& \max_{x\in\calX} \left\{ \mathbb{E}_{S^t_X \sim \rho_X} \left[ \sum_{t=1}^T \mathbbm{1}[x\in S_X^t]\left(u_\calX^t(x) - u_\calX^t(x^t) - f(x^t) u_\Delta^t(y^t)\right)\right]\right\} \\ \leq& [R_{T,i}(\mathcal{X})]^+
\end{align*}
Combining both cases, we have $(A) \leq [R_{T,i}(\mathcal{X})]^+$. \\ 

Now let us consider (B). The method is similar: if $(B)$ is negative, then it is by definition bounded by $[R_{T,i}(\Delta^m)]^+$. If it is non-negative, let $x^*$ be the argument maximizing it, such that
\begin{align*}
    (B) &= \max_{x\in\calX, y\in\Delta^m} \left\{\mathbb{E}_{S^t_X \sim \rho_X, S^t_\Delta \sim \rho_\Delta} \left[ \sum_{t=1}^T \mathbbm{1}[x\in S_X^t, y\in S_\Delta^t]\left( f(x) u_\Delta^t(y) \right)\right]\right\} \\
    &= \mathbb{P}[x^* \in S_\mathcal{X}] \max_{y\in\Delta^m} \left\{\mathbb{E}_{S^t_\Delta \sim \rho_\Delta} \left[ \sum_{t=1}^T \mathbbm{1}[y\in S_\Delta^t]\left( f(x^*) u_\Delta^t(y) \right)\right]\right\} \\
    &\leq f^*[R_{T,i}(\Delta^m)]^+ \\
    &\leq [R_{T,i}(\Delta^m)]^+
\end{align*}
Combining the inequalities from $(A)$ and $(B)$, we have 
\begin{equation}\label{eqn:seregretscaled}
    \left[ R_{T,i}(\calY) \right]^+ \leq [R_{T,i}(\mathcal{X})]^+ + [R_{T,i}(\Delta^m)]^+.
\end{equation}
In other words, the sleeping external regret in $\calY$, if non-negative, is upper bounded by the sum of the (nonnegative) sleeping external regrets within $\mathcal{X}$ and $\Delta$ respectively. 

Next, we use the inequality above to bound the SI-regret of the strategy $\calY$. Indeed, by definition, the SE-regret for an action replacement in $\calY$ is the sum of the SI-regrets over all actions in $\calY$. However, for the inequality to hold in SI-regret, we require the further restriction that each SI-regret term in all actions is non-negative by taking $\left[ R^{\mathsf{INT}}_{T,i}(\Delta) \right]^+$ for any $\Delta$ in every infoset of the game. 
Then, this ensures that we have:
\begin{equation}\label{eqn:siregretscaled}
    \left[ R^{\mathsf{INT}}_{T,i}(\calY) \right]^+ \leq [R^{\mathsf{INT}}_{T,i}(\mathcal{X})]^+ + [R^{\mathsf{INT}}_{T,i}(\Delta^m)]^+.
\end{equation}
 This completes the proof.       
    \end{proof}

    Under~\cref{lem:scaled}, we have established that the SI-regrets are bounded in the normal-form strategy formalism
    , but this also implies the boundedness of SI-regrets in the sequence-form strategy space~\cite{celli2020no}.
    With this, an inductive argument starting from the leaf infosets and applying the inequality to the implementable sequence-form strategy of the player in reverse topological order  suffices to conclude that the maximal cumulative SI-regret over the whole game tree is bounded by $O(\sqrt{T})$. 
    In particular,~\cite[Theorem 5.4]{schwarz2026computing} showed that  over a probability simplex in a normal-form GSAS, the SI-regret of SI-MWU is bounded by $O(\sqrt{T\log(|A_i|)})$. Thus, we have
    \begin{equation}\label{eqn:inequalitiesSI}
        \left[R^{\mathsf{INT}}_{T,i}\right]\leq \sum_{I\in\calI_i} \left[R^{\mathsf{INT}}_{T,I,i}\right]
        \leq |\calI_i|\cdot \sqrt{\log(|A_{I,i}|) T} \leq O(|\Sigma^{\mathrm{orig}}_i|\sqrt{T})
    \end{equation}
    Hence, the regret grows with the size of the implementable sequence-form strategy space, rather than the naive expanded sequence-form strategy space. 

    We now proceed to show Step (ii). Thus far we have worked in the space of expectations over $\rho$, assuming that the counterfactual SI-regrets obtained are precisely aligned with the expectation over action availabilities and strategies. Next, we consider the fact that our procedure is effectively `external-sampling' MCCFR~\cite{lanctot2009monte}, in the sense that at each infoset, instead of expanding the game tree using the naive expansion procedure, we simply run SI-MWU on the sampled realization of the action availabilities. In order to derive a high-probability bound, we can apply the Azuma-Hoeffding inequality on the cumulative SI-regrets at each infoset. 

    \begin{theorem}[Azuma-Hoeffding Inequality~\cite{azuma1967weighted,hoeffding1963probability}]\label{thm:azumahoeffding}
        Let $Y_1,\dots,Y_N$ be a martingale difference sequence with $a_k\leq Y_k\leq b_k$ for each $k$, for suitable constants $a_k,b_k$. Then, for any $\tau\ge0$:
        \[
        \mathbb{P}\left[\sum_{k=1}^N Y_k \ge \epsilon\right] \le \exp\left(-\frac{2\epsilon^2}{\sum_{k=1}^N (b_k-a_k)^2}\right)
        \]
    \end{theorem}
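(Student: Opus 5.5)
I will prove the Azuma--Hoeffding inequality in the standard way, via the exponential (Chernoff) method combined with Hoeffding's lemma applied conditionally at each step. Throughout, let $(\mathcal{F}_k)_{k\ge0}$ be the filtration with respect to which $(Y_k)$ is a martingale difference sequence, so $\mathbb{E}[Y_k\mid\mathcal{F}_{k-1}]=0$. Assume the bounds $a_k\le Y_k\le b_k$ hold almost surely, with $a_k,b_k$ constants (or at least $\mathcal{F}_{k-1}$-measurable with $b_k-a_k$ deterministic). Write $S_N=\sum_{k=1}^N Y_k$. The threshold in the statement is $\epsilon\ge0$; the ``$\tau$'' there is a typo.

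\textbf{Step 1: Hoeffding's lemma, conditional form.} I first show that for every $\lambda\in\mathbb{R}$,
\[
\mathbb{E}\bigl[e^{\lambda Y_k}\mid\mathcal{F}_{k-1}\bigr]\le \exp\bigl(\lambda^2(b_k-a_k)^2/8\bigr).
\]
Convexity of $y\mapsto e^{\lambda y}$ on $[a_k,b_k]$ gives
\[
e^{\lambda y}\le \frac{b_k-y}{b_k-a_k}e^{\lambda a_k}+\frac{y-a_k}{b_k-a_k}e^{\lambda b_k}.
\]
Taking conditional expectations and using the zero conditional mean yields an expression of the form $e^{\phi(u)}$. Here $u=\lambda(b_k-a_k)$, $p=-a_k/(b_k-a_k)\in[0,1]$ (since $a_k\le0\le b_k$ is forced by the zero mean), and $\phi(u)=-pu+\log(1-p+pe^{u})$. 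A Taylor expansion with $\phi(0)=\phi'(0)=0$ and $\phi''\le 1/4$ (since $\phi''$ is a Bernoulli variance) gives $\phi(u)\le u^2/8$. I expect this to be the main technical step, and the bound on $\phi''$ is the point to verify carefully.

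\textbf{Step 2: Iterating over the filtration.} For $\lambda>0$, the tower property gives
\[
\mathbb{E}[e^{\lambda S_N}]=\mathbb{E}\bigl[e^{\lambda S_{N-1}}\,\mathbb{E}[e^{\lambda Y_N}\mid\mathcal{F}_{N-1}]\bigr]\le e^{\lambda^2(b_N-a_N)^2/8}\,\mathbb{E}[e^{\lambda S_{N-1}}].
\]
Inducting down to $S_0=0$ then yields
\[
\mathbb{E}[e^{\lambda S_N}]\le \exp\Bigl(\tfrac{\lambda^2}{8}\sum_k (b_k-a_k)^2\Bigr).
\]
This step uses that $e^{\lambda S_{N-1}}$ is $\mathcal{F}_{N-1}$-measurable and that the range widths are deterministic.

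\textbf{Step 3: Markov's inequality and optimizing $\lambda$.} Markov's inequality gives
\[
\mathbb{P}[S_N\ge\epsilon]\le e^{-\lambda\epsilon}\,\mathbb{E}[e^{\lambda S_N}]\le \exp\Bigl(-\lambda\epsilon+\tfrac{\lambda^2}{8}D\Bigr),
\]
where $D=\sum_k(b_k-a_k)^2$. The exponent is minimized at $\lambda=4\epsilon/D$, which gives the bound $\exp(-2\epsilon^2/D)$, as claimed. The edge cases are immediate: the bound is trivial if $\epsilon=0$, and if $D=0$ then $S_N=0$ almost surely.

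In the paper, this inequality will then be applied with $Y_k$ equal to the difference between the realized per-infoset SI-regret increments and their conditional expectations over $\rho$ and the players' randomization. Each such increment is bounded in an interval of width $O(1)$, so the deviation is $O(\sqrt{T\log(1/p)})$ with probability at least $1-p$.
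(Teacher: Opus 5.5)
Your proof is correct. It is the standard argument: Hoeffding's lemma applied conditionally, iterated via the tower property, then Markov's inequality with $\lambda = 4\epsilon/D$, which yields exactly $\exp(-2\epsilon^2/D)$. The paper gives no proof of its own and simply cites Azuma and Hoeffding, whose argument is essentially yours; you are also right that the ``$\tau$'' in the statement should read $\epsilon$.
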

    Consider the (expected) SI-regret at a leaf infoset, denoted $R^\mathsf{INT}_{T,i}(\Delta)$. We first show the following lemma on the sampled SI-regrets at the infoset, utilizing the Azuma-Hoeffding inequality.

    \begin{lemma}\label{lem:leafinfosetsampled}
        Suppose a SI-regret minimizer is run for $T$ timesteps on leaf infoset $I$ with strategy set $\Delta$ of dimension $A_{I,i}$ and utilities $u_i: A_{I,i}\to [-1,1]$. Let $\tilde{R}^{\mathsf{INT}}_{T,i}(\Delta)$ denote the sampled SI-regrets in time $T$. Then, for all $p\in(0,1)$,
\[
    \mathbb{P}\Bigg[ \max_{\Delta} \left\vert\tilde{R}^{\mathsf{INT}}_{T,i}(\Delta) - R^{\mathsf{INT}}_{T,i}(\Delta) \right\vert 
 \geq \sqrt{8T\log\left(\frac{2 \vert A_{I,i}\vert\vert A_{I,i}-1\vert}{p}\right) } \Bigg] \leq p.
\]
    \end{lemma}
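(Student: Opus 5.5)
The plan is to treat each pair of actions separately and apply Azuma--Hoeffding (\cref{thm:azumahoeffding}), then take a union bound over all pairs. Fix a pair $(\hat a,\hat a')$ with $\hat a\neq \hat a'$ in $A_{I,i}$. There are $|A_{I,i}|(|A_{I,i}|-1)$ such ordered pairs, which is exactly the number of experts maintained by SI-MWU at $I$.

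For each pair, the sampled per-round SI-regret increment is
\[
X_t \coloneqq \mathbbm{1}\{a^t=\hat a,\ \hat a'\in S^t_{I,i}\}\left(u_i(\hat a',a^t_{-i})-u_i(a^t,a^t_{-i})\right).
\]
Its conditional expectation given the history $\mathcal F_{t-1}$ is
\[
\bar X_t \coloneqq \mathbb{E}\left[X_t\mid\mathcal F_{t-1}\right],
\]
taken over $S^t_{I,i}\sim\rho_{I,i}$ (independent of the past by \cref{assumption:productefg}) and over $a^t\sim b^t(S^t_{I,i})$. The expected SI-regret of \cref{def:internalregret} is the sum of the $\bar X_t$. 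Define $Y_t\coloneqq X_t-\bar X_t$.

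The first step is to check that $(Y_t)$ is a martingale difference sequence with respect to $\mathcal F_t$, the history of availability sets, actions and opponent actions up to round $t$. The SI-MWU strategy $b^t$ and the opponent's strategy at round $t$ are $\mathcal F_{t-1}$-measurable. The opponent's sampled action is either fixed through counterfactual values or enters the conditioning, so $\mathbb{E}[Y_t\mid\mathcal F_{t-1}]=0$ by construction.

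The second step bounds the increments. Utilities lie in $[-1,1]$, so $|X_t|\le 2$ and $|\bar X_t|\le 2$, hence $Y_t\in[-4,4]$. Then \cref{thm:azumahoeffding} applied to $\sum_t Y_t$ and to $-\sum_t Y_t$ gives
\[
\mathbb{P}\left[\left|\sum_t Y_t\right|\ge\epsilon\right]\le 2\exp\left(-\epsilon^2/(32T)\right).
\]
A cleaner argument can tighten this constant: $X_t$ ranges in an interval of length $4$ conditionally, and centering preserves that length. Using this interval length in Hoeffding's form gives $2\exp(-2\epsilon^2/(16T))=2\exp(-\epsilon^2/(8T))$, which matches the stated constant $\sqrt{8T\log(\cdot)}$.

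The third step is the union bound. Set the failure probability per pair to $p/(|A_{I,i}|(|A_{I,i}|-1))$, so that $\epsilon=\sqrt{8T\log(2|A_{I,i}||A_{I,i}-1|/p)}$. The deviation then holds simultaneously for every pair, and hence for the maximum appearing in the lemma, with probability at least $1-p$.

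I expect the main obstacle to be the precise bookkeeping of the range constant. One must use the conditional range of $X_t$, an interval of length $4$, rather than the crude bound $|Y_t|\le 4$, and must confirm that the relevant filtration makes $b^t$ predictable. Because $I$ is a leaf infoset, no downstream counterfactual values introduce extra randomness, so this check is straightforward here.
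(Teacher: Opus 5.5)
Your proposal is correct and follows essentially the same route as the paper: a per-pair Azuma--Hoeffding bound on the centered increments with range length $4$, made two-sided, followed by a union bound over the $|A_{I,i}|(|A_{I,i}|-1)$ action-swap pairs. Your handling of the range is more careful than the paper's, which simply asserts that the centered increment lies in $[-2,2]$. One caveat: an interval of length $4$ with $\mathcal{F}_{t-1}$-measurable endpoints needs the predictable-bounds version of Azuma--Hoeffding, not the constant-bounds form stated in \cref{thm:azumahoeffding}.
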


\begin{proof}[Proof of Lemma~\ref{lem:leafinfosetsampled}]
Given there are $|A_{I,i}|$ actions at a leaf infoset, the SI-regret at $I$ encodes the sum of regrets for each action replacement across $T$ samples, which gives $\vert A_{I,i}\vert(\vert A_{I,i}\vert-1)$ random variables. Consider an arbitrary such R.V. associated with an action replacement $a\to a'$, which we denote by $\tilde{R}^\mathsf{INT}_t$. Here, the instantaneous SI-regret is defined for the subset of $\{T\}$ where both $a, a'$ are available. 
Denote by $\mathbb{E}[R^\mathsf{INT}_t]$ the expected SI-regret obtained at time $t$, and observe that for any strategy $x^t$, $ -2\le \tilde{R}^\mathsf{INT}_t - \mathbb{E}[\tilde{R}^\mathsf{INT}_t] \leq 2$. Moreover, $\mathbb{E}[\tilde{R}^\mathsf{INT}_t - \mathbb{E}[\tilde{R_t}]] = 0$, so the sequence $\{\tilde{R}^\mathsf{INT}_t - \mathbb{E}[\tilde{R^\mathsf{INT}}_t]\}_{t=1}^T$ is a martingale difference sequence.

Then, by Theorem~\ref{thm:azumahoeffding}, we get that for every action replacement pair $a,a'\in A_{I,i}$, $a\ne a'$:
    \begin{align}
        \mathbb{P}[\tilde{R}^\mathsf{INT}_{T,i} - R^\mathsf{INT}_{T,i} \geq \epsilon] &= 
        \mathbb{P}[\sum_{t=1}^T\tilde{R}_t(a,a') - \sum_{t=1}^T\mathbb{E}[\tilde{R}_t(a,a')] \geq \epsilon]\\ &\leq \exp\left(-\frac{2\epsilon^2}{\sum_{t=1}^T (2-(-2))^2}\right)\\
        &= \exp\left(-\frac{\epsilon^2}{8T}\right)
    \end{align}

    The inequality $\mathbb{P}[\tilde{R}^\mathsf{INT}_{T,i} - R^\mathsf{INT}_{T,i} \leq -\epsilon] \le \exp\left(-\frac{\epsilon^2}{8T}\right)$ is also true, so applying the union bound we get: 
    \begin{equation}
        \mathbb{P}[\vert\tilde{R}^\mathsf{INT}_{T,i} - R^\mathsf{INT}_{T,i}\vert \geq \epsilon] \le 2\exp\left(-\frac{\epsilon^2}{8T}\right)
    \end{equation}

    We wish to bound the probability that the maximum error over all $\vert A_{I,i}\vert(\vert A_{I,i}\vert-1)$ R.V.s (i.e., one for each action replacement) is large, which can be done using the union bound:
    \begin{align}
         \mathbb{P}[\max_{\Delta}\vert\tilde{R}^\mathsf{INT}_{T,i}(\Delta) - R^\mathsf{INT}_{T,i}(\Delta)\vert \geq \epsilon] &\le \sum_{(a, a')} \mathbb{P}[\vert\tilde{R}^\mathsf{INT}_{T,i} - R^\mathsf{INT}_{T,i}\vert \geq \epsilon]\\
         &\le
         2\vert A_i\vert\vert A_i-1\vert\exp\left(-\frac{\epsilon^2}{8T}\right)
    \end{align}

    Finally, substituting $\epsilon = \sqrt{8T\log\left(\frac{2\vert A_{I,i}\vert\vert A_{I,i}-1\vert}{p}\right)}$ yields the statement.
\end{proof}

    Next, we proceed with the proof of Step (ii) utilizing~\cref{lem:leafinfosetsampled}. A similar chain of inequalities as in~\cref{eqn:inequalitiesSI} can be derived in order to obtain a probablistic bound over all infosets. In particular, we have that $R^{\mathsf{INT}}_{T,i} \leq O(\vert\Sigma^{\mathrm{orig}}_i\vert\sqrt{T})$, and so we get that for any $p\in (0,1)$, it holds that
    \begin{align}
        \mathbb{P}\left[|\tilde{R}^{\mathsf{INT}}_{T,i}-R^{\mathsf{INT}}_{T,i}| \geq |\Sigma^{\mathrm{orig}}_i|\sqrt{T 
    \log\left(\frac{1}{p}\right)}\right].
    \end{align}
\end{proof}

\subsection{Proof of Proposition~\ref{prop:NEconvergeEFG}}
\begin{proof}
	Since \(\bar{\omega}_1, \bar{\omega}_2\) are the empirical marginal distributions of players' sequence-form strategies that achieve sublinear SI-regret, we have
	\begin{align*}
		\max_{\omega'_1} U_1(\omega'_1, \bar{\omega}_2) - U_1(\bar{\omega}_1, \bar{\omega}_2) \leq \frac{1}{T}R^{\mathsf{INT}}_{T,1}\end{align*}
        and
        \begin{align*}
		\max_{\omega'_2} U_2(\bar{\omega}_1, \omega'_2) - U_2(\bar{\omega}_1, \bar{\omega}_2) \leq \frac{1}{T}R^{\mathsf{INT}}_{T,2}
        \end{align*}
        Moreover, letting $U \coloneqq U_1 = -U_2$ and summing the above, we have
        \begin{equation}
        \max_{\omega'_1} U(\omega'_1, \bar{\omega}_2) - \min_{\omega'_2}U(\bar{\omega}_1, \omega'_2) \leq \frac{1}{T}R^{\mathsf{INT}}_{T,1} + \frac{1}{T}R^{\mathsf{INT}}_{T,2}
\end{equation}
The maxmin strategy can be bounded as
        \begin{align}
        \max_{\omega'_1} \min_{\omega'_2} U(\omega'_1, \omega'_2) &\geq \min_{\omega'_2} U(\bar{\omega}_1, \omega'_2)\\
		&\geq \max_{\omega'_1} U(\omega'_1, \bar{\omega}_2) - \frac{1}{T}\left(R^{\mathsf{INT}}_{T,1} + R^{\mathsf{INT}}_{T,2}\right)\\
		&\geq \min_{\omega'_2} \max_{\omega'_1} U(\omega'_1, \omega'_2) - \frac{1}{T}\left(R^{\mathsf{INT}}_{T,1} + R^{\mathsf{INT}}_{T,2}\right)
	\end{align}
    Then, by the minimax theorem for implementable sequence-form strategies in EFGSAS (\cref{prop:minimaxEFGSAS}), it follows that \((\bar{\omega}_1, \bar{\omega}_2)\) is an \(\frac{R^{\mathsf{INT}}_1 + R^{\mathsf{INT}}_2}{T}\)-approximate NE of \(\mathcal{G}\). In particular, since any (sequence-form) strategy \((\chi_1, \chi_2)\) that implements \((\bar{\omega}_1, \bar{\omega}_2)\) has \(U(\omega_1,\omega_2) = U(\bar{\omega}_1, \bar{\omega}_2)\), such a \((\chi_1, \chi_2)\) is also a \(\frac{R^{\mathsf{INT}}_1 + R^{\mathsf{INT}}_2}{T}\)-approximate NE of \(\mathcal{G}\).
\end{proof}

\subsection{Proof of Proposition~\ref{prop:stochasticapproxasymptotic}}


\begin{proof}
 First, as a consequence of~\cref{thm:efgsascompact}, the existence of a set of compact vectors $w_{I,i}$ encoding per-infoset implementable behavioral strategies also implies the existence of a `global' $W_i \in \mathbb{R}^{|\Sigma^{\mathrm{orig}}_i|}_{\ge0}$, defined on the set of sequences in the base game. This can be constructed easily using $w_{I,i}$, and we note that for representational purposes it is often more useful to work in the space of sequence-form strategies. 
    
    Let us rewrite the update of $\theta_i^t$ as a stochastic approximation (SA) procedure in the sense of~\cite{robbins1951stochastic}. It is well known that the asymptotic behavior of the SA iterates can be characterized by the stability of a limiting ODE~\cite{kushner2012stochastic,borkar2008stochastic}.

    Suppose we have a sequence of EFGSAS sequence-form strategies $\{\chi^t_i\}_{t=1}^T$ such that $\frac{1}{T} \sum_{t=1}^T\chi^t_i(S^t_i) \rightarrow \omega^*$ where $\omega^*$ is the marginal sequence-form strategy induced by some Nash equilibrium $\chi^*$. 
    Unlike in normal-form games, the vector of sequence-form strategies for a player does \emph{not} lie in a simplex. As such, we require an additional \emph{renormalization procedure} that recovers a valid sequence-form strategy given an original sequence-form strategy and the action availability set $S_i$. Intuitively, action availabilities need to be consistent within sequences: if an action is unavailable at a decision point, probability mass should be diverted away from sequences that contain that action over all infosets, not just at that decision point. We formalize this procedure, parametrized by a `primal' sequence-form strategy $\chi\in\mathbb{R}^{|\Sigma^{\mathrm{orig}}_i|}_{\ge 0}$ and action availability sets $S$, in~\cref{alg:renorm}. Note that the renormalization process requires time linear in $\vert\Sigma^{\mathrm{orig}}_i\vert$.

    \begin{algorithm}[H]
\caption{EFGSAS Sequence-Form Renormalization}
\label{alg:renorm}
\begin{algorithmic}[1]
\renewcommand{\algorithmicrequire}{\textbf{Input:}}
\renewcommand{\algorithmicensure}{\textbf{Output:}}
\Require $\chi$, $S$ 
\State $\chi'(\varnothing) \gets 1$
\For{$I \in \mathcal{I}$ in topological order}
    \State Let $\xi$ be parent sequence to infoset $I$
    \If{$\chi'[\xi] > 0$}
        \State $W_I \gets \sum_{a' \in A_I} \chi[\xi a'] \cdot \mathbb{I}\{a' \in S_I\}$
        \For{each action $a \in A_I$}
            \State $\chi'[\xi a] \gets \chi'[\xi] \cdot \frac{\chi[\xi a] \cdot \mathbbm{1}\{a\in S_I\}}{W_I}$
        \EndFor
    \Else
        \State $\chi'[\xi a] \gets 0$ for all $a \in A_I$
    \EndIf
\EndFor
\State \Return $\chi'$
\end{algorithmic}
\end{algorithm}

    
Using the above procedure, we can efficiently compute valid EFGSAS sequence-form strategies using an arbitrary sequence-form vector $\theta\in \mathbb{R}^{\vert\Sigma^{\mathrm{orig}}_i\vert}_{\ge0}$.
In particular, let $\hat \chi_i(\xi_i|S_i, \theta_i)$ be the sequence-form strategy for sequence $\xi_i$ given availability set $S_i$, obtained via a vector $\theta_i$ in place of $\chi$ in~\cref{alg:renorm}. Moreover, $\hat\omega_i(\xi_i|\theta_i) = \mathbb{E}_{S_i\sim \rho_{I,i}}[\hat \chi_i(\xi_i|S_i, \theta_i)]$ is the corresponding marginal distribution for $\xi_i\in\Xi_i$.


Under~\cref{alg:compute_w_efg}, we can rewrite the update step of $\theta^t_i$ as
\[\theta_i^{t+1} = \theta_i^t + \eta_t(g(\theta_i^t) + M^{t+1})\]
where $g(\theta_i^t)$ is the mean-field given by 
\begin{equation}
    g(\theta_i^t) = \omega_i^* - \mathbb{E}_{S_i\sim\rho_{I,i}}[\hat \chi_i(S_i, \theta_i^t)]
    \label{def:gmeanfielddef}
\end{equation}
and $M^{t+1}$ is the martingale difference given by
\[M^{t+1} = \left(
\hat{\omega}^t_i
- \omega_i^*\right) + \left(\mathbb{E}_{S_i\sim\rho_{I,i}}[\hat \chi_i(S_i, \theta_i^t)] - \hat \chi_i(S_i^t, \theta_i^t)\right).\]
Thus Algorithm \ref{alg:compute_w_efg} is a stochastic approximation seeking a root of $g(\theta_i) = 0$. 

By construction, and under the condition that $\sum_{t=1}^{\infty}\eta_t = \infty$ and $\sum_{t=1}^{\infty}\eta^2_t < \infty$, we have from the analysis of~\cite{robbins1951stochastic} that $M^{t+1}$ is bounded and $\mathbb{E}[M^{t+1}|\mathcal{F}_t] = 0$ where $\mathcal{F}_t = \sigma(\theta_i^{\tau}, S^{\tau}_i, \tau\leq t)$ is the filtration. Moreover, it is easy to check that $g(\theta_i)$ is a Lipschitz function. Therefore, the iterates $\theta_i^t$ will follow the limiting
ODE \[\dot \theta_i(t) = g(\theta_i(t)), t\geq 0.\]

Since $\sum_{\xi_i\in \Xi_i}G_i^t(\xi_i) = 0$ for every $t$, it follows that $\theta_i^t$ lies on a hyperplane $\sum_{\xi_i\in \Xi_i}\theta_i^t(\xi_i) = c$ for every $t$ and for some $c$.
By~\cref{thm:efgsascompact}, we know that there exists a $\theta_i^*$ on this hyperplane such that $g(\theta_i^*) = \omega_i^*$. Hence, it suffices to show that $\theta^*_i$ is a globally asymptotically stable fixed point of the limiting ODE. To this end, we seek to construct a strict Lyapunov function of $\theta_i$.
Our Lyapunov function relies on a notion of \emph{dilated} divergence between the optimal EFGSAS sequence-form strategy $\chi^*$ induced by $\theta_i^*$ and the strategy $\chi(\theta_i)$ induced by $\theta_i$.
In particular, for two sequence-form strategies $\chi_1$ and $\chi_2$, let $b_1$ and $b_2$ denote the EFGSAS behavioral strategies of the players at infoset $I$ having played according to $\chi_1$ and $\chi_2$ respectively. Let $p(I\vert \chi_1)$ be the probability of reaching infoset $I$ following $\chi_1$. Then, the dilated KL-divergence between $\chi_1$ and $\chi_2$ is given by
    \begin{equation}
        D_{dil}(\chi_1\|\chi_2)=\sum_{I\in\calI_i} p(I\vert \chi_1)\cdot D_{KL}(b_1 \| b_2)
    \end{equation}
    where $D_{KL}(p\| q)$ is the standard KL divergence on the simplex (i.e., $D_{KL}(p\| q)\coloneqq \sum_{z\in Z} p(z) \log\frac{p(z)}{q(z)}$). The above definition follows from the notion of \emph{dilated entropy} studied by~\cite{kroer2015faster}. 
    We claim that $V(\theta^*) = D_{dil}(\chi^*\| \chi(\theta_i))$ is a strict Lyapunov function for the limiting ODE.
    Crucially, note that $V(\theta_i) \geq 0$ for all $\theta_i \in \mathbb{R}^{|\Sigma^{\mathrm{orig}}_i|}_{\ge 0}$ and $V(\theta_i) = 0$ if and only if $\theta_i=\theta^*$. 
    Moreover, $V(\theta_i)$ is continuously differentiable in $\theta_i$ and we have
\[\dot V(\theta_i(t)) = \langle\nabla_{\theta_i}V(\theta_i), \dot\theta_i(t)\rangle = \langle-(\omega_i^* - \omega_i(\theta)), \omega_i^* - \omega_i(\theta)\rangle = -||\omega_i^* - \omega_i(\theta_i)||^2.\]
    Thus, $\dot{V}(\theta_i) = 0$ if and only if $\theta_i = \theta^*$.  Hence, $V(\theta_i)$ is a strict Lyapunov function and the limiting ODE of the SA procedure is globally asymptotically stable and almost surely $\theta_i$ converges to $\theta^*$. This implies convergence of $W^T_i$, as desired.

    \end{proof}

\subsection{Proof of Theorem~\ref{thm:finitetimeRSAEFG}}  \label{appsec:RSAdetails}
As discussed in the main text, while almost sure convergence is established via the limiting ODE method~\cite{borkar2008stochastic}, it is also important to obtain an explicit finite convergence rate~\cite{moulines2011non}. As applied to our setting,~\cref{alg:compute_w_efg} is an instantiation of the well-known Robbins-Monro algorithm~\cite{robbins1951stochastic}. While asymptotic convergence to the optimal value $W^*$ is established in~\cref{prop:stochasticapproxasymptotic}, the objective is convex but not strongly convex everywhere in the domain. Hence, the finite convergence rate is sensitive to the stepsize schedule (see e.g., Section 2.1 of~\cite{nemirovski2009robust}). This motivates the adaptation of an approach introduced by~\cite{nemirovski1978cezari,nemirovski2009robust} which we call the \emph{robust stochastic approximation} (RSA) procedure. The RSA procedure modifies~\cref{alg:compute_w_efg} in the following ways:
\begin{enumerate}
    \item A diminishing stepsize schedule is used: $\eta_t = O(1/\sqrt{t})$.
    \item For any timesteps $1\leq i\leq j$, let $\nu^t = \frac{\eta_t}{\sum_{t=i}^j \eta_t}$. Then, output the Cesàro mean of the $\theta$ iterates from $i$ to $j$, i.e. $\tilde{\theta}^j_i = \sum_{t=i}^j\nu^t\theta^t$.
\end{enumerate}
We will show that under these conditions, one can obtain finite-time convergence in the duality gap of the compact strategy computed by the RSA procedure. 
Formally, the duality gap for a pair of sequence-form strategies in an EFG of perfect recall is:
\begin{definition}[Duality gap]\label{def:dualitygap}
    Given a pair $(x,y) \in\calX\times\calY$ of implementable sequence-form strategies for Player 1 and 2 respectively, the  duality gap is
    \[
    \gamma(x,y) \coloneqq \max_{y'\in\calY} x^\top A y'  - \min_{x'\in\calX} x'^\top A y
    \]
    where $A$ is the sequence-form payoff matrix belonging to Player 2.
\end{definition}
$\gamma(x,y)$ measures the distance of $(x,y)$ from being a Nash equilibrium, and the pair is a Nash equilibrium if and only if $\gamma(x,y)=0$.


\begin{proof}[Proof of~\cref{thm:finitetimeRSAEFG}]
The analysis of~\cite{nemirovski1978cezari,nemirovski2009robust} establishes that under RSA, one can bound the expected difference between $g(\tilde{\theta}^T_1)$ (obtained by setting $i=1$ and $j=T$ in the Cesàro mean) and $g(\theta^*)$ as follows:
\begin{equation}
    \mathbb{E}[\| g(\tilde{\theta}^T_1) - g(\theta^*) \|_2^2] \leq \frac{DM}{\sqrt{T}},
\end{equation}
where $D \coloneqq \max_\theta \|\theta-\theta^1\|_2$ and $M$ is a positive constant such that $\mathbb{E}[\|g(\theta^t)\|^2_2] \leq M^2$. In our setting the $\theta$'s are \emph{sequence-form} strategies, which allows us to derive bounds on the values of $D$ and $M$. First, $D$ is the maximal one-step different in $\ell_2$-norm of $\theta$ from the initial condition $\theta^1$. By definition of the RSA procedure, this is upper bounded by the maximal $\ell_2$-norm of $G_i^1$. This is the max $\ell_2$-norm of the difference between two sequence-form strategies, which we denote $
\delta_{\max}$ for clarity. However, $g(\theta^t)$ is also given by a difference between two sequence-form strategies, and thus the maximal value of $M$ is also $
\delta_{\max}$. It holds that $\delta_{\max}\leq \sqrt{2\vert\calI_i\vert}$, and so
we can conclude that
\begin{equation}
    \mathbb{E}[\| g(\tilde{\theta}^T_1) - g(\theta^*) \|_2^2] \leq O\left(\frac{\vert\calI_i\vert}{\sqrt{T}}\right).\label{eqn:finitetimeRSAEFG_cesarobound}
\end{equation}

Let \(\omega_1^*, \omega_2^*\) be the Nash Equilibrium strategies which \(\frac{1}{T} \sum_{t=1}^T \chi_i^t(S_i^t)\) converges to, and let \(\omega_i\) be the implementable sequence-form strategy for player \(i\) compactly represented by \(\tilde{W_i}\).
Consider the duality gap (c.f.~\cref{def:dualitygap}) of \((\omega_1, \omega_2)\):
\begin{align}
	\gamma (\omega_1,\omega_2) &= \max_{\omega_2' \in \Omega_2} \omega_1^\intercal A \omega_2' - \min_{\omega_1' \in \Omega_1} \omega_1'^\intercal A\omega_2.
\end{align}
Let \(\omega_2' = \argmax_{y \in \calY} \omega_1^\intercal A y\) and \(\omega_1' = \argmin_{x \in \calX} x^\intercal A \omega_2 \). Then, we have
\begin{align}
	\gamma (\omega_1,\omega_2) &= \omega_1^\intercal A \omega_2' - \omega_1^{*^\intercal}A \omega_2^* + \omega_1^{*^\intercal}A \omega_2^* -  \omega_1'^\intercal A\omega_2
\end{align}
Since $(\omega^*_1,\omega_2^*)$
is a NE, it holds that \(\omega_1^{*^\intercal}A \omega_2^* \geq \omega_1^{*^\intercal} A \omega'_2\) and \(\omega_1^{*^\intercal}A \omega_2^* \leq {\omega'_1}^\intercal A \omega_2^*\). Thus,
\begin{align}
	\gamma (\omega_1,\omega_2) &\leq \omega_1^\intercal A \omega_2' - {\omega^*_1}^\intercal A \omega'_2 + \omega'^{\intercal}_1 A \omega_2^* -  \omega_1'^\intercal A\omega_2\\
	&\leq (\omega_1^\intercal  - \omega_1^{*^\intercal})A \omega_2'  +   \omega_1'^\intercal A(\omega_2^* - \omega_2)
\end{align}

Recall that by definition of an implementable sequence-form strategy \(\omega_i = \mathbb{E}_{S_i \sim \rho_{I,i}} [\hat{\chi}_i (S_i, \tilde{\theta}_1^T)]\) and \(g(\tilde{\theta}_1^T) = \omega_i^* - \mathbb{E}_{S_i \sim \rho_{I,i}} [\hat{\chi}_i (S_i, \tilde{\theta}_1^T)]\) (c.f.~\cref{def:gmeanfielddef}).
In the remainder of the proof, we drop the sub- and superscripts $1$ and $T$ since we consider only the Cesàro mean of the \(\theta\) iterates from \(1\) to \(T\). Furthermore, we use $\tilde{\theta}_i$ to describe the  Cesàro mean  of the ${\theta}$ iterates belonging to player $i$.

Letting \(D_1, D_2\) be the maximal entry of the sequence-form payoff matrix of Player 1 and 2 respectively and letting \(D = \max\{D_1, D_2\}\), we get
\begin{align}
	\gamma(\omega_1, \omega_2) &\leq D(\|\omega_1 - \omega^*_1 \|_1 + \|\omega_2 - \omega^*_2 \|_1)\\
    &= D( \|\omega_1^* -\mathbb{E}_{S_1 \sim \rho_{I,1}}[\hat{\chi}_1(S_1, \tilde{\theta}_1)]\|_1 + \|\omega_2^* -\mathbb{E}_{S_2 \sim \rho_{I,2}}[\hat{\chi}_2(S_2, \tilde{\theta}_2)]\|_1)\\
    &= D( \|g(\tilde{\theta}_1)\|_1 + \|g(\tilde{\theta}_2)\|_1)
\end{align}
Then, since \(g(\tilde{\theta}_i)\) lives in a compact set, we can apply Cauchy-Schwarz to obtain the inequality $\|g(\tilde{\theta}_i)\|_1\leq \sqrt{|\Sigma^{\mathrm{orig}}_i|} \|g(\tilde{\theta}_i)\|_2$. We therefore have
\begin{align}
    \gamma(\omega_1, \omega_2) &\leq D\left(\sqrt{\lvert \Sigma_1^{\mathrm{orig}}\rvert} \|g(\tilde{\theta}_1)\|_2 + \sqrt{\lvert \Sigma_2^{\mathrm{orig}}\rvert}\|g(\tilde{\theta}_2)\|_2\right)\\
    \gamma^2(\omega_1, \omega_2) &\leq D^2 \left( \lvert \Sigma_1^{\mathrm{orig}}\rvert\|g(\tilde{\theta}_1)\|_2^2 + \lvert \Sigma_2^{\mathrm{orig}}\rvert\|g(\tilde{\theta}_2)\|_2^2 + 2 \sqrt{\lvert \Sigma_1^{\mathrm{orig}}\rvert \lvert \Sigma_2^{\mathrm{orig}}\rvert} \|g(\tilde{\theta}_1)\|_2 \|g(\tilde{\theta}_2)\|_2 \right)\\
    &\leq 2 D^2 \lvert \Sigma \rvert \left( \|g(\tilde{\theta}_1)\|_2^2 + \|g(\tilde{\theta}_2)\|_2^2 \right),
\end{align}
where \(\lvert \Sigma \rvert\coloneqq\max\{\lvert \Sigma^{\mathrm{orig}}_1 \rvert, \lvert \Sigma^{\mathrm{orig}}_2 \rvert\}\).

Abusing notation, let \(\gamma(\tilde{W})=\gamma(\tilde{\theta}_1, \tilde{\theta}_2)=\gamma(\omega_1, \omega_2)\)
, noting that \(\tilde{W}\) is the normalized \((\tilde{\theta}_1, \tilde{\theta}_2)\) and so also implements \((\omega_1, \omega_2)\) with equivalent duality gap.

Taking the expectation over the RSA procedure's observed \(S \sim \rho\), and by linearity of expectation, we have
\begin{align}
    \mathbb{E}_{S \sim \rho}[\gamma^2(\tilde{W})] &\leq 2 D^2 \lvert \Sigma \rvert \left( \mathbb{E}_{S \sim \rho}[\|g(\tilde{\theta}_1)\|_2^2] + \mathbb{E}_{S \sim \rho}[\|g(\tilde{\theta}_2)\|_2^2] \right)
\end{align}

Recalling the bound from~\cref{eqn:finitetimeRSAEFG_cesarobound} and observing that \(g(\theta^*) = 0\), by~\cref{def:gmeanfielddef} we have

\begin{align}
    \mathbb{E}_{S \sim \rho}[\gamma^2(\tilde{W})] &\leq O\left( \frac{D^2 \lvert \Sigma \rvert \lvert \calI \rvert}{\sqrt{T}} \right),\label{eqn:duality_gap_expectation}
\end{align}
where \(\vert \calI \vert\coloneqq \max\{ |\calI_1|,|\calI_2|\}\).

We have established that the \emph{expected} duality gap of the iterates of the RSA procedure are bounded above by $O\left(\frac{1}{\sqrt{T}}\right)$. 
To get a high-probability statement, we apply Markov's inequality to obtain that for all $p\in(0,1)$, we have with probability at least \(1-p\),
\begin{equation}
    \gamma^2(\tilde{W}) \leq O \left( \frac{D^2 \lvert \Sigma \rvert \lvert \calI \rvert}{p \sqrt{T}}\right).
\end{equation}

\end{proof}

\section{Additional Empirical Example}\label{appsecs:add_example}
We consider a version of Kuhn poker~\cite{kuhn1950simplified} with stochastic action sets and, as in~\cref{ex:updown_corroborate}, run SI-CFR and RSA in order to corroborate our proposed computational procedure.


\begin{example}[Kuhn Poker with Stochastic Action Sets]
Consider an EFGSAS where the base game \(\calG^\mathrm{orig}\) is Kuhn poker. 
At player 1's second infoset after receiving a Queen, at which they have played \(Check\) and observed player 2 has played \(Bet\), they receive the action availability \(\{Fold\}\) with probability \(\frac{2}{3}\) and otherwise receive \(\{Fold, Call\}\).
At player 2's infoset for receiving a Jack and observing player 1 to have played \(Check\), they receive the action availability \(\{Bet\}\) with probability \(\frac{1}{3}\) and otherwise receive \(\{Check, Bet\}\). We refer to this version of the game as SAS-Kuhn poker. 
\end{example}

In~\cref{fig:kuhn_regret}, we show the total SI-regret incurred by SI-CFR on SAS-Kuhn poker. In~\cref{fig:kuhn_marginal_spr,fig:kuhn_compact_spr}, we show the duality gap \(\gamma\) (c.f.~\cref{def:dualitygap}) for the time-averaged sequence form strategies played by SI-CFR, and the compact strategies \(W_i^t\) computed by the RSA procedure respectively. We utilize a burn-in period when running RSA. In particular, the first 50 iterates from SI-CFR are skipped to empirically improve convergence. 

We repeat the experiment for 100 runs, sampling the stochastic action availabilities with different random seeds in each run. In each plot, we also show the mean and central 95\% interval across all runs.

\begin{figure}
  \centering
  \begin{minipage}{0.49\textwidth}
    \centering
    \includegraphics[width=\textwidth]{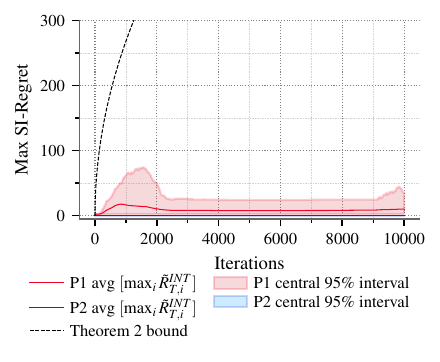}
    \caption{The total SI-regret \(R^{\mathsf{INT}}_{T,i}\) accrued by SI-CFR for both players in SAS-Kuhn poker. The theoretical bound from~\cref{thm:siregretboundprob} is also shown.}
    \label{fig:kuhn_regret}
  \end{minipage}
  
\vspace{1em}
    
  \begin{minipage}[t]{0.49\textwidth}
    \centering
    \includegraphics[width=\textwidth]{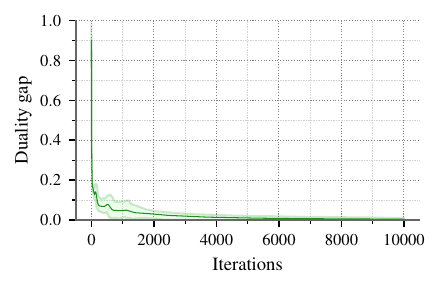}
    \caption{Duality gap \(\gamma\) between time-average sequence form strategies played by SI-CFR in SAS-Kuhn poker.}
    \label{fig:kuhn_marginal_spr}
  \end{minipage}
  \hfill
  \begin{minipage}[t]{0.49\textwidth}
    \centering
    \includegraphics[width=\textwidth]{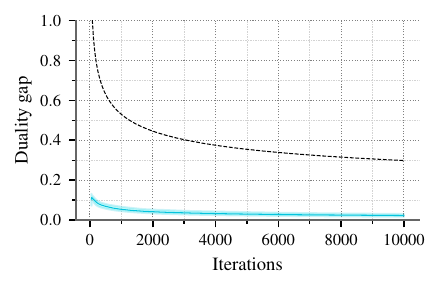}
    \caption{Duality gap \(\gamma\) of compact strategies \(W_1^t\) and \(W_2^t\) obtained via the RSA procedure for SAS-Kuhn poker. The theoretical bound from~\cref{thm:finitetimeRSAEFG} is also shown.}
    \label{fig:kuhn_compact_spr}
  \end{minipage}
\end{figure}

\newpage
\begin{table}[t]
\caption{Notation table for symbols used throughout the paper}
    \label{tab:symbols}
	\begin{tabular}{
			p{\dimexpr 0.15\linewidth - 2\tabcolsep\relax}
			p{\dimexpr 0.1\linewidth - 2\tabcolsep\relax}
			p{\dimexpr 0.7\linewidth - 2\tabcolsep\relax}
		}
		\toprule
		Section & Symbol    & Meaning\\
		\midrule
		EFG & \(\calH\) & Player decision points\\
		& \(\calZ\)& Terminal nodes\\
		& \(A_h\)& Possible actions at state h\\
		& \(\calN\) & Set of players\\
		& \(u_i\) & Utility function for player \(i\)\\
		& \(\calI_i\) & Information sets for player \(i\)\\
		& \(I\) & An information set \(I \in \calI_i\)\\
		& \(A_I\) & Action set for infoset \(I\)\\
		& \(\beta_i\) & Behavioral strategy for player \(i\)\\
		& \(\Sigma_i\) & Sequences of player \(i\)\\
		& \(\sigma\) & A sequence for player \(i\) (\(\sigma \in \Sigma_i\))\\
		& \(\sigma(I)\) & Parent sequence of \(I\)\\
		& \(x\) & An EFG sequence-form strategy\\
		\midrule
		EFGSAS & \(\calG\) & An EFGSAS\\
		& \(\calG^{\mathrm{orig}}\) & Original EFG for an EFGSAS\\
		& \(\calG^\dagger\) & Expanded EFG for an EFGSAS\\
        & \(\calI^\dagger\) & Infosets in expanded EFG for an EFGSAS\\
		& \(\calS_{I,i}\) & Set of all action availabilities for infoset \(I\) belonging to player \(i\)\\
		& \(S_{I,i}\) & An action availability set from \(\calS_{I,i}\) for infoset \(I\)\\
		& \(\rho_{I,i}\) & Probability distribution of observing \(S_{I,i} \in \calS_{I,i}\) at \(I\)\\
        & \(\mathrm{obs}(I^\dagger)\) & For an infoset \(I^\dagger \in \calI^\dagger\), the action availability set observed in the EFGSAS that \(I^\dagger\) corresponds to\\
		\midrule
		EFGSAS & \(\Sigma^{\mathrm{orig}}\) & Set of sequences in \(\calG^{\mathrm{orig}}\)\\
		strategies & \(\Sigma^\dagger\) & Set of sequences in expanded EFG\\
		& \(\Xi\) & Set of DAG-plex sequences (DAG-plex equivalent of \(\Sigma\))\\
		& \(\xi\) & A DAG-plex sequence (\(\xi \in \Xi\), DAG-plex equivalent of \(\sigma\))\\
		& \(\xi(S_{I,i})\) & Parent sequence of \(S_{I,i}\) (Recall \(S_{I,i}\) corresponds to a \(I^\dagger \in \calI^\dagger\))\\
		& \(\chi_i\) & A DAG-plex sequence-form strategy for player \(i\)\\
		& \(b_i\) & A behavioral strategy for player \(i\), playing the same distribution for infosets with same last observed action availability set\\
		& \(\mu_{I,i}(a)\) & A marginal probability distribution for playing action \(a\) at infoset \(I \in \calI\)\\
		& \(\mu_i\) & An ensemble of marginal probability distributions for all infosets \(I \in \calI\) for player \(i\)\\
		& \(\omega_i\) & An implementable sequence-form strategy for player \(i\)\\
		& \(\Omega_i\) & Set of all implementable sequence-form strategies for player \(i\)\\
		\bottomrule
	\end{tabular}
\end{table}
\end{document}